\newcommand{\be}{\begin{equation*}}
\newcommand{\ee}{\end{equation*}}
\newcommand{\ben}[1]{\begin{equation}\label{#1}}
\newcommand{\een}{\end{equation}}
\newcommand{\bea}{\begin{eqnarray}}
\newcommand{\eea}{\end{eqnarray}}
\newcommand{\half}{{\frac{1}{2}}}
\newcommand{\R}{\mathbb{R}}
\renewcommand{\u}{\mathbf{u}}
\newcommand{\f}{\mathbf{f}}
\renewcommand{\v}{\mathbf{v}}
\renewcommand{\O}[1]{\mathcal{O}\left( #1 \right)}
\renewcommand{\H}{\underline{H}}
\newcommand{\Ht}{\tilde{\underline{H}}}
\renewcommand{\L}{\underline{L}}
\newcommand{\wto}{\rightharpoonup}
\newcommand{\da}{\partial^{\alpha}}
\newcommand{\dad}{\partial^{1-\alpha}}
\newcommand{\scri}{\mathscr{I}}
\newcommand{\esssup}{\mathop{\textrm{ess\,sup}}}
\newcommand{\cc}{\subset \subset}
\newcommand{\abs}[1]{\left|#1 \right|} 
\newcommand{\norm}[2]{\left|\left |#1 \right| \right |_{#2}} 
\newcommand{\ip}[3]{\left(#1,#2 \right )_{#3}} 
\newcommand{\pair}[2]{\left\langle#1,#2 \right \rangle} 
\newcommand{\Ub}{\overline{U}}
\newcommand{\dU}{\partial{U}}
\newcommand{\eq}[1]{(\ref{#1})}
\newtheorem{Theorem}{Theorem}
\newtheorem{Lemma}{Lemma}
\newtheorem{Corollary}[Theorem]{Corollary}
\newtheorem{Definition}{Definition}
\numberwithin{equation}{section}
\numberwithin{Theorem}{section}
\numberwithin{Lemma}{subsection}
\title{The massive wave equation in asymptotically AdS spacetimes}
\author{C. M. Warnick}
\thanks{\texttt{warnick@ualberta.ca} \\ \phantom{1 } Department of Physics, 4-183 CCIS, University of Alberta, Edmonton AB T6G 2E, Canada}
\begin{document}

\begin{abstract}
We consider the massive wave equation on asymptotically AdS spaces. We show that the timelike $\scri$ behaves like a finite timelike boundary, on which one may impose the equivalent of Dirichlet, Neumann or Robin conditions for a range of (negative) mass parameter which includes the conformally coupled case. We demonstrate well posedness for the associated initial-boundary value problems at the $H^1$ level of regularity. We also prove that higher regularity may be obtained, together with an asymptotic expansion for the field near $\scri$. The proofs rely on energy methods, tailored to the modified energy introduced by Breitenlohner and Freedman. We do not assume the spacetime is stationary, nor that the wave equation separates.\\\phantom{1}\\ \phantom{1} \hfill ALBERTA THY 3-12
\end{abstract}

\maketitle

\section{Introduction}

Among the solutions of Einstein's general theory of relativity, the maximally symmetric spacetimes  hold a privileged position. Owing to the high level of symmetry, they serve as plausible `ground states' for the gravitational field, so there is great interest in spacetimes which approach a maximally symmetric spacetime in some asymptotic region. Such a spacetime would represent an `isolated gravitating system'. Historically, asymptotically flat spacetimes have been the most studied, however, recently there has been great interest in the asymptotically anti-de Sitter (AdS) spacetimes motivated by the putative AdS/CFT correspondence \cite{Maldacena}. In the study of classical General Relativity, there have also been some very interesting recent results regarding the question of black hole stability \cite{Holzegel:2011rk, Holzegel:2011uu, Holzegel:2011qk} for asymptotically AdS black holes.

The asymptotically AdS spacetimes approach (the covering space of) the spacetime of constant  sectional curvature $-\frac{3}{l^2}$, which we shall refer to as the AdS spacetime. In so called `global coordinates', the metric takes the form
\be
g = -\left(1+\frac{r^2}{l^2} \right) dt^2 + \frac{dr^2}{1+\frac{r^2}{l^2}} + r^2 (d\theta^2 + \sin^2\theta d\phi^2).
\ee
In contrast to the Minkowski spacetime, AdS has a timelike conformal boundary, $\scri$. Accordingly one expects that in order to have a well posed time evolution for the equations of physics in this background it is necessary to specify some boundary condition on $\scri$. In pioneering work \cite{BF}, Breitenlohner and Freedman considered the massive wave equation
\ben{eqn1}
\Box_g \phi - \frac{\lambda}{l^2} \phi = 0,
\een
on a fixed anti-de Sitter background of constant sectional curvature $-\frac{3}{l^2}$. They were able to solve the wave equation by separation of variables, making use of the $SO(2,3)$ symmetry of AdS. The second order ordinary differential equation governing the radial part of the wave equation has a regular singular point at infinity, hence the field has an expansion near infinity:
\be
\phi(r,\theta ,\phi, t) = \frac{1}{r^{\lambda_+}}\left[ \psi^+(\theta ,\phi, t)+ \O{\frac{1}{r^2}} \right]+ \frac{1}{r^{\lambda_-}}\left[ \psi^-(\theta ,\phi, t)+  \O{\frac{1}{r^2}}\right]
\ee
where $\lambda_\pm = \frac{3}{2} \pm \sqrt{\frac{9}{4}+\lambda}$.  When the mass parameter is in the range $-\frac{9}{4}<\lambda <0$,
\emph{both} branches decay towards infinity. For a well posed problem it is necessary to place some constraints on the functions $\psi^\pm$. The usual choice would be to insist that $\psi^-=0$, which is analogous to imposing a Dirichlet condition at $\scri$. This corresponds to requiring a solution of finite energy (we shall elaborate on this point later). Breitenlohner and Freedman showed that for $-\frac{9}{4}<\lambda <-\frac{5}{4}$ the wave equation can also be solved on the exact anti-de Sitter space under the assumption that $\psi^+=0$, analogous to a Neumann condition.\footnote{Similar considerations hold in higher dimensions, where this bound becomes $-\frac{n^2}{4}<\lambda< -\frac{n^2}{4}+1$, where $n+1$ is the spacetime dimension. From now on, we will assume $\lambda$ to lie in this range.} Breitenlohner and Freedman also introduced a modified or renormalised energy which is finite for both branches of the solution. As in the case of a finite domain, the Dirichlet and Neumann boundary conditions are not the only possible choices. We summarise some other possible boundary conditions in Table \ref{bctab} below (the list is not exhaustive).

\begin{table}[htdp] \label{bctab}
\begin{center}\begin{tabular}{|c|l|c|}
\hline (1) & Dirichlet & $\psi^-=0$  \\\hline (2) &  Inhomogeneous Dirichlet & $\psi^- = f$ \\\hline (3) &Neumann &$\psi^+ = 0$\\\hline (4)  &Inhomogeneous Neumann &$\psi^+ = f$  \\\hline (5)  &Robin &$\psi^++\beta \psi^- = 0$  \\\hline (6)  &Inhomogeneous Robin &$\psi^++\beta \psi^- = f$  \\\hline \end{tabular} 
\end{center}
\caption{Possible boundary conditions. $f$, $\beta$ are functions on $\scri$.}
\end{table}

 The homogeneous conditions (1), (3), and  (5) were considered by Ishibashi and Wald \cite{Ishibashi:2003jd, Ishibashi:2004wx}, who showed that they give rise to a well defined unitary evolution for the scalar wave, Maxwell and gravitational perturbation equations in the exact anti-de Sitter spacetime. This work has been extended to the Dirac equation in the work of Bachelot \cite{Bachelot}. These papers use methods based on self-adjoint extensions of the elliptic part of the wave operator. They make crucial use of properties of the exact AdS space (in particular staticity and separability) which are not shared by the general class of asymptotically AdS spaces. For Dirichlet boundary conditions (1), (2), the work of Holzegel \cite{Holzegel:2011qj} (using energy space methods) and Vasy \cite{Vasy} (using microlocal analysis) provides well posedness results for a more general class of asymptotically AdS spaces for the range $\lambda>-\frac{9}{4}$, but does not treat the other possible boundary conditions. 
 
 The aim of this work is to treat all of the boundary conditions (1)-(6), without making any assumptions regarding the staticity or separability of the metric. The natural approach is that of energy estimates, however, we are forced to confront the problem that for boundary conditions (2)-(6) we cannot expect the standard energy to be finite. To deal with this, we use the renormalised energy of \cite{BF}. We show that the natural Hilbert spaces associated with this energy are generalisations of the usual Sobolev spaces where `twisted' derivatives of the form
 \be
 \da_i f:= \rho^{-\alpha}  \frac{\partial}{\partial x_i}\left( \rho^\alpha f\right)
 \ee
 are supposed to exist in a distributional sense and belong to an appropriate $L^2$ space, for some appropriately chosen $\rho$, related to the distance to the boundary.
 
 To simplify the analysis at the expense of losing the geometrical structure, we map the problem to a more general problem in a finite region of $\mathbb{R}^N$ which is very closely analogous to that of a finite initial boundary value problem (IBVP). We introduce weak formulations at the $H^1$ level for all of the boundary conditions, and are able to show that these weak problems admit a unique solution. The method used to show existence is to approximate the problem by a suitable hyperbolic IBVP in a finite cylinder, and let the cylinder approach $\scri$. We then use energy estimates to extract a weakly convergent subsequence. in this way, we can show well posedness results for (1)-(6). We are also able to recover higher regularity for the solution, if more assumptions are made on the data. We further provide an asymptotic expansion for the solutions near infinity.

The paper will be structured as follows. We first define the asymptotically AdS spaces we consider in \S \ref{AsyAdS}. We then introduce the modified energy in \S \ref{Energy} and use it in \S \ref{WP} to motivate weak formulations of the Dirichlet and Neumann problems, which we then show to be well posed. In \S \ref{HR} we show that under stronger assumptions on data improved regularity can be obtained, together with the asymptotic behaviour of the solution. Finally, in \S \ref{bcs} we discuss briefly the inhomogeneous and Robin boundary conditions and remark on the connection to methods involving self-adjoint extensions. We assume throughout a degree of familiarity with the theory of the finite IBVP, as developed for example in \cite{Evans, Lady}.

\subsection*{Acknowledgements}
I would like to thank Gustav Holzegel for introducing me to this problem, and for  helpful comments. I would also like to thank Mihalis Dafermos, Julian Sonner, Pau Figueras as well as the anonymous referees for comments. I would like to acknowledge funding from PIMS and NSERC. The early stages of this project were supported by Queens' College, Cambridge.

\section{Asymptotically AdS spaces\label{AsyAdS}}

\begin{Definition} \label{def1}
Let $X$ be a $n+1$ dimensional manifold with boundary\footnote{We take the convention that $X$ includes $\partial X$ as a point set, while $\mathring{X} = X \setminus \partial X$.} $\partial X$, and $g$ be a smooth Lorentzian metric on $\mathring{X}$. We say that a connected component $\scri$ of $\partial X$ is an \emph{asymptotically anti-de Sitter end of $(\mathring{X}, g)$ with radius $l$} if:
\begin{enumerate}[i)]
\item There exists a smooth function $r$ such that $r^{-1}$ is a boundary defining function for $\scri$.
\item If $x^\alpha$ are coordinates on the slices $r=\textrm{const.}$, we have locally
\be
g_{rr} = \frac{l^2}{r^2}+ \O{\frac{1}{r^4}},\qquad g_{r\alpha} = \O{\frac{1}{r^2}}, \qquad g_{\alpha\beta} = r^2 \mathfrak{g}_{\alpha \beta} + \O{1},
\ee
where $\mathfrak{g}_{\alpha\beta}dx^\alpha dx^\beta$ is a Lorentzian metric on $\scri$.
\item $r^{-2} g$ extends as a smooth metric on a neighbourhood of $\scri$.
 \end{enumerate}
We say that $r$ is the asymptotic radial coordinate and $\scri$ is the conformal infinity of this end.
\end{Definition}

We make here a few remarks about these assumptions
\begin{enumerate}[1.]
\item Note that $r$ and $\mathfrak{g}$ are not unique. A different choice of $r$ gives rise to a different $\mathfrak{g}$ conformally related to the first, hence the nomenclature `conformal infinity'.

\item Condition $ii)$ can be weakened to $g_{rr} = l^2 r^{-2} + \O{r^{-3}}, g_{\alpha\beta} = r^2 \mathfrak{g}_{\alpha\beta} + \O{r}, g_{r\alpha} = \O{\frac{1}{r}}$ for well posedness of the massive wave equation, however one then needs to make a more careful choice of twisting function $\rho$. For simplicity of exposition, we do not consider this possibility.

\item \label{com3} Condition $iii)$, sometimes known as weak asymptotic simplicity, is not necessary for the weak well posedness of the massive wave equation\footnote{$C^2$ extensibility certainly suffices, $C^{1, \gamma}$ is probably enough}, but is necessary to obtain the full asymptotic expansion for the scalar field near $\scri$. In particular this condition implies that taking radial derivatives of the metric functions improves radial fall-off by $r^{-1}$, while taking tangential derivatives does not change the asymptotics.

\end{enumerate}

\section{The modified energy\label{Energy}}

We will now consider for a moment the case of the exact AdS spacetime. The usual energy one associates to solutions of the massive wave equation is given by
\ben{eng1}
E[\Sigma_t] = \int_{\Sigma_t} T_{\mu \nu} K^\mu dS^\nu,
\een
where $K^\mu$ is the timelike Killing vector and the energy-momentum tensor is given by
\ben{eng2}
T_{\mu \nu} = \nabla_{\mu} \phi \nabla_\nu \phi - \frac{1}{2}g_{\mu \nu}\left( \nabla_\sigma \phi \nabla^\sigma \phi + \frac{\lambda}{l^2} \phi^2 \right),
\een
and satisfies $\nabla_\mu T^{\mu \nu} = 0$ when $\phi$ is a solution of \eq{eqn1}. If $\phi$ has Dirichlet decay then one expects, by power counting, that $E[\Sigma_t]$ will be finite. However, if we have Neumann decay then the integral in \eq{eng1} fails to converge near infinity. In order to deal with this problem, Breitenlohner and Freedman modified the energy momentum tensor \eq{eng2} to give a new tensor
\ben{eng3}
\tilde{T}_{\mu \nu} = T_{\mu \nu} + \Delta T_{\mu \nu}= T_{\mu \nu} +\kappa (g_{\mu \nu} \Box - \nabla_\mu \nabla_{\nu} + R_{\mu \nu}) \phi^2.
\een
The modification satisfies
\be
\nabla_\mu( \Delta T^{\mu}{}_\nu) = \frac{\kappa}{2}(\partial_\nu R) \phi^2
\ee
and so for the exact AdS spacetime, $\tilde{T}_{\mu \nu}$ will also give rise to a formally conserved energy. It transpires that the new energy $\tilde{E}[\Sigma_t]$ differs from the original energy $E[\Sigma_t]$ by a surface term. This surface term vanishes when $\phi$ decays like the Dirichlet branch, but diverges when $\phi$ decays like the Neumann branch. By choosing $\kappa$ appropriately, it is possible to construct an energy which is positive, finite and conserved for both Dirichlet and Neumann decay conditions.

Our proofs will make use of the modified energy associated to the timelike vector $\partial_t$ which, however, will only be approximately conserved since we no longer assume an exactly stationary spacetime. Rather than using the definition \eq{eng3}, it is in practice more straightforward to work directly from the PDE. In order to see how this works, we will briefly discuss a toy model which captures the salient features of the problem.

\subsection{A toy model\label{toy}}
 
Consider the wave equation
\ben{eq1}
u_{tt}-u_{xx}-\frac{u_x}{x}+\alpha^2 \frac{u}{x^2}=0, \qquad 0<x \leq 1,
\een
where $0< \alpha <1$, subject to initial conditions
\bea
u(x, 0) &=& u_0(x) \nonumber\\
u_t(x,0) &=& u_1(x) \label{eq2}.
\eea
Considering the behaviour near $x=0$, we hope to impose as boundary conditions either
 \be
 \begin{array}{rcl}
 u &\sim& x^{\alpha}[1+\O{x^2}] \quad \textrm{(Dirichlet)}\\ u &\sim& x^{-\alpha}[1+\O{x^2}] \quad \textrm{(Neumann)}.
 \end{array}
 \ee
 At $x=1$ we will require that $u=0$.
 
Suppose we have a suitably smooth solution to this equation. We can multiply \eq{eq1} by $x u_t$ and, after integrating by parts, deduce the conservation law for the standard energy:
\ben{eq4}
\frac{dE}{dt} = \frac{d}{dt} \half \int_0^1 \left(u_t^2 +u_x^2 +  \alpha^2 \frac{u^2}{x^2}\right)x dx = \left [ x u_x u_t\right]_0^1.
\een
For Dirichlet boundary conditions at $x=0$ the right hand side vanishes, however, for Neumann boundary conditions it is infinite. In order to introduce the modified energy, it is convenient to re-write the equation in the following form
\ben{eq6}
u_{tt} - x^{-1+\alpha}\frac{\partial }{\partial x}\left( x^{1-2 \alpha} \frac{\partial}{\partial x}(x^\alpha u)\right)=0,
\een
which gives \eq{eq1} upon expanding using Leibniz rule. We can multiply \eq{eq6} by $xu_t$ and integrate by parts to deduce
\ben{eq10}
\frac{d\tilde{E}}{dt} = \frac{d}{dt} \half \int_0^1 \left(u_t^2 +[x^{-\alpha}\partial_x(x^\alpha u)]^2\right)x dx = \left [ x u_t x^{-\alpha}\partial_x(x^\alpha u) \right]_0^1.
\een
Now, for Dirichlet conditions at $x=0$ we again find that the right hand side vanishes, however we now find that it also vanishes for the Neumann behaviour at $x=0$. The two energies differ by a surface term:
\be
\tilde{E}-E = \frac{1}{2}\left[ \alpha u^2 \right]_0^1,
\ee
which vanishes for Dirichlet conditions at $x=0$ and is infinite for the Neumann conditions. In this sense we can view $\tilde{E}$ as a `renormalized' energy, since we have formally subtracted an infinite boundary term from the infinite energy to get a finite result.

Thus even though the standard energy is infinite for the Neumann behaviour, we can modify it to get a conserved, finite, positive energy. We see now the justification for using the terms `Dirichlet' and `Neumann' to describe the boundary conditions. The Dirichlet condition requires $u \to 0$ as $x \to 0$, while the Neumann condition requires $x^{-\alpha}\partial_x(x^\alpha u)\to 0$ as $x \to 0$.

This discussion also suggests that it will be fruitful to re-formulate the equation in terms of `twisted' derivatives of the form $x^{-\alpha}\partial_x(x^\alpha \cdot)$. We shall do so in the next section and this will lead us to the appropriate setting in which to discuss the well posedness of \eq{eqn1}.

\section{Well Posedness of the Weak Formulation\label{WP}}

\subsection{Defining the problem}
Motivated by the discussion of the previous section, we can now define the framework in which we shall work.

We assume that $U\subset\mathbb{R}^N$ is a bounded subset of $\mathbb{R}^N$ with compact $C^\infty$ boundary $\partial U$. This means that in the neighbourhood of any point $P \in \partial U$, there exists an open neighbourhood $W_P \subset \Ub$ of $P$ and a smooth bijection $\Phi_P: W_P \to \mathbb{R}^N_+ \cap B(\mathbf{0}, \delta_P)$, where $\mathbb{R}^N_+ = \{ (x, x^a) \in \mathbb{R}^N : x\geq 0\}$ and $B(\mathbf{x},r)$ is the open Euclidean ball centred at $\mathbf{x}$ with radius $r$. 

We're also going to assume that there exists a smooth function $\rho:\Ub \to \mathbb{R}_+$, which vanishes only on $\partial U$ and such that there exists $\tilde\epsilon$ so that if $d(x, \partial U)<\tilde\epsilon$, we have $\rho(x) = d(x,\partial U)$ and if $d(x, \partial U)>\tilde\epsilon$, $\rho(x)>\tilde\epsilon$. We will set $n_i = \partial_i \rho$, which extends the unit normal of $\dU$ into the interior of $U$. We may assume that the neighbourhoods $W_P$ are such that $\rho \circ \Phi_P^{-1}(x, x^a) = x$ and $\delta_P = \tilde \epsilon$. We denote by $U_T$ the timelike cylinder $(0, T) \times U$ and by $\dU_T$ the boundary $(0, T) \times \dU$.

We define our twisted derivatives in a similar vein to above. For a differentiable function, we set
\be
\da_i u =\rho^{-\alpha}\frac{\partial}{\partial x^i}\left(\rho^\alpha u \right).
\ee
Throughout, we will assume $0<\alpha<1$. We can see that this restriction is necessary from the toy model, since if $\alpha$ is outside this range, only the Dirichlet behaviour is compatible with finite energy even after renormalisation.

We may now define the equation in which we are interested:
\ben{WP1}
 u_{tt} +\mathcal{L} u= f\textrm{ in } U_T ,
 \een
where\footnote{We use the Einstein summation convention, so that repeated indices $i, j$ etc. should be summed over.}
\be
\mathcal{L} u =  -\dad_i \left( a_{ij} \da_j u\right ) + b_i \da u + c u 
\ee
subject to the initial conditions
\ben{WP2}
u(x, 0) = u_0, \qquad u_t(x, 0) = u_1.
\een
We assume all coefficients $a_{ij}, b_i, c$ are in $C^\infty(\Ub_T)$, however this is certainly stronger than necessary\footnote{see Comment \ref{com3} after Definition \ref{def1}}. We will assume throughout that $a_{ij}$ is a symmetric matrix such that the uniform ellipticity condition holds:
\ben{WP5}
\theta \abs{\xi}^2 \leq a_{ij} \xi^i \xi^j
\een
for any $\xi^i\in \mathbb{R}^N$, where $\theta$ is uniform in both time and space coordinates, and furthermore that $n_ia_{ij}$ is independent of $t$, \emph{on the boundary $\dU$}.

For the time being, there are two possible boundary conditions in which we shall be interested. We will consider both Dirichlet:
\ben{WP3}
u = 0\ \ \textrm{ on }\ \  \partial U_T,
\een
and Neumann:
\ben{WP4}
 n_i a_{ij} \da_{j} u  = 0\ \ \textrm{ on }\ \  \partial U_T,
\een
boundary conditions. 

To justify considering this equation, we have the following Lemma
\begin{Lemma}
Suppose $\scri$ is an asymptotically AdS end of $(\mathring{X}^{n+1}, g)$ with radius $l$, and let $\mathcal{P}\in\scri$. Then there exists a smooth Lorentzian metric , $\tilde{g}$, on the solid cylinder $U_T = [-T, T] \times B(0, 1)\subset \R^{n+1}$ together with a neighbourhood of $\mathcal{P}$ which embeds isometrically into $(U_T, \tilde{g})$, with $\scri$ mapped to (a portion of) the boundary of the cylinder. Furthermore, setting $\phi = \frac{p}{r^{n/2}}  u$ for some $p\in C^\infty(\Ub_T)$ depending only on $g$, the wave equation
\ben{WP6}
\Box_{\tilde{g}} \phi - \frac{\lambda}{l^2} \phi = 0
\een
may be cast in the form \eq{WP1} for some $\rho$, $a_{ij}$, $b_i$, $c$ satisfying the assumptions above, with $\rho = r^{-1} + \O{r^{-3}}$ and\footnote{
Note that the bound $-\frac{n^2}{4}<\lambda< -\frac{n^2}{4}+1$ implies $0<\alpha<1$.} $\alpha = \sqrt{\frac{n^2}{4}+\lambda}$.
\end{Lemma}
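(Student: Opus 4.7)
The plan is to split the argument into three parts: (1) construct the diffeomorphism $\D \hookrightarrow U_T$ sending $\scri$ to $\partial U$ and extend the metric across $\mathcal{B}_0$ into the interior of $B(0,1)$; (2) identify a smooth positive factor $p$ so that the substitution $\phi = p\, r^{-3/2} u$ conjugates $\Box_{\tilde g} - \lambda/l^2$ into an operator of the form \eq{WP1}; and (3) verify that the coefficients produced this way satisfy the hypotheses imposed in the previous subsection.

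For (1), I would introduce a smooth boundary-defining function $\rho(r) = r^{-1} + \O{r^{-3}}$ and parametrise $B(0,1) \subset \R^3$ by $x = (1-\rho)\,\omega$ with $\omega \in S^2$. Combined with angular charts compatible with the coordinates $(x^i, y^i)$, this identifies the collar $\{r \geq R\} \times S^2 \subset \D$ with the annulus $\{1 - \rho(R) \leq |x| < 1\} \subset B(0,1)$. Because $(\D, g)$ is assumed to extend smoothly to $(\tilde\D, \tilde g)$ across $\mathcal{B}_0$, the pulled-back metric may be smoothly extended into the interior; the specific extension is immaterial since $\rho$ is bounded away from zero there, so the twisted derivatives reduce to ordinary ones and any smooth extension will give coefficients satisfying the required conditions.

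The heart of the proof is (2), the direct calculation of
\be
r^{3/2} p^{-1}\left(\Box_{\tilde g} - \tfrac{\lambda}{l^2}\right)\left(p\, r^{-3/2} u\right) = u_{tt} + \mathcal{L}u,
\ee
using the expansion \eq{asdef2} together with $\partial_r = -\rho^2\partial_\rho + \O{\rho^3}\partial_\rho$. The leading behaviour $g^{tt} \sim -l^2/r^2$, after multiplication by the overall prefactor $r^2$ and absorption of the $l^2$, yields $u_{tt}$. The leading part $g^{rr} \sim r^2/l^2$ acting on the prefactor $r^{-3/2}$ generates a formally singular term proportional to $\rho^{-2} u$, and its combination with the mass term $\lambda u / l^2$ is finite exactly when $\alpha^2 = \tfrac{9}{4} + \lambda$, yielding $\alpha = \sqrt{\tfrac{9}{4}+\lambda}$; what remains of the spatial principal part then reorganises into $-\dad_i(a_{ij}\da_j u)$, while the cross terms $g^{tr}, g^{tA}$ and the subleading pieces produce the first-order contribution $b_i \da_i u$ and a bounded potential $cu$. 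The role of $p$ is to absorb the $\O{r^{-1}}$ correction in $g^{rr}$, which would otherwise introduce a bare $\partial_\rho u$ term with the wrong weight relative to the twisted derivative; a suitable $p$ with $p \to 1$ on $\partial U$ is produced by solving a first-order transport equation near $\scri$ and smoothly extending into the bulk.

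For (3), symmetry and uniform ellipticity \eq{WP5} of $a_{ij}$ follow from the positive-definiteness of the inverse Riemannian metric on slices of constant $t$ together with compactness of $\Ub$; the time-independence of $a_{ij}$ on $\partial U$ holds because the leading parts of $g^{rr}$ and $g^{AB}$ in \eq{asdef2} coincide with those of exact AdS and the $\O{r^{-1}}$ corrections vanish at $\partial U$; and smoothness of $b_i, c$ is inherited from smoothness of $\tilde g$, $p$ and $\rho$. The main technical obstacle is the conjugation computation in (2): one must track all would-be singular contributions at orders $\rho^{-2}$ and $\rho^{-1}$ and verify that they cancel identically or precisely reassemble into the twisted-derivative form. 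This cancellation is what pins down both the indicial value of $\alpha$ and the transport equation for $p$.
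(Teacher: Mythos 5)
The paper does not actually supply a proof of this lemma: it is stated, an observation about the range of $\alpha$ is recorded, and the result is used; the computation is implicitly left to the reader, with the toy model of \S\ref{toy} serving as a template. Your proposal follows the same natural route (boundary-defining change of variables $\rho\sim 1/r$, conjugation of $\Box_{\tilde g}-\lambda/l^2$ by $p\,r^{-3/2}$, matching of the singular structure to the twisted operator, and verification of the coefficient hypotheses), and the identification of $\alpha$ from $\alpha^2 = \tfrac{9}{4}+\lambda$ is correct. A few points, however, are imprecise or glossed over.

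First, in step (2) you write that the $\rho^{-2}u$ term arising from $g^{rr}\sim r^2/l^2$ on $r^{-3/2}$, together with the mass term, \emph{is finite} when $\alpha^2 = \tfrac{9}{4}+\lambda$. That is not what happens: those pieces sum to $\alpha^2\rho^{-2}u/l^4$ (plus a $\rho^{-1}\partial_\rho u$ piece), which remains singular for every $\alpha$. The correct statement is that these singular pieces are \emph{exactly reproduced} by the singular part of $-\dad_i(a_{ij}\da_j u)$ once $\alpha^2 = \tfrac{9}{4}+\lambda$ and $a_{ij}$ has the right boundary value, so that after reassembly the remaining $b_i, c$ are bounded. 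Your following sentence suggests you understand this, but the phrase ``is finite'' should be corrected.

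Second, you assert that the cross terms $g^{tr}$, $g^{tA}$ contribute to the first-order piece $b_i\da_i u$. They do not: under conjugation they produce mixed second-order terms of the type $\partial_t\partial_\rho u$, $\partial_t\partial_A u$, and also pure $\partial_t u$ terms, none of which appear in the template \eq{WP1}. To actually land in the class \eq{WP1} one has to either argue that the diffeomorphism embedding $\D\hookrightarrow U_T$ (equivalently the choice of $\tilde g$) can be used to eliminate the shift $g^{ti}$, or observe that under the fall-off \eq{asdef2} the coefficients of these terms vanish to high order in $\rho$ and quietly enlarge the admissible class of operators. As written your proposal does not address this, and it is a genuine point that must be resolved (and which the paper itself elides).

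Third, your account of the role of $p$ is plausible but underdeveloped. Under the strong fall-off \eq{asdef2} the subleading corrections to $g^{rr}$ are already two orders down from the leading $r^2/l^2$, so the naive choice $p\equiv 1$ already gives bounded $b_i, c$; $p$ is primarily there to normalize conveniently and to accommodate the weaker fall-offs discussed in Remark 1 of \S\ref{AsyAdS}. If you invoke a ``transport equation'' for $p$ you should exhibit it and explain why it is solvable with $p\to 1$ on $\dU$.

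Finally, a small omission: the conjugation calculation also involves $\sqrt{-\tilde g}$ (and its radial derivatives), not only the inverse metric components, and these contribute to the $\rho^{-1}\partial_\rho u$ piece that must balance against the twisted operator. You should include this in the bookkeeping.
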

\begin{proof}
Define $s = r^{-1}$, so that $\scri = \{s=0\}$ and $\hat{g}=s^2 g$ is a smooth metric on $X$, with $\scri$ a totally geodesic submanifold. Pick a spacelike surface $\Sigma_0$, containing $\mathcal{P}$ such that, $\Sigma_0$ is orthogonal to $\scri$ and has normal $n_{\Sigma_0}$ with respect to $\hat{g}$. We can push forward $n_{\Sigma_0}$ using the geodesic flow of $\hat{g}$ on $TX$ to give a smooth unit vector field $T$, with associated diffeomorphism $\psi_t$, in a neighbourhood of $\mathcal{P}$. Now pick coordinates $x^i = (\rho, x^a)$ on $\Sigma_0$ near $\mathcal{P}$, such that $\rho =s|_{\Sigma_0}$, and extend them off $\Sigma_0$ by requiring $T\rho=Tx^a=0$. Near $\mathcal{P}$ we may take as coordinate functions $(t, \rho, x^a)$, and in these coordinates, the metric coefficients satisfy
\bea
& g_{tt} = -\frac{1}{\rho^2}, \qquad g_{t a} = 0, &\nonumber \\
& g_{\rho\rho} = \frac{l^2}{\rho^2} + \O{1}, \qquad g_{\rho a} =  \O{1}, \qquad g_{ab} =  \frac{h_{ab}}{\rho^2} + \O{1},&
\eea
as $\rho \to 0$, for some $h_{ab}$ independent of $\rho$. We can assume that the coordinate neighbourhood, $V$ we constructed is in fact contained in a coordinate neighbourhood of  the boundary of $U_T$, and extend the metric $\hat{g}$ smoothly to a metric on the whole of $U_T$, and define $\tilde{g} = \rho^{-2} \hat{g}$. This agrees with $g$ in $V$.

Now note that $\sqrt{g} = \rho^{-(n+1)} (\sqrt{h}+ \O{\rho^2})$. We set
\be
\phi = g^{-1/4} \rho^{-n-\frac{1}{2}} u,
\ee
and it may then be verified that \eq{WP6} may be cast in the form \eq{WP1}, with $\alpha = \sqrt{\frac{n^2}{4}+\lambda}$,  $a_{ij} = \rho^{-2} g^{ij}$ and $b_i$, $c$ similarly given by functions constructed from the metric and its derivatives which are smooth up to $\rho=0$. 
\end{proof}

Making use of the finite speed of propagation for solutions of hyperbolic equations, any well posedness results for the problem \eq{WP1} may be extended to regions of $\mathring{X}$, assuming some global causality conditions. In particular, well posedness of \eq{WP1} with appropriate boundary conditions implies well posedness in the region $D^+[\Sigma \cup (I^+(\Sigma) \cap \scri)]$ for any spacelike hypersurface $\Sigma$, with initial data specified on $\Sigma$.

\subsection{The function spaces}

In order to introduce a weak formulation for the initial-boundary value problem we are considering, it will be necessary to define the function spaces in which we seek a solution. For a locally measurable function $u$ and measurable set $V \subset U$, we define the norm and space
\ben{WP7}
\norm{u}{\L^2(V)}^2 = \int_V u^2 \rho dv. \qquad \L^2(V) = \{u: \norm{u}{\L^2(V)}<\infty\},
\een
where $dv$ is the Lebesgue measure. This is clearly a Hilbert space with inner product
\ben{WP8}
\ip{u_1}{u_2}{\L^2(V)} = \int_V u_1 u_2 \rho dv.
\een
Now, we note that for smooth functions $\phi, \psi$ of compact support we may integrate by parts to find
\be
\int_V \phi \da_i \psi \rho dx = -\int_V (\dad_i \phi) \psi \rho dx, \qquad i=1, \ldots, N
\ee
This allows us to define a weak version of $\da$. We say that $v_i=\da_i u$ is the weak $\alpha$-twisted derivative of $u$ if
\ben{WP9}
\int_V v_i \phi \rho dx = - \int_V u \dad_i \phi \rho dx
\een
for all $\phi \in C_c^\infty(V)$. We say that $u \in \H^1(V)$ if $\da u$ exists in a weak sense and $\da_i u \in \L^2(V)$. We can define a norm and inner product on $\H^1(V)$ as follows:
\bea
\norm{u}{\H^1(V)}^2 &=& \norm{u}{\L^2(V)}^2+ \sum_{i=1}^N\norm{\da_i u}{\L^2(V)}^2, \nonumber \\
\ip{u_1}{u_2}{\H^1(V)} &=& \ip{u_1}{u_2}{\L^2(V)} +\sum_{i=1}^N \ip{\da_i u_1}{\da_i u_2}{\L^2(V)}. \label{WP10}
\eea
Next we define $\H_0^1(V)$ to be the completion of $C_c^\infty(V)$ with respect to the norm $\norm{\cdot}{\H^1(V)}$. We shall often take $V=U$. On any subset compactly contained in $U$, these spaces are simply equivalent to the standard Sobolev spaces.

We note at this stage that $\dad_i$ is the formal adjoint of $\da_i$ with respect to the $\L^2$ inner product. Thus the second order operator $\dad_i\left( a_{ij} \da_j \cdot\right)$ appearing in \eq{WP1} is formally self-adjoint. When we come to consider higher regularity, we shall need the Sobolev space associated to the adjoint derivative operator. In particular $u \in \Ht^1(V)$ if $\dad_i u$ exist in a weak sense and $\dad_i u \in \L^2(V)$, with the obvious inner product and norm. Again we define $\Ht_0^1(V)$ to be the completion of $C_c^\infty(V)$ with respect to the norm $\norm{\cdot}{\Ht^1(V)}$.

Let us state some properties of functions in these spaces.

\begin{Lemma} \label{H1lem}
\begin{enumerate}[(i)]
\item Functions of the form $u = \rho^{-\alpha} v$, with $v \in C^\infty(\Ub)$ are dense in $\H^1(U)$.
\item Suppose $u \in \H^1_0(U)$, then $\partial^\beta_i u \in \L^2(U)$ for any $\beta$.
\item Suppose $u \in \H^1(U)$. Then in a collar neighbourhood $[0, \epsilon) \times \dU\subset U$ of the boundary, we have $u \in C^0((0, \epsilon); L^2(\dU))$, with the expansion
\be
u = \rho^{-\alpha}(u_0 + \O{\rho^\alpha})
\ee
where $u_0 \in L^2(\dU)$, with $u_0 = 0$ iff $u \in \H^1_0(U)$. Furthermore, for any $\delta>0$, there exists a $C_\delta$ such that
\ben{trest}
\norm{u_0}{L^2(\dU)} \leq \delta \norm{u}{\H^1(U)} + C_\delta \norm{u}{\L^2(U)}
\een
\end{enumerate}
Similar results hold for $\Ht$, but with $\alpha$ replaced by $1-\alpha$.
\end{Lemma}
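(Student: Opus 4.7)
The unifying substitution is $v := \rho^\alpha u$, which identifies $\H^1(U)$ with the weighted Sobolev space $W^{1,2}(U; \rho^{1-2\alpha}\,dx)$: one computes $\da_i u = \rho^{-\alpha}\partial_i v$, so $\norm{u}{\L^2(U)}^2 = \int v^2 \rho^{1-2\alpha}\,dx$ and $\norm{\da_i u}{\L^2(U)}^2 = \int |\partial_i v|^2 \rho^{1-2\alpha}\,dx$. Since $0 < \alpha < 1$, the weight exponent $1-2\alpha$ lies in $(-1,1)$, so $\rho^{1-2\alpha}$ is locally integrable near $\dU$ and in fact is a Muckenhoupt $A_2$ weight. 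For (i), this reduces the claim to density of $C^\infty(\Ub)$ in $W^{1,2}(U; \rho^{1-2\alpha}\,dx)$, a classical result for power weights with exponent in $(-1,1)$. Concretely, I would use a partition of unity subordinate to an interior open set and the collar charts $W_P$: on interior patches, plain mollification suffices, while on a local collar model $[0, \tilde\epsilon) \times \R^{N-1}$ one extends $v$ across $\{\rho=0\}$ by even reflection, translates by $\delta > 0$ into the interior, and mollifies at a scale much smaller than $\delta$. Convergence in the weighted norm follows from continuity of translations in $L^2(\rho^{1-2\alpha}\,dx)$, which holds precisely because $1-2\alpha > -1$.

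For (ii), using $n_i = \partial_i \rho$ a direct computation gives
\begin{equation*}
\partial^\beta_i u - \da_i u = (\beta - \alpha)\rho^{-1} n_i u.
\end{equation*}
Since $n_i$ is smooth and bounded on $\Ub$, it suffices to prove the Hardy-type bound $\norm{\rho^{-1} u}{\L^2(U)} \leq C\norm{u}{\H^1(U)}$ on $\H^1_0(U)$, for which density reduces matters to $u \in C_c^\infty(U)$. Under $v = \rho^\alpha u$ the bound becomes $\int v^2 \rho^{-1-2\alpha}\,dx \leq C\int |\nabla v|^2 \rho^{1-2\alpha}\,dx$, which in the collar (after a partition of unity and straightening) reduces to the one-dimensional Hardy inequality
\begin{equation*}
\int_0^{\tilde\epsilon} f(\rho)^2 \rho^{-1-2\alpha}\,d\rho \leq \frac{1}{(2\alpha)^2} \int_0^{\tilde\epsilon} f'(\rho)^2 \rho^{1-2\alpha}\,d\rho
\end{equation*}
for $f$ compactly supported in $(0, \tilde\epsilon]$, valid because $\alpha > 0$.

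For (iii), work in the collar with $v = \rho^\alpha u$; part (i) permits approximation by smooth $v$ and passage to the limit. For smooth $v$ and $0 < \rho' < \rho < \tilde\epsilon$, Cauchy--Schwarz gives
\begin{equation*}
|v(\rho, y) - v(\rho', y)|^2 \leq \frac{\rho^{2\alpha} - \rho'^{2\alpha}}{2\alpha} \int_{\rho'}^\rho |\partial_s v(s, y)|^2 s^{1-2\alpha}\,ds.
\end{equation*}
Integrating over $y \in \dU$ shows $\rho \mapsto v(\rho, \cdot)$ is Cauchy in $L^2(\dU)$ as $\rho \to 0$, with limit $u_0 \in L^2(\dU)$ satisfying $\norm{v(\rho, \cdot) - u_0}{L^2(\dU)} \leq C\rho^\alpha \norm{u}{\H^1(U)}$; this is the asserted expansion. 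For \eq{trest}, start from $\norm{u_0}{L^2(\dU)} \leq \norm{v(\delta', \cdot)}{L^2(\dU)} + C\delta'^\alpha \norm{u}{\H^1(U)}$ and bound the slice norm by averaging $v(s,y)^2$ over a slab $s \in [\delta', 2\delta']$ (on which $\rho^{1-2\alpha}$ is comparable to $\delta'^{1-2\alpha}$), obtaining
\begin{equation*}
\norm{v(\delta', \cdot)}{L^2(\dU)}^2 \lesssim \delta'^{2\alpha-2}\norm{u}{\L^2(U)}^2 + \delta'^{2\alpha-1}\norm{u}{\L^2(U)}\norm{u}{\H^1(U)}.
\end{equation*}
Young's inequality and the choice of $\delta'$ small absorb the $\H^1$ contribution into the prescribed coefficient $\delta$, yielding \eq{trest}. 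When $u \in \H^1_0(U)$, approximation by $C_c^\infty(U)$ functions (for which the corresponding $u_0$ vanishes identically) shows $u_0 = 0$ in the limit. The analogue for $\Ht$ follows by replacing $\alpha$ by $1-\alpha$ throughout, which lies in $(0,1)$ under the same hypothesis.

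The main obstacle is the arbitrary-$\delta$ form of \eq{trest} in (iii): the collar integration produces a natural term of size $\delta'^\alpha \norm{u}{\H^1(U)}$ that can be made small only by shrinking $\delta'$, but this in turn inflates the slab average by a factor $\delta'^{\alpha-1}$. Balancing the two via Young's inequality is the delicate step that allows the $\H^1$ coefficient to be made arbitrarily small while keeping $C_\delta$ finite.
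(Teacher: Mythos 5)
Your approach matches the paper's (which merely cites Kufner for (i) and states that (ii)--(iii) follow by proving the inequalities on dense subsets): the substitution $v = \rho^\alpha u$ identifying $\H^1(U)$ with a weighted Sobolev space, the reduction to Kufner's density theorem, the one-dimensional Hardy inequality for (ii), and the FTC/Cauchy--Schwarz/slab-averaging argument for (iii) are exactly the details the paper elides. One small slip: the sharp Hardy constant for $\int_0^{\tilde\epsilon} f^2\rho^{-1-2\alpha}\,d\rho \leq C\int_0^{\tilde\epsilon} (f')^2\rho^{1-2\alpha}\,d\rho$ with $f$ compactly supported away from $0$ is $C=1/\alpha^2$, not $1/(2\alpha)^2$; this does not affect the argument.
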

Part $(i)$ follows from a result of Kufner \cite{Kufner}, and parts $(ii)$-$(iii)$ may be derived by showing that the inequalities hold on suitable dense subsets. From this we see that if $u \in \H^1(U)$, then $u$ may `blow up like $\rho^{-\alpha}$ near $\dU$', whereas if $u \in \H^1_0(U)$ then $u$ is `bounded near $\dU$' in some appropriate sense. These spaces thus capture, to a certain degree, the boundary behaviour we hope for in our solutions. A consequence of the proof of $(ii)$ is that $\H^1_0(U)=\Ht^1_0(U)$.

In fact, we can prove a sharper result about the range of the trace operator, together with an extension result:
\begin{Lemma}\label{tracelem}
The operator $T\circ \rho^\alpha$, where $T$ is the trace operator, maps $\H^1(U)$ into $H^{\alpha}(\dU)$, and the map is surjective. Furthermore there exists a bounded right inverse so that corresponding to any $u_0 \in H^{\alpha}(\dU)$, there exists a $u\in \H^1(U)$ with $\rho^\alpha u|_{\dU} = u_0$ in the trace sense, with the estimate
\be
\norm{u}{\H^1(U)} \leq C \norm{u_0}{H^\alpha(\dU)}
\ee
where $C$ is independent of $u_0$.
\end{Lemma}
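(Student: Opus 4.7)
The plan is to reduce the statement to a weighted Sobolev trace theorem on the half-space $\R^N_+ = \{x^1>0\}$ with weight $(x^1)^{1-2\alpha}$. Since $0<\alpha<1$ we have $1-2\alpha \in (-1,1)$, so the weight sits in the Muckenhoupt $A_2$ range; the underlying analytic content is the Caffarelli--Silvestre trace/extension identity, which identifies the Dirichlet-to-Neumann map of the operator $\operatorname{div}((x^1)^{1-2\alpha}\nabla\cdot\,)$ with (a constant multiple of) the fractional Laplacian $(-\Delta_{x'})^\alpha$ on $\R^{N-1}$.

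First I would localize. Cover $\Ub$ by finitely many boundary charts $(W_P,\Phi_P)$ together with open sets $V_j \cc U$, and pick a subordinate partition of unity. The pieces supported in the $V_j$ sit away from $\dU$ and contribute no trace, so it suffices to analyze a single chart piece, transported via $\Phi_P$ to a function supported in $\R^N_+\cap B(\mathbf{0},\delta)$ with $\rho = x^1$. Under this diffeomorphism the norm on $\H^1(U)$ and the norm on $H^\alpha(\dU)$ are each equivalent to their flat-space analogues, the discrepancy being absorbable into lower order terms.

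Next, set $v := \rho^\alpha u$. A direct computation gives $\da_i u = \rho^{-\alpha}\partial_i v$, so
\be
\norm{u}{\H^1(U)}^2 \text{ is equivalent to } \int_{\R^N_+}\bigl(|\nabla v|^2 + v^2\bigr)(x^1)^{1-2\alpha}\,dx,
\ee
and Lemma \ref{H1lem}(i) tells us smooth such $v$ form a dense class. The operator $T\circ \rho^\alpha$ becomes the ordinary trace $v\mapsto v|_{x^1=0}$ on this weighted Sobolev space. Taking the Fourier transform in the tangential variable $x'$ decouples the principal part of the energy as a direct integral over $\xi\in\R^{N-1}$ of one-dimensional weighted energies
\be
\int_0^\infty\bigl(|\partial_{x^1}\hat v|^2 + |\xi|^2|\hat v|^2\bigr)(x^1)^{1-2\alpha}\,dx^1.
\ee
For each $\xi$, minimizing over profiles with prescribed boundary value $\hat v(0,\xi)$ produces a Bessel profile proportional to $(|\xi| x^1)^\alpha K_\alpha(|\xi| x^1)$, with minimal energy $c_\alpha|\xi|^{2\alpha}|\hat v(0,\xi)|^2$. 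Integrating in $\xi$ and combining with the $L^2$ contribution from the $v^2$ term yields exactly
\be
\norm{v(0,\cdot)}{H^\alpha(\R^{N-1})}^2 \leq C\int_{\R^N_+}\bigl(|\nabla v|^2+v^2\bigr)(x^1)^{1-2\alpha}\,dx,
\ee
proving continuity of the trace.

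Running the Fourier argument in reverse gives the bounded right inverse: for $u_0\in H^\alpha(\dU)$, set $\hat v(x^1,\xi) := \hat u_0(\xi)\,\phi(|\xi|\,x^1)$ in each boundary chart with $\phi$ the normalised Bessel minimiser from above, patch the chart-wise $v$'s using the partition of unity, and define $u := \rho^{-\alpha}v$. By construction $(\rho^\alpha u)|_{\dU}=u_0$, and the minimisation identity gives $\norm{u}{\H^1(U)}\leq C\norm{u_0}{H^\alpha(\dU)}$. The main obstacle is the sharp weighted trace estimate in the flat model, which is really the Caffarelli--Silvestre theorem; once this is in hand, the curved geometry and variable definition of $\rho$ enter only as lower order perturbations, harmless because we may pay in the $v^2$ term of the $\H^1$ norm.
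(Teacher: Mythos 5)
Your proof is correct, but it takes a more self-contained route than the paper. Both arguments begin with the same reduction: writing $v=\rho^\alpha u$, observing that $\da_i u = \rho^{-\alpha}\partial_i v$, and noting that $\norm{u}{\H^1(U)}$ is equivalent to the weighted Sobolev norm $\bigl(\int (|\nabla v|^2+v^2)\rho^{1-2\alpha}\,dx\bigr)^{1/2}$ with Muckenhoupt weight $\rho^{1-2\alpha}$, $1-2\alpha\in(-1,1)$. The paper stops there and cites the weighted trace theorem of Kim \cite{Kim} as a black box. You instead prove the flat-model trace and extension estimates directly by localizing, flattening, taking the tangential Fourier transform, and minimizing the resulting family of one-dimensional weighted energies over profiles with prescribed boundary value, which produces the modified-Bessel minimiser $t^\alpha K_\alpha(|\xi|t)$ and the scaling $c_\alpha|\xi|^{2\alpha}|\hat v(0,\xi)|^2$ characteristic of the Caffarelli--Silvestre extension; the $L^2(\dU)$ part of the $H^\alpha$ norm then comes from the zeroth-order term via the one-dimensional weighted trace inequality (valid since $1-2\alpha>-1$), and reversing the minimization gives the bounded right inverse. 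What the paper's route buys is brevity; what yours buys is transparency about exactly where the exponent $\alpha$ in $H^\alpha(\dU)$ comes from, and the explicit form of the extension operator, without needing the full machinery of weighted Sobolev--Slobodeckij trace theory. Two points worth making precise when writing this up: the Bessel-profile extension constructed in a chart is not compactly supported, so one should multiply by a cutoff before patching with the partition of unity (the resulting error terms are controlled by the zeroth-order part of the energy); and the diffeomorphism invariance of $H^\alpha(\dU)$ under smooth boundary charts, which you use implicitly, should be stated. Neither is a gap, just bookkeeping.
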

This follows from the fact that $\rho^\alpha u$ belongs to a weighted Sobolev space, to which one may apply the results of \cite{Kim}.

We will also require the spaces $\H_0^{1}(U)^*$ and $\H^1(U)^*$, the dual spaces of $\H_0^1(U)$ and $\H^1(U)$ respectively. If $f \in X^*$, $u \in X$ we denote the pairing by
\be
\pair{f}{u},
\ee
and define
\be
\norm{f}{X^*} = \sup \{\pair{f}{u}|u\in X, \, \norm{u}{X}\leq 1 \}.
\ee

It will be convenient, for notational compactness, to define the following spaces and norms. The Neumann data space $H^1_\textrm{data, N}(V)$ consists of triples $(u_0, u_1, \f)$ with $u_0 \in  \H^1(V)$, $u_1 \in \L^2(V)$ and $\f \in L^2([0, T]; \L^2(V))$, whereas for the Dirichlet data space $H^1_\textrm{data, D}(V)$ we additionally require $u_0 \in \H^1_0$. For both spaces, we define
\be
\norm{(u_0, u_1, \f)}{H^1_\textrm{data}(V)}^2 = \norm{u_0}{\H^1(V)}^2+ \norm{u_1}{\L^2(V)}^2+\norm{\f}{L^2([0, T]; \L^2(V))}^2,
\ee 
We take $H^1_\textrm{sol., D}(V)$ to consist of $\u\in L^\infty([0, T]; \H^1_0(V))$ with 
\be
\norm{\u}{H^1_\textrm{sol., D}(V)}^2 = \norm{\u}{L^\infty([0, T]; \H^1(V))}^2 + \norm{\u}{W^{1, \infty}([0, T]; \L^2(V))}^2+\norm{\u}{H^2([0, T]; (\H^1_0(V))^*)}^2<\infty,
\ee
and $H^1_\textrm{sol., N}(V)$ to consist of $\u$ with
\be
\norm{\u}{H^1_\textrm{sol., N}(V)}^2 = \norm{\u}{L^\infty([0, T]; \H^1(V))}^2 + \norm{\u}{W^{1, \infty}([0, T]; \L^2(V))}^2+\norm{\u}{H^2([0, T]; (\H^1(V))^*)}^2<\infty.
\ee

\subsection{The Weak Formulations}

In order to motivate the definition of the weak solutions, let us suppose that we have a solution to 
\ben{WP11}
 u_{tt} +\mathcal{L} u= f\qquad \textrm{ on } U
\een
which is sufficiently smooth for the following operations to make sense. We can multiply the equation by a smooth function $v$, integrate over $U$ and integrate by parts to establish
\be
\int_U \left( u_{tt} v + a_{ij} \da_i u \da_j v + b_i \da_i u v + c u v \right)\, \rho dx =\int_U f v \, \rho dx +  \int_{\dU} (\rho^{1-\alpha} n_i a_{ij} \da_j u) (\rho^\alpha v) dS
\ee
The surface term will vanish either if $u$ satisfies the Neumann boundary conditions, or else if $v$ satisfies the Dirichlet conditions. We define the following bilinear form on $\H^1(V)$
\ben{WP12}
B_V[u,v;t] = \int_V \left[a_{ij}(\da_i u) (\da_j v) + b^i (\da_i u) v + c u v\right] \rho\ dx.
\een
If $B$ has no subscript, we assume the range to be $U$. Now we may define the weak Dirichlet and Neumann problems:
\begin{Definition}[Weak Dirichlet IBVP]
Suppose $(u_0, u_1, \f)\in H^1_\textrm{data, D}(U)$. We say that $\u \in H^1_\textrm{sol., D}(U)$ is a \emph{weak solution} of the Dirichlet IBVP if
\begin{enumerate}[i)]
\item For all $v \in \H^1_0(U)$ and a.e. time $0 \leq t \leq T$ we have
\be
\pair{\ddot{\u}}{v} + B[\u,v;t] = \ip{\mathbf{f}}{v}{\L^2(U)}.
\ee
\item We have the initial conditions
\be
\u(0) = u_0, \qquad \dot{\u}(0) = u_1.
\ee
\end{enumerate}
\end{Definition}

\begin{Definition}[Weak Neumann IBVP]
Suppose $(u_0, u_1, \f)\in H^1_\textrm{data, N}(U)$. We say that $\u \in H^1_\textrm{sol., N}(U)$ is a \emph{weak solution} of the Neumann IBVP if
\begin{enumerate}[i)]
\item For all $v \in \H^1(U)$ and a.e. time $0 \leq t \leq T$ we have
\be
\pair{\ddot{\u}}{v} + B[\u,v;t] = \ip{\mathbf{f}}{v}{\L^2(U)}.
\ee
\item We have the initial conditions
\be
\u(0) = u_0, \qquad \dot{\u}(0) = u_1.
\ee
\end{enumerate}
\end{Definition}

We note that by the calculation above, a strong solution obeying the Dirichlet (resp. Neumann) condition on the boundary is necessarily a weak Dirichlet (resp. Neumann) solution. The converse of course need not be true, however if we have enough regularity to integrate by parts then taking an arbitrary $v \in \H^1_0(U)$ we conclude that \eq{WP11} holds almost everywhere in $U$ in both the Dirichlet and Neumann case. Noting in the Neumann case that the trace of $\rho^\alpha v$ is arbitrary on the boundary we can, with care, deduce that $n_i a_{ij} \da_j u\in \tilde\H^1_0(U)$. We will see this in more detail later when we consider the asymptotics of the solutions.

\subsection{The theorems}

We're now ready to prove the well posedness of solutions to the weak formulations of \eq{WP1}. First, we have the following result

\begin{Theorem}[Uniqueness of weak solutions]  \label{Uniq}
Suppose $u$ is a weak solution of either the Dirichlet IBVP or of the Neumann IBVP. Then $u$ is unique. 
\end{Theorem}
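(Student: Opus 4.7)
By linearity of the weak formulation it suffices to assume $u_0 = 0$, $u_1 = 0$, $\f = 0$ and deduce that $\u \equiv 0$. I follow the Ladyzhenskaya/Evans test-function trick adapted to the twisted setting. For a fixed $s \in (0,T]$, set
\be
v(t) := \begin{cases} \int_t^s \u(\tau)\,d\tau & 0 \leq t \leq s \\ 0 & s < t \leq T. \end{cases}
\ee
Since $\u \in L^\infty([0,T]; \H^1(U))$ (respectively $\H^1_0(U)$ for Dirichlet), Minkowski's inequality gives $v(t) \in \H^1(U)$ (resp.\ $\H^1_0(U)$) with $\da_i v(t) = \int_t^s \da_i \u(\tau)\,d\tau$, so $v(t)$ is an admissible test function at each $t$. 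Moreover $v_t(t) = -\u(t)$ on $[0,s)$, $v(s) = 0$, and $v(0) = w(s)$ where $w(t) := \int_0^t \u(\tau)\,d\tau$.

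Insert $v(t)$ into the weak identity $\pair{\ddot{\u}}{v} + B[\u,v;t] = 0$ and integrate over $[0,s]$. For the first term, integration by parts in time together with $v(s) = 0$ and $\dot{\u}(0) = 0$ gives
\be
\int_0^s \pair{\ddot{\u}(t)}{v(t)}\,dt = \int_0^s \ip{\dot{\u}}{\u}{\L^2(U)}\,dt = \tfrac{1}{2}\norm{\u(s)}{\L^2(U)}^2.
\ee
For the principal part of $B$, substituting $\u = -v_t$ and exploiting the symmetry of $a_{ij}$ yields
\be
-\int_0^s \int_U a_{ij}\,\da_i v_t\,\da_j v\,\rho\,dx\,dt = \tfrac{1}{2}A[w(s),w(s);0] - \tfrac{1}{2}\int_0^s\!\!\int_U \dot{a}_{ij}\,\da_i v\,\da_j v\,\rho\,dx\,dt,
\ee
where $A$ denotes the principal-part bilinear form. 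The $cu v$ contribution is handled by a similar time-integration by parts (converting it into $\tfrac{1}{2}c(\cdot,0)w(s)^2$ plus an error with $\dot c$), while the $b_i \da_i u\cdot v$ term is split by integrating by parts in time and using the bound $\norm{w(s)}{\L^2}^2 \leq s \int_0^s \norm{\u}{\L^2}^2\,dt$ to absorb the result into the main quantities. No boundary contributions at $\dU$ appear: in the Dirichlet case because $\rho^\alpha v \to 0$ on $\dU$, and in the Neumann case because of the self-adjointness of $\dad_i(a_{ij}\da_j \cdot)$ built into $B$.

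Collecting everything, one obtains an estimate of the form
\be
\norm{\u(s)}{\L^2(U)}^2 + A[w(s),w(s);0] \leq C \int_0^s\!\big(\norm{\u(t)}{\L^2(U)}^2 + \norm{w(t)}{\H^1(U)}^2\big)\,dt + C_0\,s\,\esssup_{[0,s]}\norm{\u}{\L^2(U)}^2.
\ee
Coercivity of $A[\cdot,\cdot;0] + \mu \norm{\cdot}{\L^2}^2$ on $\H^1(U)$ for some $\mu$ (from the uniform ellipticity \eq{WP5}) allows us to dominate $\norm{w(s)}{\H^1(U)}^2$ on the left, after the standard substitution $\u \leadsto e^{-\mu t}\u$, which modifies only the lower-order coefficients. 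Choosing $s$ small enough to absorb the second right-hand term into the left, Gr\"onwall's inequality applied to $\phi(s) := \int_0^s (\norm{\u}{\L^2}^2 + \norm{w}{\H^1}^2)\,dt$ forces $\u \equiv 0$ on $[0,s_0]$ for some $s_0$ depending only on the coefficients; iteration covers $[0,T]$. The main obstacle is checking that every manipulation remains legitimate at the weak-regularity level given the non-standard trace behaviour allowed by $\H^1(U)$ near $\dU$, but this is exactly what the twisted derivative formalism was designed for: integration by parts of $\da$ against $\dad$ produces no boundary term when paired correctly, and Lemma~\ref{H1lem} guarantees the temporal trace identities $\u(0) = 0$, $\dot{\u}(0) = 0$ make sense.
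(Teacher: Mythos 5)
Your proposal is correct and follows essentially the same route as the paper, which itself simply cites the Ladyzhenskaya/Evans test-function $v(t)=\int_t^s\u\,d\tau$ and states that the argument goes through ``with twisted derivatives'' replacing ordinary ones; you have fleshed out exactly that adaptation. (Minor typo: the $\dot a_{ij}$ error term in your principal-part computation should carry a plus sign, not a minus, since $v(s)=0$ kills the upper boundary term; and the remark about ``self-adjointness'' killing boundary contributions is superfluous---no spatial integration by parts occurs because $B$ is already the bulk bilinear form---but neither affects the argument.)
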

\begin{proof}
The proof of uniqueness for the weak solutions proceeds almost identically to the proof of uniqueness of weak solutions to a finite IBVP. Without loss of generality, one may assume trivial data. In both cases one may take as test function
\ben{WP13}
\mathbf{v}(t) = \left \{ \begin{array}{lcl} \int_t^s \u(\tau) d\tau & \quad & 0\leq t \leq s \\ 0 &\quad & s\leq t \leq T \end{array} \right . ,
\een
and then integrate the weak equation over $0\leq t\leq s$. Standard manipulations making use of the uniform hyperbolicity condition \eq{WP5} then show $\u=0$. For example, one may take the proof of Evans \cite[p.~385]{Evans} and replace the standard spatial derivatives with twisted derivatives.
\end{proof}

\begin{Theorem}[Existence of weak solutions] \label{Exist}
\begin{enumerate}[(i)]
\item Given $(u_0, u_1, \f)\in H^1_\textrm{data, D}(U)$, there exists a weak solution to the Dirichlet IBVP corresponding to this data.
\item Given data $(u_0, u_1, \f)\in H^1_\textrm{data, N}(U)$, there exists a weak solution to the Neumann IBVP corresponding to this data.
\end{enumerate}
In both cases, we have the following estimate
\be
\norm{\u}{H^1_{\textrm{sol.,}\dagger}(U)} \leq C \norm{(u_0, u_1, \f)}{H^1_{\textrm{data}}(U)}
\ee
where $C$ depends on $T, U, \alpha$ and the coefficients of the equation. $\dagger$ stands for D or N as appropriate.
\end{Theorem}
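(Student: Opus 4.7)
The plan is to approximate \eq{WP1} by a family of hyperbolic IBVPs on the truncated cylinders $(0,T)\times U_\epsilon$, where $U_\epsilon = \{x\in U:\rho(x)>\epsilon\}$ for small $\epsilon>0$, extract $\epsilon$-uniform bounds via the modified energy of Section \ref{Energy}, and pass to a weak limit as $\epsilon\to 0$. Uniqueness (Theorem \ref{Uniq}) then promotes subsequential convergence to convergence of the full family, and the stated estimate is inherited from the uniform bound by weak lower semicontinuity.

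\textbf{Solving the approximate problem.} On $U_\epsilon$ the weight $\rho$ is bounded away from zero, so $\H^1(U_\epsilon)$ coincides with the standard $H^1(U_\epsilon)$ and the twisted derivatives amount to standard derivatives modulo smooth, bounded lower-order terms. Classical hyperbolic theory (e.g.~\cite[Ch.~7]{Evans}) applied to \eq{WP1} with the restricted data and the inner boundary condition
\be
\u^\epsilon|_{\{\rho=\epsilon\}} = 0 \textrm{ (Dirichlet case)}, \qquad n_i a_{ij}\da_j \u^\epsilon|_{\{\rho=\epsilon\}}=0 \textrm{ (Neumann case)},
\ee
produces a weak solution $\u^\epsilon \in H^1_{\textrm{sol.},\dagger}(U_\epsilon)$; the natural Neumann condition arises automatically from the bilinear form $B_{U_\epsilon}$, so no separate imposition is required.

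\textbf{Uniform energy estimates.} Using $\v = \dot\u^\epsilon$ as a test function in the weak formulation and exploiting the formal self-adjointness of $-\dad_i(a_{ij}\da_j\cdot)$ gives an identity of the form
\be
\frac{d}{dt}\tilde E^\epsilon(t) = \ip{\f}{\dot\u^\epsilon}{\L^2(U_\epsilon)} + R[\u^\epsilon](t),
\ee
where $\tilde E^\epsilon(t) = \tfrac12\int_{U_\epsilon}\big[(\dot\u^\epsilon)^2 + a_{ij}\da_i\u^\epsilon\da_j\u^\epsilon\big]\rho\, dx$ is the renormalised energy and $R$ collects first- and zeroth-order contributions from $b^i\da_i$, $c$, and $\partial_t a_{ij}$. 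The inner-boundary flux vanishes in case (i) since $\dot\u^\epsilon = 0$ on $\{\rho=\epsilon\}$, and in case (ii) by the natural Neumann condition. Bounding $|R|$ by $\tilde E^\epsilon + \norm{\u^\epsilon}{\L^2(U_\epsilon)}^2$ via Cauchy--Schwarz and applying Gr\"onwall's inequality together with the uniform ellipticity \eq{WP5} and a G{\aa}rding-type inequality yields
\be
\sup_{[0,T]}\Big(\norm{\u^\epsilon(t)}{\H^1(U_\epsilon)}^2 + \norm{\dot\u^\epsilon(t)}{\L^2(U_\epsilon)}^2\Big) \leq C \norm{(u_0,u_1,\f)}{H^1_\textrm{data}(U)}^2
\ee
with $C$ independent of $\epsilon$. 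A bound on $\ddot\u^\epsilon$ in $L^2([0,T];(\H^1)^*)$ (or $(\H^1_0)^*$ in the Dirichlet case) then follows by reading $\ddot\u^\epsilon = \f - \mathcal L \u^\epsilon$ off the weak formulation and using continuity of $B_{U_\epsilon}$ on $\H^1\times\H^1$.

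\textbf{Passing to the limit and the main obstacle.} Extracting weakly-$*$ convergent subsequences on each $U_R$ and diagonalising as $R\to 0$ produces a limit $\u \in H^1_{\textrm{sol.},\dagger}(U)$. To verify the weak equation at the limit, Lemma \ref{H1lem}(i) reduces the problem to test functions of the form $\v = \rho^{-\alpha}\tilde\v$ with $\tilde\v \in C^\infty(\Ub)$; such $\v$ restrict to admissible test functions on each $U_\epsilon$, so the weak convergence of $\u^\epsilon$ together with the fact that $\int_{U\setminus U_\epsilon}\rho\, dx \to 0$ permits passage to the limit in each term of the weak equation. The initial conditions are recovered by the standard time-integration argument of \cite[\S7.2]{Evans} adapted to twisted derivatives. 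The principal technical obstacle is arranging the inner-boundary flux in the energy identity to vanish \emph{and} the resulting Gr\"onwall constant to be $\epsilon$-independent; the former is taken care of by the choice of boundary condition on $\{\rho=\epsilon\}$, while the latter depends crucially on the renormalised (twisted-derivative) energy together with Lemma \ref{H1lem}, which ensure that the contributions from the boundary layer $\{0<\rho<\epsilon\}$ vanish in the limit rather than accumulating as $\epsilon\to 0$.
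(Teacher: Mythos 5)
Your overall strategy---truncate to $U_\epsilon=\{\rho>\epsilon\}$, solve the finite IBVP there, obtain $\epsilon$-uniform estimates using the renormalised (twisted) energy, extract a weak limit, and finish with density plus uniqueness---is precisely the route the paper takes, with its $V_k=\{\rho>a_k\}$ playing the role of your $U_\epsilon$ and Lemmas \ref{exlem2} and \ref{exlem3} serving as your energy estimate and weak-compactness steps. The two main ingredients you identify (vanishing inner-boundary flux under the chosen boundary condition, and $\epsilon$-uniformity of the Gr\"onwall constant thanks to the twisted-derivative energy) are exactly the ones the paper leans on.

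There is, however, a genuine gap in how you launch the approximate problems. You propose to solve \eq{WP1} on $U_\epsilon$ with the initial data simply restricted to $U_\epsilon$ and the Dirichlet condition $\u^\epsilon=0$ imposed on the inner boundary $\{\rho=\epsilon\}$. For generic $u_0\in\H^1_0(U)$ the restriction $u_0|_{U_\epsilon}$ does not vanish on $\{\rho=\epsilon\}$ (it lies in $\H^1(U_\epsilon)$, not $H^1_0(U_\epsilon)$), so the approximate Dirichlet IBVP is ill-posed as you have stated it. Separately, you take $\v=\dot{\u}^\epsilon$ as a test function to derive the energy identity; but for an $H^1$-level weak solution $\dot{\u}^\epsilon$ is only in $\L^2(U_\epsilon)$, so it is not an admissible test function, and the identity cannot be obtained this way without further justification. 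The paper resolves both problems at once in Lemma \ref{exlem1}: it shows that data $(u_0,u_1,\f)$ which are compactly supported in $U$ (for Dirichlet), or which satisfy an appropriate compatibility condition near $\dU$ (for Neumann), form a dense linear subspace of $H^1_{\textrm{data},\dagger}(U)$ and launch \emph{$C^\infty$} solutions on the truncated domains for $k$ large. With smooth $\u^k$, multiplying by $\u^k_t$ and integrating by parts is legitimate, and the inner boundary condition is trivially satisfied because the data vanish near $\{\rho=a_k\}$. Existence for general data is then recovered at the very end by combining the uniform a priori estimate \eq{WP23}, Theorem \ref{Uniq}, and continuity. To make your argument rigorous you need to insert an analogue of this reduction to a good dense subspace before writing down the energy identity---or else replace the truncation-in-space approximation by a Galerkin-in-space scheme on each $U_\epsilon$ where the multiplier argument is standard, and prove density separately.

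One further small point: when you pass to the limit, using test functions of the form $\v=\rho^{-\alpha}\tilde{\v}$ from Lemma \ref{H1lem}(i) works for the Neumann case (where the test space is $\H^1(U)$), but for the Dirichlet case the test space is $\H^1_0(U)$, and the natural dense subset there is $C^\infty_c(U)$---which restrict trivially to $V_k$ for $k$ large, as the paper uses in \eq{WP22}--\eq{WP24}.
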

It is convenient to divide the proof of Theorem \ref{Exist} into several Lemmas. We start by picking a sequence $0<a_{k}<\epsilon$ which decreases monotonically to zero, and define the sets $V_k = \{x:\rho(x)>a_k\}$. The broad strategy is to solve the finite IBVP on each $V_k$, where the equation becomes strictly hyperbolic and classical theory applies, and then find a way of passing to the limit $`k \to \infty$'.

\begin{Lemma} \label{exlem1} \begin{enumerate}[(i)]
\item Data $(u_0, u_1, f)$ such that for all  $k>k_0$ the problem
 \bea \nonumber
 &u_{tt} +\mathcal{L} u= f\textrm{ on } [0, T]\times V_k ,&\\
 & u = \left  . u_0\right |_{V_k}, u_t = \left . u_0\right |_{V_k} u = 0 \textrm{ on } \{0\} \times V_k, \qquad u=0  \textrm{ on } [0, T]\times  \partial V_k & \label{WP14}
 \eea
 has a solution which is $C^\infty([0, T]\times \overline{V}_k)$ form a dense linear subspace of $H^1_\textrm{data, D}(U)$
 \item Data $(u_0, u_1, f)$ such that for all $k>k_0$ the problem
 \bea \nonumber
 &u_{tt} +\mathcal{L} u= f\textrm{ on } [0, T]\times V_k ,&\\
 & u = \left  . u_0\right |_{V_k}, u_t = \left . u_0\right |_{V_k} u = 0 \textrm{ on } \{0\} \times V_k, \qquad n_i a_{ij} \da_j u=0  \textrm{ on } [0, T]\times  \partial V_k & \label{WP15}
 \eea
 has a solution which is $C^\infty([0, T]\times \overline{V}_k)$ form a dense linear subspace of  $H^1_\textrm{data, N}(U)$
 \end{enumerate}
\end{Lemma}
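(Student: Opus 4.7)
The plan is to produce in each case a linear subspace of smooth data that (a) is dense in $H^1_\textrm{data}$ and (b) for which the finite IBVP on each $V_k$ (with $k$ sufficiently large) admits a classical $C^\infty$ solution by standard hyperbolic theory, as developed in \cite{Evans, Lady}. The degeneracy of the twisted derivatives $\da$ and $\dad$ occurs only as $\rho\to 0$, so on the bounded cylinder $[0,T]\times\overline{V_k}$ the operator is strictly hyperbolic and smoothness follows as soon as the data are smooth and the full hierarchy of compatibility conditions holds at the corner $\{0\}\times\partial V_k$.

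For part (i), the Dirichlet case, I take the subspace of triples $(u_0,u_1,\f)$ with $u_0,u_1\in C_c^\infty(U)$ and $\f\in C_c^\infty(U_T)$. Density in $H^1_\textrm{data, D}(U)$ is immediate: $C_c^\infty(U)$ is dense in $\H^1_0(U)$ by definition and is easily seen to be dense in $\L^2(U)$ and $L^2([0,T];\L^2(U))$ since $\rho$ is bounded above and below on every compact subset of $U$. For such data there exists $k_0$ so that all supports lie in $V_{k_0}$; then for every $k\geq k_0$ the restrictions to $V_k$ vanish in a neighbourhood of $\partial V_k$, all Dirichlet compatibility conditions hold trivially, and the classical finite IBVP theory produces a $C^\infty([0,T]\times\overline{V_k})$ solution.

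For part (ii), the Neumann case, $C_c^\infty(U)$ is not dense in $\H^1(U)$ and one must retain the boundary blow-up $\rho^{-\alpha}$. By Lemma \ref{H1lem}(i), any $u_0\in\H^1(U)$ is approximated in $\H^1(U)$ by $\rho^{-\alpha}v_0$ with $v_0\in C^\infty(\overline U)$. Since $a_{ij}$ is uniformly elliptic, the conormal vector field $a_{ij}n_j$ is transverse to $\partial U$, so given $V_0\in C^\infty(\partial U)$ one can solve the non-characteristic first-order transport equation $n_i a_{ij}\partial_j v_0=0$ in a collar $\{0\leq\rho<\tilde\epsilon\}$ with $v_0|_{\partial U}=V_0$. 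Using Lemma \ref{tracelem} to extract the weighted boundary trace, I would decompose an arbitrary $u_0\in\H^1(U)$ as $u_0=\rho^{-\alpha}v_0+w$ with $v_0$ built from a $C^\infty(\partial U)$-approximation of the trace $\rho^\alpha u_0|_{\partial U}\in H^\alpha(\partial U)$ via this transport construction, and $w\in\H^1_0(U)$ then approximated by $C_c^\infty(U)$. Since $\da_j(\rho^{-\alpha}v_0)=\rho^{-\alpha}\partial_j v_0$, the zeroth-order Neumann compatibility $n_i a_{ij}\da_j u_0^{(n)}=0$ holds identically in the collar, and hence on $\partial V_k$ for every $k$ with $a_k<\tilde\epsilon$. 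The same construction is applied to $u_1$, while $\f$ is approximated by smooth functions whose time derivatives at $t=0$ have vanishing conormal trace in the collar.

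The main obstacle is assembling the \emph{full hierarchy} of compatibility conditions in the Neumann case: differentiating the Neumann condition in $t$ and using the PDE to express $\partial_t^p u(0)$ in terms of $u_0,u_1$ and $\partial_t^j\f(0)$ yields for every $p$ a conormal condition on $\partial V_k$. Each is treated by iterating the collar transport argument: at order $p$ one subtracts from the current approximant a correction $\rho^{-\alpha}\tilde v_p$ chosen via the same transport equation to kill the $p$-th obstruction, supported in a collar of width $\eta_p\to 0$ and with $\H^1$-norm bounded by $2^{-p}$. The resulting absolutely convergent series delivers smooth approximants for which all compatibility conditions hold uniformly throughout a full collar of $\partial U$, whereupon classical hyperbolic theory on $V_k$ provides the required $C^\infty([0,T]\times\overline{V_k})$ solution.
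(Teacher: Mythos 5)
Part (i) is essentially identical to the paper's argument, so no comment is needed there. Part (ii) is where you diverge from the paper, and where I think there is a genuine gap.

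The paper handles the Neumann compatibility hierarchy by putting the burden entirely on the forcing term: it takes $u_0$ smooth with $n_i a_{ij}\da_j u_0 = 0$ outside a compact set (which is dense in $\H^1(U)$), $u_1 \in C_c^\infty(U)$, and then writes $\f = \f_1 + \f_2$ with $\f_1 \in C_c^\infty([0,T]\times U)$ and $\f_2$ a correction of arbitrarily small $L^2([0,T];\L^2(U))$ norm chosen so that the higher-order compatibility conditions hold. This works cleanly because the $p$-th compatibility condition involves $g_p$, which for $p\ge 2$ contains the term $\f^{(p-2)}(0)$; one may therefore freely prescribe the jets $\partial_t^j\f_2(0)$ near $\partial U$ to cancel all the obstructions coming from $u_0$, $u_1$, and $\f_1$, and then realize any such jet with arbitrarily small $L^2$ norm by concentrating the time-support of $\f_2$ near $t=0$. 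Crucially, $\f$ is the last ingredient in the recursion for $g_p$, so prescribing $\partial_t^j\f_2(0)$ for increasing $j$ never disturbs the lower-order conditions already arranged.

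Your approach instead modifies $u_0$ by an iterated collar correction $\sum_p \rho^{-\alpha}\tilde v_p$. Two things go wrong. First, since $g_p$ for $p \ge 2$ involves $\mathcal{L}$ applied (iteratively) to $u_0$, adding a correction to $u_0$ at stage $p$ changes $n_i a_{ij}\da_j g_q$ for \emph{all} $q \ge 2$, not just $q = p$, so there is nothing preventing the correction at order $p$ from undoing the compatibility you arranged at orders $2,\dots,p-1$. You would need to show the corrections are triangular with respect to the compatibility hierarchy (e.g., vanish to sufficiently high order at $\partial U$ so as not to affect lower-order conditions), and the transport construction you use does not supply this. Second, even if the recursion could be closed, controlling the corrections only in $\H^1$ norm ($\le 2^{-p}$) does not give convergence of $\sum_p \rho^{-\alpha}\tilde v_p$ to a \emph{smooth} function on $\overline{V_k}$, which is what you need to invoke the classical $C^\infty$ existence theory for the strictly hyperbolic IBVP on $V_k$; absolute convergence in $\H^1$ is far too weak. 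The paper's device of hiding the compatibility corrections in the forcing avoids both problems, because the forcing needs only to be smooth and small in $L^2([0,T];\L^2(U))$, a norm in which the required jets can be prescribed cheaply.
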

\begin{proof}
For $(i)$, we may take $u_0, u_1 \in C_c^\infty(U)$ and $\f \in C_c^\infty([0, T]\times U)$. For large enough $k$ the data are supported inside $V_k$, and data of this form a dense linear subspace of $H^1_\textrm{data, D}(U)$. For (ii), we need the fact that smooth functions $u$ for which $n_i  a_{ij} \da_j u =0 $ outside a compact set are dense in $\H^1(U)$. To see this, we first note that for any $u\in \H^1(U)$ we may take $u^{\epsilon} = \rho^{-\alpha} v^\epsilon$, where $v^\epsilon \in C^\infty (\Ub)$ and
\be
\norm{u-u^\epsilon}{\H^1(U)} < \epsilon.
\ee
We define $v^\epsilon_0$ in a collar neighbourhood of the boundary $[0, \delta)\times \dU$ to satisfy
\be
n_i a_{ij} \partial_j v^\epsilon_0 = 0, \qquad \left. v^\epsilon_0\right |_{\rho=0} = \left. v^\epsilon \right|_{\dU}.
\ee
Here $n_i$ is the unit normal of $\rho=const.$, which defines a smooth vector field provided $\delta$ is sufficiently small. Take $\chi(\rho)$ to be a smooth function, equal to $1$ for $\rho<\delta/2$ and vanishing for $\rho>3\delta/4$. Now, $u^\epsilon - \rho^{-\alpha} v^\epsilon_0\chi(\rho) =\tilde{u}^\epsilon \in \H^1_0(U)$, so there exists $w\in C^\infty_c(U)$ such that $\norm{\tilde{u}^\epsilon-w^\epsilon}{\H^1(U)}<\epsilon$. Consider the function $y^\epsilon=\rho^{-\alpha} v_0^\epsilon + w^\epsilon$. This satisfies
\be
\norm{u-y^\epsilon}{\H^1(U)}<2 \epsilon
\ee
and
\be
n_i a_{ij} \da_j  y^\epsilon = 0,\quad \textrm{ near }\dU.
\ee
We can suppose then that $\rho^\alpha u_0\in C^\infty(\Ub)$ with $n_i  a_{ij} \da_j u_0 =0 $ near $\dU$, and take $u_1\in C_c^\infty(U)$. Finally we can take a smooth $\f$ which is a sum of one component in $C_c^\infty(U)$  and another of arbitrarily small $L^2([0, T]; \L^2(U))$ norm which ensures the higher order compatibility conditions vanish to all orders on $t=0$. For large enough $k$ this data will launch a smooth solution and such data are dense in $H^1_\textrm{data, N}(U)$.
\end{proof}

\begin{Lemma} \label{exlem2}
Suppose $u$ is a smooth solution of 
\bea \nonumber
 &u^k_{tt} +\mathcal{L} u^k= f\quad \textrm{ on } [0, T]\times V_k ,&\\
 & u^k = \left  . u_0\right |_{V_k},\ \  u^k_t = \left . u_1\right |_{V_k} \ \   \textrm{ on } \{0\} \times V_k,  & \label{WP16}
 \eea
 with either $u^k=0$ or $n_i a_{ij} \da_j u^k = 0$ on $\partial V_k$. Then $u^k$ satisfies the estimate
\ben{WP17}
\norm{\u^k}{H^1_{\textrm{sol.,}\dagger}(V_k)} \leq C \norm{(u_0, u_1, \f)}{H^1_\textrm{data}(U)}
\een
where $C$ is uniform in $k$. $\dagger$ stands for D or N as appropriate.
\end{Lemma}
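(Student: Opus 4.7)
The idea is to run a modified energy estimate mirroring the one derived in the toy model of \S\ref{toy}, choosing multipliers that exploit the fact that $\dad_i$ is the formal $\L^2$-adjoint of $\da_i$ so that the boundary contributions on $\partial V_k$ vanish under either homogeneous boundary condition. Since we are free to work on $V_k$ (where $\rho\geq a_k>0$ and therefore $\L^2$, $\H^1$, etc.\ coincide with standard Sobolev spaces), the smooth solution $u^k$ supplied by classical IBVP theory is regular enough to justify every manipulation below; the only thing at stake is uniformity of the resulting constant in $k$.

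The natural modified energy is
\be
E^k(t)=\frac12 \int_{V_k}\left[(u^k_t)^2 + a_{ij}(\da_i u^k)(\da_j u^k)\right]\rho\, dx,
\ee
which by \eq{WP5} controls $\tfrac12\bigl(\norm{u^k_t}{\L^2(V_k)}^2+\theta\sum_i\norm{\da_i u^k}{\L^2(V_k)}^2\bigr)$. I would differentiate $E^k$ in time, replace $u^k_{tt}$ using the equation in \eq{WP16}, and then integrate by parts against the weighted measure $\rho\, dx$. Using
\be
\int_{V_k} u^k_t\,\dad_i(a_{ij}\da_j u^k)\,\rho\, dx
=-\int_{V_k} a_{ij}(\da_j u^k)(\da_i u^k_t)\,\rho\, dx +\int_{\partial V_k}\rho\, u^k_t\,(n_i a_{ij}\da_j u^k)\, dS,
\ee
the surface integral vanishes in the Dirichlet case (since $u^k=0$ on $\partial V_k$ implies $u^k_t=0$) and in the Neumann case (by the boundary condition). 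The outcome is
\be
\dot E^k = \frac12\int_{V_k}\dot a_{ij}(\da_i u^k)(\da_j u^k)\rho\, dx - \int_{V_k}u^k_t\bigl(b_i\da_i u^k + c u^k\bigr)\rho\, dx + \int_{V_k} f\, u^k_t\, \rho\, dx.
\ee

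Each term is estimated by Cauchy--Schwarz and Young's inequality, using only the $C^\infty(\overline{U}_T)$ bounds on the coefficients (which are inherited on every $V_k\subset U$). Complementing $E^k$ with the $\L^2$-control $\tfrac{d}{dt}\norm{u^k}{\L^2(V_k)}^2\leq \norm{u^k}{\L^2(V_k)}^2+\norm{u^k_t}{\L^2(V_k)}^2$ and setting $\mathcal{E}^k=E^k+\norm{u^k}{\L^2(V_k)}^2$, I obtain a differential inequality
\be
\dot{\mathcal{E}}^k \leq C_1 \mathcal{E}^k + C_2 \norm{\f(t)}{\L^2(V_k)}^2,
\ee
with $C_1,C_2$ depending only on $T$, $\alpha$, $\theta$ and the $C^0$ norms of $a_{ij}, b_i, c, \dot a_{ij}$ on $\overline{U}_T$; in particular they are independent of $k$. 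Gronwall's inequality, combined with the trivial bound $\mathcal{E}^k(0)\leq C\,\norm{(u_0,u_1,\f)}{H^1_{\text{data}}(U)}^2$ obtained by restriction, yields the $L^\infty([0,T];\H^1(V_k))$ and $W^{1,\infty}([0,T];\L^2(V_k))$ pieces of the claimed norm with a constant uniform in $k$.

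The final piece, the bound on $u^k_{tt}$ in $L^2([0,T];(\H^1_0(V_k))^*)$ or $L^2([0,T];(\H^1(V_k))^*)$, is read off directly from the equation: for any admissible test function $v$,
\be
|\pair{u^k_{tt}(t)}{v}| \leq \bigl|B_{V_k}[u^k(t),v;t]\bigr|+\norm{\f(t)}{\L^2(V_k)}\norm{v}{\L^2(V_k)}\leq C\bigl(\norm{u^k(t)}{\H^1(V_k)}+\norm{\f(t)}{\L^2(V_k)}\bigr)\norm{v}{\H^1(V_k)},
\ee
where the boundedness of $B_{V_k}$ follows (with a $k$-independent constant) from Cauchy--Schwarz and the uniform bounds on the coefficients. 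Squaring and integrating in $t$ and using the $\H^1$-bound already established controls $\norm{u^k_{tt}}{L^2([0,T];(\H^1(V_k))^*)}$; the lower-order pieces in the $H^2$-in-time norm follow from the continuous embeddings $\H^1\hookrightarrow\L^2\hookrightarrow(\H^1)^*$. The main obstacle in this whole scheme is not any single estimate but the bookkeeping that guarantees $k$-independence: one must verify that every constant depends only on the coefficients over $\overline{U}_T$ and not on geometric quantities attached to $\partial V_k$ (such as trace or Poincar\'e constants that could blow up as $a_k\to 0$), which is precisely the reason for preferring the modified energy, whose boundary term is identically zero rather than merely small.
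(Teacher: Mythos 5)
Your proposal is correct and follows the same route as the paper: multiply by $u^k_t$, integrate by parts against $\rho\,dx$, kill the boundary term with either homogeneous condition, add the $\L^2$ bound, invoke the uniform ellipticity constant $\theta$ and Gronwall, and read the dual-space bound on $u^k_{tt}$ directly off the weak equation. The paper compresses all of this into a reference to the standard finite-IBVP argument (Evans, p.~381) with the remark that the constants are $k$-independent, whereas you spell out exactly why — the modified energy has no surviving boundary term and all constants trace back to $\theta$ and the $C^0(\overline{U}_T)$ norms of $a_{ij},\dot a_{ij},b_i,c$ — which is precisely the "careful examination of the steps" the paper alludes to.
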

\begin{proof}
We drop the superscript on the solutions $u^k$ for convenience. Multiplying by $u_t$ and integrating by parts, using the boundary condition to neglect the boundary term, one has
\ben{WP18}
\ip{\ddot{\u}}{\dot{\u}}{\L^2(V_k)} +B_{V_k}[\u, \dot{\u}; t] = \ip{\f}{\dot{\u}}{\L^2(V_k)}.
\een
We also note that
\ben{WP18.5}
\frac{d}{dt}\frac{1}{2}\norm{\u}{\L^2(U)}^2 = \ip{\u}{\dot{\u}}{\L^2(U)}
\een
Taking \eq{WP18} and adding it to $\gamma$ times \eq{WP18.5}, we arrive at the equality
\bea \nonumber
&& \frac{d}{dt}\frac{1}{2}\left[\norm{\dot{\u}}{\L^2(V_k)}^2 + B_{V_k}[\u, \u; t] + \gamma\norm{{\u}}{\L^2(V_k)}^2  \right] =  \quad \ip{\f}{\dot{\u}^k}{\L^2(V_k)}\\ &&  - \int_{V_k} \left( \dot{a}_{ij} (\da_i \u) (\da_j \u) + \dot{b}^i (\da_i \u) \u + b^i (\da_i \u) \dot{\u} + \dot{c} \u^2 + \gamma \u \dot{u} \right) \rho dv \label{WP18.55}
\eea

Note that we have a bound
\be
\sup_{U} \abs{a_{ij}},  \abs{\dot{a}_{ij}}, \abs{b^i}, \abs{\dot{b}^i}, \abs{c}, \abs{\dot{c}} <C
\ee
which together with the uniform hyperbolicity condition:
\be
\theta \abs{\xi}^2 \leq a_{ij}(x) \xi^i \xi^j, \qquad \textrm{ for all } x\in U,\ \xi \in \mathbb{R}^N
\ee
implies that there exist $\gamma, M$, independent of $k$ such that for each $t$
\ben{WP18.6}
\norm{\dot{\u}}{\L^2(V_k)}^2 + \norm{\u}{\H^1(V_k)}^2 \leq M \left(\norm{\dot{\u}}{\L^2(V_k)}^2 + B_{V_k}[\u, \u; t] + \gamma\norm{{\u}}{\L^2(V_k)}^2 \right)
\een
holds for all smooth $u$. To see this, recall from \eq{WP12} that
\be
B_{V_k}[\u,\u;t] = \int_{V_k} \left[a_{ij}(\da_i \u) (\da_j \u) + b^i (\da_i \u) v + c \u^2\right] \rho\ dx.
\ee
Applying the uniform hyperbolicity estimate to the first term on the right hand side and the Cauchy-Schwarz inequality to the second term, we have that for any $\delta>0$
\be
B_{V_k}[\u,\u;t] \geq  (\theta-\delta) \norm{\u}{\H^1(U)}^2 - C_\delta \norm{u}{\L^2(U)}^2
\ee
where
\be
C_\delta = \sup_{U}\left( \abs{c} + \frac{\abs{b}^2}{4\delta}+\theta\right).
\ee
Taking $\delta = \theta/2$ and $\gamma > C_\delta$, we conclude that \eq{WP18.6} holds with $M = 1+ 2/\theta$.

We can now estimate from \eq{WP18.55}, \eq{WP18.6} and making use of the fact that we have  bounds on the coefficients which are uniform in $k$:
\bea \nonumber
&& \frac{d}{dt}\left[\norm{\dot{\u}}{\L^2(V_k)}^2 + B_{V_k}[\u, \u; t] + \gamma\norm{{\u}}{\L^2(V_k)}^2  \right] \leq  \\ &&\qquad C\left[\norm{\f}{\L^2(V_k)}^2 + \norm{\dot{\u}}{\L^2(V_k)}^2 + B_{V_k}[\u, \u; t] + \gamma\norm{{\u}}{\L^2(V_k)}^2  \right] 
\eea
with $C$ independent of $k$. Using Gronwall's lemma, together with a further application of \eq{WP18.6} we arrive at \eq{WP17}.

\end{proof}

\begin{Lemma}[Weak compactness] \label{exlem3}
\begin{enumerate}[(i)]
\item Suppose $\u^k \in H^1_{\mathrm{sol. D}}(V_k)$, with
\be
\norm{\u^k }{ H^1_{\mathrm{sol. D}}(V_k)} \leq C
\ee
Then there exists $\u \in H^1_{\mathrm{sol. D}}(U)$ with $\norm{\u }{ H^1_{\mathrm{sol. D}}(U)} \leq C$ and a subsequence $\u^{k_l}$ such that for any $v \in \H^1_0(V_m)$, taking $l$ large enough that $k_l>m$ we have for almost every $t$:
\bea
\ip{\u^{k_l}(t)}{v}{\L^2(V_{k_l})} &\to& \ip{\u(t)}{v}{\L^2(U)}\nonumber \\
\ip{\da_i \u^{k_l}(t)}{v}{\L^2(V_{k_l})} &\to& \ip{\da_i \u(t)}{v}{\L^2(U)}\nonumber \\
\ip{\dot{\u}^{k_l}(t)}{v}{\L^2(V_{k_l})} &\to& \ip{\dot{\u}(t)}{v}{\L^2(U)} \\
\pair{\ddot{\u}^{k_l}(t)}{v} & \to & \pair{\ddot{\u}(t)}{v}, \nonumber
\eea
\item Suppose $\u^k \in H^1_{\mathrm{sol. N}}(V_k)$, with
\be
\norm{\u^k }{ H^1_{\mathrm{sol. N}}(V_k)} \leq C
\ee
Then there exists $\u \in H^1_{\mathrm{sol. N}}(U)$ with $\norm{\u }{ H^1_{\mathrm{sol. N}}(U)} \leq C$ and a subsequence $\u^{k_l}$ such that for any $v \in \H^1(U)$, we have for almost every $t$:
\bea
\ip{\u^{k_l}(t)}{\left . v\right |_{V_{k_l}}}{\L^2(V_{k_l})} &\to& \ip{\u(t)}{v}{\L^2(U)}  \nonumber \\
\ip{\da_i \u^{k_l}(t)}{\left . v\right |_{V_{k_l}}}{\L^2(V_{k_l})} &\to& \ip{\da_i \u(t)}{v}{\L^2(U)}\nonumber \\
\ip{\dot{\u}^{k_l}(t)}{\left . v\right |_{V_{k_l}}}{\L^2(V_{k_l})} &\to& \ip{\dot{\u}(t)}{v}{\L^2(U)} \\
\pair{\ddot{\u}^{k_l}(t)}{\left . v\right |_{V_{k_l}}} & \to & \pair{\ddot{\u}(t)}{v}, \nonumber
\eea
\end{enumerate}
\end{Lemma}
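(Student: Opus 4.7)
The approach is to transfer each $\u^k$ from $V_k$ to a function $\tilde{\u}^k$ on the fixed domain $U$ by extension by zero, and then to extract subsequences via weak-$*$ and weak compactness in dual spaces. In the Dirichlet case, the extension is clean: since $\u^k(t) \in \H^1_0(V_k)$, its zero extension lies in $\H^1_0(U)$ with the twisted Sobolev norm preserved, and similarly for $\dot{\u}^k$ in $\L^2(U)$. In the Neumann case the zero extension is only in $\L^2(U)$ a priori, so one additionally extends each twisted derivative $\da_i \u^k$ by zero as an independent $\L^2(U)$-function $w_i^k$, leaving the identification $w_i = \da_i \u$ to be proved at the end.

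The function spaces appearing in $H^1_{\textrm{sol.},\dagger}(U)$ are all dual spaces of separable Banach spaces: $L^\infty([0,T]; \H^1(U))$ is the dual of $L^1([0,T]; \H^1(U)^*)$ via the natural pairing (since $\H^1(U)$ is a separable Hilbert space), and $L^2([0,T]; \H^1(U)^*)$ is itself a reflexive Hilbert space; similarly for $\H^1_0(U)$ in place of $\H^1(U)$. Accordingly I would apply Banach--Alaoglu to extract from $\tilde{\u}^k$ a weak-$*$ limit $\u$ in $L^\infty([0,T]; \H^1_0(U))$ (Dirichlet) or in $L^\infty([0,T]; \L^2(U))$ together with weak-$*$ limits $w_i$ of the $w_i^k$ (Neumann), and likewise for $\partial_t \tilde{\u}^k$; for $\ddot{\u}^k$, first compose with the bounded restriction map $\H^1(U) \to \H^1(V_k)$ (or extension by zero $\H^1_0(V_k)\hookrightarrow \H^1_0(U)$) so as to view it as a uniformly bounded sequence in $L^2([0,T]; \H^1(U)^*)$, and apply reflexivity. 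A standard diagonal extraction yields a single subsequence realizing all these convergences simultaneously. The uniform bound $\norm{\u}{H^1_{\textrm{sol.},\dagger}(U)} \le C$ then follows from lower semicontinuity of the norms under weak and weak-$*$ convergence, and the pointwise-in-$t$ statements of the lemma follow by testing the weak-$*$ convergences against characteristic functions of measurable time sets (passing to a further subsequence if one insists on pointwise a.e.\ values).

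The main obstacle is identifying the limits in the Neumann case, where zero extension does \emph{not} preserve the $\H^1$ structure and one must verify that $w_i = \da_i \u$ holds on the whole of $U$. This is accomplished by choosing test functions $\phi \in C_c^\infty(U_T)$: the support of $\phi$ is compactly contained in $V_m$ for some $m$, hence lies in $V_k$ for all $k \ge m$, so integration by parts in $V_k$ (which produces no boundary contribution since $\phi$ vanishes near $\partial V_k$) gives
\be
\int_{U_T} w_i^k \, \phi \, \rho \, dx \, dt = -\int_{U_T} \tilde{\u}^k \, \dad_i \phi \, \rho \, dx \, dt.
\ee
Passing to the weak-$*$ limit on both sides yields the weak twisted-derivative identity, so $\u \in L^\infty([0,T]; \H^1(U))$ with $\da_i \u = w_i$, and an entirely analogous argument handles $\partial_t \u$ and $\ddot{\u}$. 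Once the limits are identified, the convergences stated in the lemma are immediate by writing the $\L^2$-pairings against $v|_{V_{k_l}}$ as integrals of $\tilde{\u}^{k_l} v$ (resp.\ $w_i^{k_l} v$) over $U$ and using the weak-$*$ convergences, while the $\ddot{\u}^{k_l}$-pairing converges by construction of the embedding into $L^2([0,T]; \H^1(U)^*)$.
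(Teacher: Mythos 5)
Your approach is essentially the same as the paper's: extend $\u^k$ by zero (and, in the Neumann case, extend each twisted derivative $\da_i\u^k$ by zero as an independent $\L^2$ function), extract a weakly convergent subsequence, and identify $v_i=\da_i\u$ by testing against $\phi\in C_c^\infty(U)$, using that $\operatorname{supp}\phi\subset V_{k_l}$ for $l$ large so the defining integration-by-parts identity holds. You fill in more of the functional-analytic scaffolding (Banach--Alaoglu for the $L^\infty$-in-time spaces, reflexivity for the $L^2([0,T];X^*)$ component, the restriction/extension trick to make sense of $\ddot{\u}^k$ on the fixed space, diagonal extraction, and lower semicontinuity for the norm bound), which the paper only sketches with the remark that ``similar considerations may be applied to the other results'' after demonstrating the idea at a single time slice.

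One step is not quite right as written: testing the weak-$*$ limits against characteristic functions of time sets, or passing to a further subsequence, does not by itself upgrade weak-$*$ convergence in $L^\infty([0,T];\L^2(U))$ to a.e.-in-$t$ convergence of the scalar functions $t\mapsto\ip{\u^{k_l}(t)}{v}{\L^2}$. The way to actually get the pointwise-in-$t$ statements claimed in the Lemma is to note that for each fixed $v$, the uniform $W^{1,\infty}([0,T];\L^2)$ and $H^2([0,T];X^*)$ bounds make the maps $t\mapsto\ip{\u^{k}(t)}{v}{\L^2}$ and $t\mapsto\pair{\dot{\u}^{k}(t)}{v}$ uniformly H\"older-continuous, so Arzel\`a--Ascoli applies; a diagonal argument over a countable dense set of test functions $v$, combined with the uniform bounds, then yields convergence for every $v$ and every $t$. (The paper glosses over this same point.)
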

\begin{proof}
We demonstrate first the proof for $u^k \in \H^1(V_k)$, $\norm{u}{\H^1(V_k)}\leq C$, i.e. the first part of (ii). We define $\overline{u^k}\in\L^2(U)$ to agree with $u^k$ on $V_k$ and to vanish on $U \setminus V_k$. Similarly, we define $\overline{\da_i u^k}\in \L^2(U)$ to agree with $\da_i u^k$ on $V_k$ and to vanish on $U \setminus V_k$. Weak compactness of $\L^2(U)$ gives a weakly convergent subsequence $(\overline{u^{k_l}}, \overline{\da_i u^{k_l}}) \wto (u, v_i)$. It remains to show that $v_i=\da_i u$ in the weak sense. To show this, multiply $\overline{\da_i u^{k_l}}$ by $\phi \in C^\infty_c(U)$ and integrate over $U$. For $l$ large enough that $\textrm{supp } \phi \subset V_{k_l}$ , we have $ \int_U \phi \overline{\da_i u^{k_l}} \rho dx = -\int_U \dad_i \phi \overline{u^{k_l}} \rho dx $, so by taking weak limits we're done. Similar considerations may be applied to the other results in the Lemma, after applying Riesz representation theorem to $\ddot{\u}$.
\end{proof}
\textbf{Remark :} This Lemma can be extended to apply to higher spatial derivatives of $u$, in an essentially unchanged fashion.

Now we can combine the results above to show that there exists a solution to the weak problems.

\begin{proof}[Proof of Theorem \ref{Exist}]
\begin{enumerate}[(i)]
\item Suppose we have data $(u_0, u_1, f)$ such that for all  $k>k_0$ the problem
 \bea \nonumber
 &u_{tt} +\mathcal{L} u= f\textrm{ on } [0, T]\times V_k ,&\\
 & u = \left  . u_0\right |_{V_k}, u_t = \left . u_0\right |_{V_k} u = 0 \textrm{ on } \{0\} \times V_k, \qquad u=0  \textrm{ on } [0, T]\times  \partial V_k &
 \eea
 has a solution, $u^k$ which is $C^\infty([0, T]\times \overline{V}_k)$. By Lemma \ref{exlem2}, we have the estimate
\be
\norm{\u^k}{H^1_{\textrm{sol.,}D}(V_k)} \leq C \norm{(u_0, u_1, \f)}{H^1_\textrm{data}(U)}
\ee
And we also know that for $k>m$ and for any $v \in \H_0^1(V_{m})$ we have
\ben{WP22}
\pair{u^k_{tt}}{v}+ B_{V_k}[u^k, v; t] = \ip{f}{v}{\L^2(V_k)}
\een
Applying Lemma \ref{exlem3} we conclude the existence of $u$ satisfying
\ben{WP23}
\norm{\u}{H^1_{\textrm{sol.,}D}(U)} \leq C \norm{(u_0, u_1, \f)}{H^1_\textrm{data}(U)}
\een
and for any $v \in \H_0^1(V_{m})$:
\ben{WP24}
\pair{u_{tt}}{v} + B[u, v; t] = \ip{f}{v}{\L^2(U)}.
\een
Noting that functions $v \in \H_0^1(V_{m})$ are dense in $\H^1_0(U)$, we conclude that $u$ satisfies the first condition to be a weak solution of the Dirichlet IBVP. We must now check that the weak solution we have constructed satisfies the initial conditions. For this, choose any function $\v \in C^2(0, T; C^\infty_c(U))$, with $\v(T)=\dot{\v}(T)=0$. Integrating \eq{WP24} in time, we have after twice integrating by parts
\be
\int_0^T\pair{\ddot{v}}{\u} + B[\u,\v;t] dt = \int_0^T\ip{\mathbf{f}}{\v}{\L^2(U)} dt-\pair{\u(0)}{\dot{\v}(0)}+\pair{\dot{\u}(0)}{\v(0)}
\ee
similarly, we have from \eq{WP22}
\be
\int_0^T\pair{\ddot{v}}{\u^k}_{V_l} + B_{V_k}[\u^k,\v; t] dt = \int_0^T\ip{\mathbf{f}}{\v}{\L^2(V_l)} dt-\pair{\u^k(0)}{\dot{\v}(0)}_{V_k}+\pair{\dot{\u}^k(0)}{\v(0)}_{V_k}.
\ee
Setting $k = k_l$, passing to the limit we have:
\be
\int_0^T\pair{\ddot{v}}{\u} + B[\u,\v; t] dt = \int_0^T\ip{\mathbf{f}}{\v}{\L^2(U)} dt-\pair{u_0}{\dot{\v}(0)}+\pair{u_1}{\v(0)}.
\ee
Since $\v(0), \dot{\v}(0)$ are arbitrary, we conclude that $\u(0)=u_0$, $\dot{\u}(0)=u_1$ and we're done. 

Finally, we make use of Lemma \ref{exlem1} together with the uniqueness result Theorem \ref{Uniq} and a standard argument based on continuity, using \eq{WP23}, to show that our result holds for any $(u_0, u_1, \f) \in H^1_{\textrm{data, D}}(U)$.
 
\item The Neumann case follows in an almost identical manner, solving a sequence of finite Neumann problems for suitably smooth data and using the weak compactness to extract a weak solution. Finally a continuity argument again extends the existence proof to all admissible data.
\end{enumerate}
\end{proof}

\section{Higher Regularity and Asymptotics\label{HR}}

We now wish to show that if more assumptions are made on the data, the weak solution can be shown to have improved regularity. In order to do this, we will require some elliptic estimates, enabling us to control some appropriate $H^2$ norm of $u$ in terms of $\mathcal{L} u$. 

\subsection{The $\H^2$ norm}

The $\H^2$ norm is slightly unusual in its definition because it is necessary to distinguish the directions tangent to the boundary from those normal to it. We fix a finite set of vector fields $\{T^{(A)}, N^{(B)}\}$ on $U$ which satisfy the following properties:
\begin{enumerate}[(i)]
\item For $\rho < \tilde \epsilon$, we have $T^{(A)}$ normal to $\partial_i \rho$, while $N^{(B)}$ are parallel\footnote{recall $n_i \dot{a}_{ij}=0$ on $\dU$, so the spaces we construct are equivalent for any value of $t$} to $a_{ij} \partial_j \rho$.
\item At each point of $\Ub$, the set $\{T^{(A)}, N^{(B)}\}$ spans $\mathbb{R}^N$.
\end{enumerate}
\begin{Definition}
We say that a function $u\in \H^1(U)$ belongs to $\H^2(U)$, provided
\be
T^{(A)}_i \da_i u \in \H^1(U), \qquad N^{(B)}_i \da_i u \in \Ht^1(U),
\ee
for all $A, B$, and we define the norm:
\be
\norm{u}{\H^2(U)}^2 = \norm{u}{\H^1(U)}^2 + \sum_A \norm{T^{(A)}_i \da_i u }{ \H^1(U)}^2+\sum_B \norm{N^{(B)}_i \da_i u}{\Ht^1(U)}^2
\ee
\end{Definition}
\textbf{Remarks :} \begin{enumerate}[(a)] 
\item A different choice of  $\{T^{(A)}, N^{(B)}\}$ satisfying (i), (ii) gives rise to an equivalent norm. \item If $u\in \H^2(U)$ then $u \in H^2_{\textrm{loc.}}(U)$.
\item If $u \in \H^2(U)$, then $\dad_i\left(\frac{\partial^m a_{ij}}{\partial t^m} \da_j u \right) \in \L^2(U)$. This observation is important in establishing the higher regularity energy estimates.
\end{enumerate}

\subsection{Elliptic estimates}

We first define the weak version of the elliptic problem we study. We assume that $t$ is fixed throughout this section:
\begin{Definition}
Suppose $f \in \H^1_0(U)^*$ (resp. $\H^1(U)^*$). We say that $u \in \H^1_0(U)$ (resp. $\H^1(U)$) is a weak solution of the Dirichlet (resp. Neumann) problem
\ben{EL1}
\mathcal{L} u = f  \quad \textrm{ on } U
\een
with $u=0$ (resp. $n_i a_{ij} \da_j u=0$) on $\dU$, if
\be
B[u, v] = \pair{f}{v}
\ee
for all $v \in \H^1_0(U)$ (resp. $\H^1(U)$).
\end{Definition}

\begin{Theorem}[Elliptic Estimates]\label{ellipticest}
Suppose $u$ is a weak solution of either the Dirichlet or Neumann problem \eq{EL1} and suppose that in fact $f \in \L^2(U)$. Then $u\in \H^2(U)$
with the estimate
\ben{EL2}
\norm{ u }{ \H^2(U)} \leq C \left ( \norm{f}{\L^2(U)}+ \norm{u}{\L^2(U)} \right)
\een
Furthermore, in the Dirichlet case $T_i \da_i u \in \H_0^1(U)$ and in the Neumann case $N_i\da_i \in \Ht^1_0(U)$.
\end{Theorem}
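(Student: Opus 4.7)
The plan is to adapt the classical two-step strategy for second-order elliptic regularity — Nirenberg tangential difference quotients, followed by use of the equation itself to recover the normal direction — to the weighted spaces $\L^2, \H^1, \Ht^1$ and the twisted derivatives $\da_i, \dad_i$. A partition of unity reduces matters to two situations. On an interior patch $V' \cc U$ the weight $\rho$ is bounded below and $\da_i, \dad_i$ are equivalent to ordinary derivatives, so the standard $H^2$ elliptic regularity argument (as in \cite[Ch.~6]{Evans}) applies verbatim and produces $\norm{u}{H^2(V')} \leq C(\norm{f}{L^2(V')} + \norm{u}{L^2(V')})$, which is stronger than required locally. The new work is performed in a boundary chart $\Phi_P$, in which $\rho = x^1$ is the normal coordinate and $x^2,\ldots,x^N$ are tangential.

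For tangential regularity, fix $T = T^{(A)}$; in the chart $T$ lies in the span of $\partial_{x^2},\ldots,\partial_{x^N}$. Let $\tau_h$ be the flow of $T$ and $D_h u = h^{-1}(u \circ \tau_h - u)$. Because $\tau_h$ preserves $\rho$, $D_h$ commutes with multiplication by $\rho^{\pm\alpha}$, and hence with each $\da_i$ modulo bounded commutators produced by differentiating the flow. Substitute $v = -D_{-h}(\zeta^2 D_h u)$, where $\zeta$ is a smooth cutoff adapted to the patch, into the weak formulation $B[u,v] = \ip{f}{v}{\L^2(U)}$; this $v$ lies in $\H^1_0(U)$ in the Dirichlet case and in $\H^1(U)$ in the Neumann case, and the tangential-only character of $D_h$ is what makes the Neumann test function admissible without a zero trace. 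Using the ellipticity bound \eq{WP5} on the principal term, Cauchy--Schwarz on lower-order and commutator pieces, and Young's inequality to absorb error contributions, one derives
\be
\norm{D_h u}{\H^1(U)} \leq C(\norm{f}{\L^2(U)} + \norm{u}{\L^2(U)})
\ee
uniformly in $h$. A weakly convergent subsequence produces a limit equal to $T^i\partial_i u = T^{(A)}_i \da_i u$ (the two coincide since $T \cdot \nabla\rho = 0$), placing $T^{(A)}_i\da_i u \in \H^1(U)$; in the Dirichlet case the weak closedness of $\H^1_0$ further yields $T^{(A)}_i\da_i u \in \H^1_0(U)$.

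For normal regularity, set $g := -f + b_i\da_i u + cu \in \L^2(U)$ and rewrite the equation as $\dad_i(a_{ij}\da_j u) = g$. Since each $N^{(B)}$ is parallel to $a_{ij}\partial_j\rho$ and $\{T^{(A)}, N^{(B)}\}$ spans $\mathbb{R}^N$, the vector field $a_{ij}\da_j u$ decomposes along this basis into tangential components of the form $T^{(A)}_j a_{ij} \da_j u$, which by the previous step lie in $\H^1(U)$ (so that their $\dad_i$-divergences lie in $\L^2(U)$, using that tangential $\dad_a$ equals $\partial_a$ in the adapted chart), plus a scalar multiple of the conormal direction with coefficient $w := N^{(B)}_i\da_i u$. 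The equation then identifies $\dad_i$ applied to this normal component as an $\L^2$ function; combined with $w \in \L^2(U)$ from $u \in \H^1(U)$ and the tangential bound $\partial_a w \in \L^2(U)$ inherited from the previous step, one concludes $w \in \Ht^1(U)$. In the Neumann case the trace condition $n_i a_{ij}\da_j u = 0$ on $\dU$, tested against arbitrary $v \in \H^1(U)$ via Lemma \ref{tracelem}, forces the $H^{1-\alpha}(\dU)$ trace of $\rho^{1-\alpha} w$ to vanish, giving $w \in \Ht^1_0(U)$. Summing the bounds over the finitely many charts yields the estimate \eq{EL2}.

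The main obstacle will be the bookkeeping of commutators in the tangential step: when $D_h$ or $\da_j$ is moved past the coefficients $a_{ij}$ and the cutoff $\zeta$, error terms of schematic form $\rho^{-1}(\partial\rho)(\cdot)$ arise which a priori do not sit in $\L^2$. These must either be recognized as $\da_i$ applied to an $\L^2$ object, or absorbed into the Gårding left-hand side using the boundary trace bound \eq{trest} with a small parameter $\delta$. A secondary subtlety in the Neumann case is ensuring that the cutoff $\zeta$ is tangential in nature — which is compatible with $D_h$ being tangential — so that $v = -D_{-h}(\zeta^2 D_h u)$ remains an admissible test function in $\H^1(U)$ without the imposition of any zero trace.
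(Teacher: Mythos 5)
Your proposal is correct, and it is a genuinely different route from the paper's. The paper does not apply difference quotients to the weak solution at all. Instead it first observes that the given weak solution of $\mathcal{L}u=f$ is also the unique (Lax--Milgram, Lemma~\ref{ELLem2}) weak solution of the coercive problem $\mathcal{L}u+\mu u=\tilde f$ with $\tilde f=f+\mu u$; for $\tilde f\in C^\infty_c(U)$ it solves the corresponding problems on the truncated domains $V_k=\{\rho>a_k\}$, where the operator is uniformly elliptic in the classical sense and the solution $u^k$ is smooth up to $\partial V_k$; it then derives a uniform-in-$k$ bound on $\norm{u^k}{\H^2(V_k)}$ by testing directly against $\partial_A(\zeta^2\partial_A u^k)$ in a boundary chart (legitimate because $u^k$ is smooth), recovering the normal flux $a_{xi}\da_i u^k$ from the rearranged equation exactly as you do; finally it passes to a weak $\H^2$ limit via the compactness machinery of Lemma~\ref{exlem3} and relaxes the restriction on $\tilde f$ by density. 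Your version tests with $v=-D_{-h}(\zeta^2 D_h u)$ on all of $U$, which avoids the Lax--Milgram shift, the approximation of the forcing, and the $V_k$ exhaustion entirely; the price is the commutator bookkeeping you correctly flag, together with the key observation that a tangential difference quotient in a $\rho$-flattening chart commutes exactly with multiplication by $\rho^{\pm\alpha}$. Both approaches do essentially the same hard analysis in the tangential step; the paper's scaffolding is heavier but is structure it needs anyway for the hyperbolic existence proof, whereas yours is leaner and more self-contained. One point worth stating explicitly in a write-up: the identification $T^{(A)}_i\da_i u=T^{(A)}_i\partial_i u$ holds only in the collar $\rho<\tilde\epsilon$ where $T^{(A)}\cdot\nabla\rho=0$; away from the collar the two expressions differ, but there $\rho$ is bounded below so the discrepancy is a bounded multiple of $u$ and is harmless.
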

We split the result into several Lemmas
\begin{Lemma} \label{ELLem1}
There exist constants $C_1, C_2$ and $\mu_0 \geq 0$ such that
\begin{enumerate}[(i)]
\item $\abs{B[u,v]} \leq C_1 \norm{u}{\H^1(U)}^2 $
\item $ C_2 \norm{u}{\H^1(U)}^2 \leq B[u,u] +\mu_0 \norm{u}{\L^2(U)}^2$
\end{enumerate}
\end{Lemma}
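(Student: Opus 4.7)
The statement is a Gårding-type inequality for the bilinear form $B$ on the twisted Sobolev space $\H^1(U)$. (I read the right-hand side of (i) as the standard bilinear bound $C_1\norm{u}{\H^1(U)}\norm{v}{\H^1(U)}$; the displayed exponent appears to be a typo.) The strategy is completely parallel to the proof of the classical Gårding inequality for second-order elliptic operators in divergence form, the only new ingredient being that all integrations are against the weighted measure $\rho\, dx$ and the derivatives are twisted derivatives $\da_i$. Because the coefficients $a_{ij}, b_i, c$ lie in $C^\infty(\Ub_T)$ and $\Ub$ is compact, they are uniformly bounded in $(x,t)$.

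For (i), I would apply Cauchy--Schwarz to each of the three summands in
\be
B[u,v] = \int_U \left[a_{ij}(\da_i u)(\da_j v) + b_i (\da_i u) v + c u v \right]\rho\, dx,
\ee
with respect to the measure $\rho\, dx$. The principal term is bounded by $(\sup|a_{ij}|) \sum_{i,j}\norm{\da_i u}{\L^2(U)}\norm{\da_j v}{\L^2(U)}$, the drift term by $(\sup|b_i|) \sum_i \norm{\da_i u}{\L^2(U)}\norm{v}{\L^2(U)}$, and the zero-order term by $(\sup|c|)\norm{u}{\L^2(U)}\norm{v}{\L^2(U)}$. Each of these factors is dominated by the corresponding $\H^1(U)$ norm by definition of \eqref{WP10}, which yields the boundedness estimate.

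For (ii), I would start from the pointwise uniform ellipticity condition \eqref{WP5} with $\xi^i = \da_i u(x)$ and integrate against $\rho\, dx$ to obtain
\be
\int_U a_{ij}(\da_i u)(\da_j u)\, \rho\, dx \;\geq\; \theta \sum_{i=1}^N \norm{\da_i u}{\L^2(U)}^2.
\ee
The lower-order terms are then controlled via Young's inequality: for any $\epsilon > 0$,
\be
\left|\int_U b_i (\da_i u) u\, \rho\, dx\right| \leq \epsilon \sum_i \norm{\da_i u}{\L^2(U)}^2 + C_\epsilon \norm{u}{\L^2(U)}^2,
\ee
and $|\int_U c u^2 \rho\, dx| \leq (\sup|c|)\norm{u}{\L^2(U)}^2$. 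Choosing $\epsilon = \theta/2$ absorbs the bad twisted-derivative contribution into the principal term, leaving
\be
B[u,u] \geq \tfrac{\theta}{2} \sum_i \norm{\da_i u}{\L^2(U)}^2 - \left(C_{\theta/2} + \sup|c|\right)\norm{u}{\L^2(U)}^2.
\ee
Adding a sufficiently large multiple $\mu_0\norm{u}{\L^2(U)}^2$ to recover the full $\H^1(U)$ norm on the left-hand side delivers (ii) with $C_2 = \min(\theta/2,\, 1)$.

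There is no real obstacle here beyond keeping track of constants; the proof goes through verbatim from the classical case because the twisted derivatives $\da_i$ behave formally like ordinary partial derivatives and the weight $\rho$ appears symmetrically in the measure. The only subtlety worth flagging is that uniform boundedness of the coefficients uses the smooth extendibility across $\dU$ assumed in the definitions, so that $\sup_{\Ub_T}|a_{ij}|$ etc.\ are finite; once this is in hand the Gårding argument is standard.
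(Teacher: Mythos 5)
Your proof is correct and matches the paper's approach: the paper disposes of this lemma with the single remark that it is ``a standard manipulation, making use of the uniform ellipticity of $a_{ij}$,'' and what you have written out is precisely that standard G\r{a}rding argument (Cauchy--Schwarz plus boundedness of the smooth coefficients on the compact $\Ub_T$ for (i); uniform ellipticity, Young's inequality, and absorption for (ii)), correctly adapted to the weighted measure $\rho\,dx$ and the twisted derivatives. You are also right that the exponent in (i) is a typo for $\norm{u}{\H^1(U)}\norm{v}{\H^1(U)}$.
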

\begin{proof} This is a standard manipulation, making use of the uniform ellipticity of $a_{ij}$.
\end{proof}
\begin{Lemma}\label{ELLem2}
There exists $\mu_0\in \mathbb{R}$ such that for all $\mu > \mu_0$, $f \in \L^2(U)$ the equation
\be
\mathcal{L}u + \mu u = f, \quad \textrm{ on } U
\ee
with either Dirichlet or Neumann boundary conditions, has a unique weak solution satisfying
\be
\norm{u}{\H^1(U)} \leq C \norm{f}{\L^2(U)}.
\ee
\end{Lemma}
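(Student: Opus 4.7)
The plan is to obtain the result as a direct application of the Lax--Milgram theorem to the shifted bilinear form
\be
B_\mu[u,v] := B[u,v] + \mu \ip{u}{v}{\L^2(U)},
\ee
viewed on the Hilbert space $X = \H^1_0(U)$ in the Dirichlet case and $X = \H^1(U)$ in the Neumann case. Thus the choice $\mu_0$ will be precisely the constant appearing in Lemma \ref{ELLem1}(ii).

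First I would check the hypotheses of Lax--Milgram. Boundedness of $B_\mu$ on $X$ is immediate from Lemma \ref{ELLem1}(i) together with the trivial bound $\abs{\ip{u}{v}{\L^2(U)}} \leq \norm{u}{\L^2(U)}\norm{v}{\L^2(U)} \leq \norm{u}{\H^1(U)}\norm{v}{\H^1(U)}$, giving $\abs{B_\mu[u,v]} \leq (C_1 + \mu)\norm{u}{\H^1(U)}\norm{v}{\H^1(U)}$. For coercivity, Lemma \ref{ELLem1}(ii) gives
\be
B_\mu[u,u] = B[u,u] + \mu \norm{u}{\L^2(U)}^2 \geq C_2 \norm{u}{\H^1(U)}^2 + (\mu - \mu_0)\norm{u}{\L^2(U)}^2,
\ee
which is bounded below by $C_2 \norm{u}{\H^1(U)}^2$ as soon as $\mu > \mu_0$. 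The linear functional $\ell(v) = \ip{f}{v}{\L^2(U)}$ is bounded on $X$ because $\abs{\ell(v)} \leq \norm{f}{\L^2(U)} \norm{v}{\L^2(U)} \leq \norm{f}{\L^2(U)} \norm{v}{\H^1(U)}$, so $\L^2(U)$ embeds continuously into both $\H^1_0(U)^*$ and $\H^1(U)^*$. Lax--Milgram then produces a unique $u \in X$ with $B_\mu[u,v] = \ip{f}{v}{\L^2(U)}$ for all $v \in X$, which is exactly the definition of a weak solution to $\mathcal{L}u + \mu u = f$ with the appropriate boundary condition.

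For the estimate, testing the weak equation against $v = u$ and using the coercivity bound together with Cauchy--Schwarz yields
\be
C_2 \norm{u}{\H^1(U)}^2 \leq B_\mu[u,u] = \ip{f}{u}{\L^2(U)} \leq \norm{f}{\L^2(U)} \norm{u}{\L^2(U)} \leq \norm{f}{\L^2(U)} \norm{u}{\H^1(U)},
\ee
so dividing through by $\norm{u}{\H^1(U)}$ gives the required bound $\norm{u}{\H^1(U)} \leq C \norm{f}{\L^2(U)}$.

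There is no genuine obstacle here: the only thing to keep in mind is that the Neumann boundary condition $n_i a_{ij} \da_j u = 0$ is encoded automatically in the weak formulation through the choice $X = \H^1(U)$ rather than $\H^1_0(U)$ (no boundary term is imposed on the test functions), so Lax--Milgram handles both cases uniformly once the right function space is fixed. The mild subtlety worth flagging is that coercivity uses the full $\H^1$ norm rather than just the gradient part, which is why one needs the shift by $\mu$ and cannot in general take $\mu_0 = 0$.
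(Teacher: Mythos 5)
Your proposal is correct and is essentially identical to the paper's argument, which also applies Lax--Milgram to the shifted form $B[u,v]+\mu\ip{u}{v}{\L^2(U)}$ on $\H^1_0(U)$ or $\H^1(U)$ as appropriate. You simply spell out the boundedness and coercivity checks and the derivation of the a priori estimate, which the paper leaves implicit.
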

\begin{proof}
Because of the estimates in the previous lemma, we may apply the Lax-Milgram theorem to $B_U[u, v]+\mu \ip{u}{v}{\L^2(U)}$ thought of as a bilinear form on either $\H_0^1(U)$ or $\H^1(U)$ for Dirichlet, Neumann conditions respectively.
\end{proof}
\begin{Lemma} \label{ELLem3}
Suppose $f \in C^\infty_c(U)$, and $k$ is sufficiently large that $\textrm{supp }f \subset V_k$. Then for $\mu > \mu_0$
\ben{EL3}
\mathcal{L}u + \mu u = f,  \quad \textrm{ on } V_k
\een
has a unique solution in $C^\infty(\overline{V_k})$. Furthermore, this solution obeys the estimate
\ben{EL4}
\norm{u}{\H^2(V_k)} \leq C \norm{f}{\L^2(U)}
\een
with $C$ uniform in $k$.
\end{Lemma}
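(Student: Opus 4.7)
The plan has three stages: solve the Dirichlet or Neumann problem on $V_k$ by classical methods to obtain existence, uniqueness and smoothness; extract the $\H^1$-bound from coercivity; and then establish an \emph{a priori} $\H^2$-estimate with constant uniform in $k$ by working throughout in the twisted framework.

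For existence: since $f$ is supported in $V_k$ and $\rho \geq a_k > 0$ on $\overline{V_k}$, expanding $\da$ in ordinary derivatives shows that $\mathcal L$ is uniformly elliptic with smooth bounded coefficients on $\overline{V_k}$ and that $\partial V_k = \{\rho = a_k\}$ is a smooth compact hypersurface. An application of Lemma~\ref{ELLem2} restricted to $V_k$ (equivalently, Lax--Milgram on $\H^1_0(V_k)$ or $\H^1(V_k)$) produces the unique weak solution, which classical elliptic regularity for smooth-coefficient problems on smooth domains then lifts to $u \in C^\infty(\overline{V_k})$. The $\H^1$-bound $\norm{u}{\H^1(V_k)} \leq C\norm{f}{\L^2(U)}$ follows by testing the weak equation against $u$ and invoking Lemma~\ref{ELLem1} for $\mu > \mu_0$; since the constants in that lemma depend only on the coefficients of $\mathcal L$ over $\Ub$, this bound is already uniform in $k$.

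The tangential part of the $\H^2$-estimate comes from commuting a tangential vector field through the equation. For each $T^{(A)}$, the fact that $T^{(A)}$ annihilates $\rho$ near $\dU$ implies that $T^{(A)}_i\da_i$ commutes with $\rho^{\pm \alpha}$ there, and a direct computation shows $[\mathcal L, T^{(A)}_i\da_i]$ is a second-order operator expressible purely in $\da,\dad$ whose action on $u$ is controlled in $\L^2(V_k)$ by $\norm{u}{\H^1(V_k)}$. Applying the coercivity of Lemma~\ref{ELLem1} to the resulting identity for $w = T^{(A)}_j\da_j u$---using the smoothness from the first step to justify integration by parts up to $\partial V_k$---and checking that the boundary terms on $\partial V_k$ vanish (for Dirichlet, $u=0$ forces $T^{(A)}_i u = 0$ there; for Neumann, they cancel against $n_i a_{ij}\da_j u = 0$) yields $\norm{T^{(A)}_j\da_j u}{\H^1(V_k)} \leq C\norm{f}{\L^2(U)}$ with $C$ uniform in $k$. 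The normal part is then read off from the equation itself: rewriting
\be
\dad_i(a_{ij} \da_j u) = b_i \da_i u + (c+\mu)u - f \in \L^2(V_k),
\ee
decomposing $a_{ij}\da_j u$ along the frame $\{T^{(A)}, N^{(B)}\}$, and subtracting the tangential contributions controlled in the previous step, one recovers $\norm{N^{(B)}_i\da_i u}{\Ht^1(V_k)} \leq C\norm{f}{\L^2(U)}$ using the uniform ellipticity of $a_{ij}$ and the spanning property of the frame.

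The principal obstacle is uniformity in $k$: every constant in the tangential step must depend only on quantities defined over $\Ub$, never on $a_k$. This is precisely why one organizes the calculation around $\da,\dad$ and the $\rho$-adapted frame $\{T^{(A)}, N^{(B)}\}$---one never divides by $\rho$, so no quantity blows up as $k \to \infty$. Careful bookkeeping of the commutators and of the boundary integrals on $\partial V_k$ in this tailored calculus is the real content of the proof.
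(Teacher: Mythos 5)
Your proposal follows essentially the same route as the paper's proof: weak existence via Lemma~\ref{ELLem2} (Lax--Milgram on $V_k$), classical interior/boundary elliptic regularity for smoothness, an $\H^1$ bound uniform in $k$ from Lemma~\ref{ELLem1}, then a tangential $\H^2$ estimate followed by reading the normal estimate off from the equation. The paper carries out the tangential step in local boundary coordinates, testing $\dad_i(a_{ij}\da_j u)=\tilde f$ against $\partial_A(\zeta^2\partial_A u)$ with a cut-off $\zeta$ and integrating by parts; this is precisely the Caccioppoli form of your commutator argument, and the uniformity-in-$k$ reasoning (never dividing by $\rho$, organizing everything in $\da,\dad$) is the same in both.

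There is, however, one overclaim worth flagging. You assert that $[\mathcal L, T^{(A)}_i\da_i]$ is a second-order operator ``whose action on $u$ is controlled in $\L^2(V_k)$ by $\norm{u}{\H^1(V_k)}$.'' That cannot be right as stated: the commutator genuinely contains second-order pieces (e.g.\ $\dad_i\bigl((T^{(A)}a_{ij})\da_j u\bigr)$ after using that $T^{(A)}$ kills $\rho$), so its $\L^2$ norm is controlled only by $\norm{u}{\H^2(V_k)}$, which is the quantity you are trying to estimate. What is true, and what actually drives the argument, is that the \emph{pairing} of the commutator against $\zeta^2 T^{(A)}_i\da_i u$ can be integrated by parts once more, leaving terms controlled by $\norm{u}{\H^1(V_k)}\cdot\norm{\zeta T^{(A)}\da u}{\H^1(V_k)}$, which are then absorbed via a small-$\delta$ Cauchy--Schwarz into the coercive term. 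The paper's choice of test function $\partial_A(\zeta^2\partial_A u)$ performs this absorption automatically; your ``justify integration by parts up to $\partial V_k$'' hints at the same mechanism, but the statement about the commutator being $\L^2$-bounded by $\H^1$ should be corrected, and you should also make explicit the cut-off functions and partition of unity that localize the frame $\{T^{(A)},N^{(B)}\}$ to the boundary collar before combining with the interior estimate.
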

\begin{proof}
We can apply Lemmas \ref{ELLem1}, \ref{ELLem2} on $V_k$ to deduce the existence of a unique weak solution to \eq{EL3} with the appropriate boundary conditions, satisfying
\be
\norm{u}{\H^1(V_k)} \leq C \norm{f}{\L^2(U)}.
\ee
where $C$ is independent of $k$. On $V_k$, the operator $\mathcal{L}$ is uniformly elliptic in the standard sense, so classical elliptic estimates imply that $u$ is smooth. We have that
\be
\dad_i(a_{ij} \da_j u) = \tilde{f}
\ee
where $\norm{\tilde{f}}{\L^2(V_k)} \leq C \norm{f}{\L^2(U)}$. Focusing on a coordinate patch, we can work assuming $V_k=\{(x, x^a): x>\epsilon_k,\ x^2+x^a x^a < \tilde\epsilon\}$, $\rho=x$ and assume that $\zeta$ is a smooth cut-off function on $\{(x, x^a): x\geq0,\ x^2+x^a x^a < \tilde\epsilon\}$ which vanishes on the curved part of the boundary. We note that $\da_a = \partial_a$ and that $\da_i \partial_a =\partial_a \da_i$. Now consider the following integral, where the index $A\in \{2, 3, \ldots, N\}$ is a \emph{fixed} index, with no summation over it.
\bea\label{EL5}
I&:=&\int_{V_k} [\dad_i(a_{ij} \da_j u)] [ \partial_A(\zeta^2 \partial_A u)] x dx \\\nonumber &\leq& \delta \norm{\zeta^2 \partial_A \partial_A u}{\L^2(V)}+C\left (\norm{f}{\L^2(V)}^2+\norm{u}{\H^1(V)}^2\right ).
\eea
By choosing $C$ large enough, we may take $\delta$ to be arbitrarily small.
Now, since $\zeta$ vanishes on the curved part of $\partial V$ and either $u$ or $n_i a_{ij} \da_j u$ vanishes on the flat part, we may integrate by parts twice to find
\bea
I &=& \int_{V_k} [\partial_A (a_{ij} \da_j u)][\da_i(\zeta^2 \partial_A u)] x dx\nonumber \\ &\geq& \int_V \zeta^2 a_{ij} (\da_i \partial_A u)(\da_j \partial_A u) x dx  - C\norm{u}{\H^1(V)}^2 \label{EL6}\\ &\geq& \theta \norm{\zeta^2 \da_i \partial_A u}{\L^2(V)} - C\norm{u}{\H^1(V)}^2, \nonumber 
\eea
where in the last line we have used the uniform ellipticity of $a_{ij}$. The constant $C$ here depends on the functions $a_{ij}, \zeta$, which are uniformly bounded in $k$. Now taking \eq{EL4}, \eq{EL5}, \eq{EL6} together, and choosing $\delta$ sufficiently small, we have that
\ben{EL6.5}
\norm{\zeta^2 \partial_A u}{\H^1(V_k)} \leq C \norm{f}{\L^2(U)}
\een 
with $C$ uniform in $k$, so we have estimated the tangential derivatives. Returning now to the equation, we can write
\be
\dad_x (a_{xi}\da_i u) = \tilde{f} - \partial_a (a_{ai}\da_i u) 
\ee
Multiplying by $\zeta^2$, we readily estimate
\ben{EL7}
\norm{\zeta^2 a_{xi} \da_i u}{\tilde{\H}^1(V_k)} \leq C  \norm{f}{\L^2(U)}.
\een
Combining these estimates with a partition of unity subordinate to a set of coordinate patches covering the boundary and an interior estimate which follows from standard elliptic theory, we're done.
\end{proof}
\begin{proof}[Proof of Theorem \ref{ellipticest}]
First we note that if $u$ is a weak solution of \eq{EL1} with either Dirichlet or Neumann boundary conditions, then $u$ is the unique weak solution of
\ben{EL8}
\mathcal{L}u + \mu u = \tilde{f}
\een
with $\tilde{f}=f+\mu u$ for sufficiently large $\mu$. Suppose $\tilde{f}\in C^\infty_c(U)$. Then we can solve the finite problems on $V_k$, with the estimate
\be
\norm{u^k}{\H^2(V_k)}\leq C \norm{\tilde{f}}{\L^2(U)}.
\ee
with $C$ uniform in $k$. We deduce the existence of a subsequence which tends weakly to $u$ in $\H^2$, as in the remark after Lemma \ref{exlem3}, so we find that
\be
\norm{u}{\H^2(U)} \leq C \norm{\tilde{f}}{\L^2(U)}
\ee
now, we may relax the condition that $\tilde f \in C^\infty_c(U)$ since such functions are dense in $\L^2(U)$. Finally, replacing $\tilde{f}$, we deduce
\be
\norm{u}{\H^2(V_k)} \leq C\left(  \norm{f}{\L^2(U)}+\norm{u}{\L^2(U)}\right).
\ee
Finally, note that for Dirichlet conditions we have $u^k \in \H^1_0(V_k)$ and for Neumann we have $n_i a_{ij} \da_j u^k \in \H^1_0(V_k)$, so that in the limit $u \in \H^1_0(U)$ or $n_i a_{ij} \da_j u \in \H^1_0(U)$ respectively.
\end{proof}

We would like to also prove elliptic estimates at a higher level of regularity than $H^2$. Unfortunately, the behaviour of the solutions near the boundary doesn't lend itself to a description in terms of a global Sobolev space. Accordingly then, we first consider interior regularity.
\begin{Theorem}
Suppose $u$ is a weak solution of either the Dirichlet or Neumann problem \eq{EL1} and suppose that in fact $f \in \L^2(U)\cap H^m_{\textrm{loc.}}(U)$. Then $u\in \H^2(U) \cap H^{m+2}_{\textrm{loc.}}(U)$
\end{Theorem}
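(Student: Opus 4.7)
The strategy is to reduce to classical interior elliptic regularity, because the twisting $\da_i = \rho^{-\alpha}\partial_i(\rho^\alpha \cdot)$ is a boundary phenomenon: on any open set $W \cc U$ the weight $\rho$ is smooth and bounded below by some $\rho_0>0$, so $\rho^{\pm\alpha}$ are smooth and bounded, the measure $\rho\,dx$ is equivalent to Lebesgue measure, and $\L^2(W) \simeq L^2(W)$, $\H^k(W) \simeq H^k(W)$ as topological vector spaces.

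Fix $V \cc W \cc U$. Expanding via the Leibniz rule gives
\[
\dad_i(a_{ij} \da_j u) = \partial_i(a_{ij} \partial_j u) + (\text{first-order terms with smooth coefficients on } W),
\]
so on $W$ the equation $\mathcal{L} u = f$ rewrites as $\tilde{\mathcal{L}} u = f$, where $\tilde{\mathcal{L}}$ is a standard second-order operator with smooth coefficients, uniformly elliptic thanks to \eq{WP5}. For any $v \in C^\infty_c(W)$, the twisted pairing $\int_W a_{ij} \da_j u \, \da_i v \,\rho\,dx$ and the classical one $\int_W a_{ij} \partial_j u \, \partial_i v \,dx$ differ only by integrals of lower-order expressions with smooth bounded coefficients. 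Since $C^\infty_c(W)$ is dense in $H^1_0(W)$, the weak solution $u$ of $\mathcal{L} u = f$ on $U$ restricts to a classical weak solution of $\tilde{\mathcal{L}} u = f$ on $W$. From Theorem \ref{ellipticest} we also have $u \in \H^2(U)$, hence $u \in H^2(W)$, providing the base case for bootstrapping.

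The standard interior elliptic regularity theorem (Evans \cite{Evans}, Theorem 6.3.1 iterated as in Theorem 6.3.2) now applies: since the coefficients of $\tilde{\mathcal{L}}$ are smooth, $\tilde{\mathcal{L}}$ is uniformly elliptic, and $f \in H^m(W)$, the standard difference-quotient argument gives $u \in H^{m+2}(V)$ with
\[
\norm{u}{H^{m+2}(V)} \leq C\left(\norm{f}{H^m(W)} + \norm{u}{L^2(W)}\right).
\]
Since $V \cc U$ was arbitrary, $u \in H^{m+2}_{\textrm{loc.}}(U)$. No substantial obstacle arises, since the argument stays entirely in the interior where all twisted objects are smooth and invertible; the only mildly delicate step is the translation of the twisted weak formulation into the classical one on $W$, which is immediate from the boundedness and smoothness of the weights on $\overline W$.
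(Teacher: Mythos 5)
Your proof is correct and takes essentially the same route as the paper, whose proof is stated in one line: the result follows from standard interior elliptic estimates together with the fact that $\mathcal{L}$ is uniformly elliptic in the classical sense on any $V \cc U$. Your write-up simply fills in the (correct) details: away from $\partial U$ the weight $\rho$ is bounded above and below, so the twisted spaces coincide with the ordinary Sobolev spaces, $\mathcal{L}$ becomes a standard smooth-coefficient uniformly elliptic operator, and interior regularity gives $u \in H^{m+2}_{\textrm{loc.}}(U)$, while the global $u \in \H^2(U)$ comes from Theorem \ref{ellipticest}.
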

\begin{proof}
This follows from standard elliptic estimates and the fact that $\mathcal{L}$ is uniformly elliptic on any $V \cc U$.
\end{proof}

To say more about the behaviour near the boundary, we shall once again need to distinguish directions tangent and normal to the boundary. It's convenient to introduce the space $\H_T^m(U)$, consisting of all functions $u$ such that
\be
T^{(1)} T^{(2)}\ldots T^{(l)} u \in \L^2(U)
\ee
for any $l\leq m$ smooth vector fields $T^{(i)}$ tangent to the boundary $\dU$. To capture the behaviour normal to the boundary, we work with an asymptotic expansion.

\begin{Theorem}\label{asympt}
Suppose $u$ is a weak solution of either the Dirichlet or Neumann problem \eq{EL1} where  $f \in \H_T^m(U)\cap H^m_{\textrm{loc.}}(U)$, where $m \geq 0$. Suppose further that if $m \geq 1$ near $\dU$ we have the following expansion for $f$:
\bea
f &=& \rho^{\alpha-1} \left[ f^+_0 + \rho f^+_1 + \ldots + \rho^{m-1} f^+_{m-1} + \O{\rho^{m-\alpha}})\right ] \\
&& \quad + \rho^{-\alpha} \left[ f^-_0 + \rho f^-_1 + \ldots + \rho^{m-1} f^-_{m-1} + \O{\rho^{m-1+\alpha}})\right ] \nonumber
\eea
where
\be
f^\pm_i \in H^{m-1-i}(\dU)
\ee
and $\norm{\rho^{-a} \O{\rho^a}}{L^2(\dU)}$ is bounded as $\rho \to 0$. Then $u\in \H^2(U) \cap \H^{m+2}_T (U)\cap H^{m+2}_{\textrm{loc.}}(U)$ has the following expansion for $m \geq 0$:
\bea\label{EL8.1}
u &=& \rho^{\alpha} \left[ u^+_1 + \rho u^+_2 + \ldots + \rho^{m} u^+_{m+1} + \O{\rho^{m+1-\alpha}})\right ] \\
&& \quad + \rho^{-\alpha} \left[ u^-_0 + \rho u^-_1 + \ldots + \rho^{m+1} u^-_{m+1} + \O{\rho^{m+1+\alpha}})\right ] \nonumber
\eea
where
\be
u^\pm_i \in H^{m+1-i}(\dU).
\ee
Furthermore if $u$ satisfies the Dirichlet conditions, $u^-_0=u^-_1=0$, while if $u$ satisfies the Neumann conditions $u^+_1=0$.
\end{Theorem}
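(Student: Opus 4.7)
The plan splits into two main stages: (I) boost tangential regularity by commuting the equation with tangential vector fields, and (II) construct the normal asymptotic expansion (\ref{EL8.1}) by iterative matching guided by the indicial analysis of the model normal operator. The whole argument runs by induction on $m$, with the base case $m=0$ provided by Theorem \ref{ellipticest}.

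For the tangential step, given a smooth vector field $T$ tangent to $\dU$, apply $T$ to $\mathcal{L}u=f$ to obtain
\be
\mathcal{L}(Tu)=Tf-[T,\mathcal{L}]u.
\ee
Since $T$ is tangent to $\dU$ (so $T\rho$ vanishes to the required order) and $a_{ij}$ is $t$-independent on $\dU$, the commutator $[T,\mathcal{L}]$ is a second-order operator of the same twisted form as $\mathcal{L}$, and $T$ preserves both the Dirichlet and the Neumann boundary conditions. The inductive hypothesis applied to the new equation for $Tu$ yields $Tu\in\H^{m+1}_T(U)\cap \H^2(U)$, and iterating over a finite spanning set of tangential vector fields gives $u\in\H^{m+2}_T(U)$.

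For the normal expansion, I work in a collar $[0,\tilde\epsilon)\times\dU$ with $x=\rho$ and split $\mathcal{L}=\mathcal{L}_0+\mathcal{L}'$, where $\mathcal{L}_0=-\dad_x\bigl(a_{xx}(0,\cdot)\,\da_x\,\cdot\bigr)$ is the frozen-coefficient normal principal part. A direct computation produces the indicial relation
\be
\mathcal{L}_0\bigl(x^\beta v(x^a)\bigr)=-a_{xx}(0,\cdot)\,(\beta^2-\alpha^2)\,x^{\beta-2}\,v,
\ee
so the homogeneous roots are exactly $\beta=\pm\alpha$, matching the two branches of (\ref{EL8.1}). The construction proceeds as follows: (a) extract the free data $u_1^+$ and $u_0^-$ from traces of $u$ via Lemma \ref{tracelem} together with the tangential regularity from Part~I, giving $u_0^-,u_1^+\in H^{m+1}(\dU)$; (b) determine successive $u_j^\pm$ by matching powers of $x$ in the formal expansion of $\mathcal{L}u=f$, which gives algebraic equations of the form $k_\pm(j)\,a_{xx}(0,\cdot)\,u_j^\pm=G_j^\pm$, where $k_+(j)=(j-1)(j+2\alpha-1)$, $k_-(j)=j(j-2\alpha)$, and $G_j^\pm$ is a polynomial in tangential derivatives of previously computed $u_k^\pm$ together with boundary traces of $f_k^\pm$; since $\alpha\in(0,1)$ the prefactors $k_\pm(j)$ are nonzero for all relevant $j$ (generically in $\alpha$), giving unique $u_j^\pm\in H^{m+1-j}(\dU)$; (c) set $u=u_{\mathrm{app}}^{(m)}+r$ with $u_{\mathrm{app}}^{(m)}$ the partial sum, observe that the equation for $r$ has right-hand side of the improved order prescribed by the construction, and use a weighted trace/energy estimate extending Theorem \ref{ellipticest} to control $r$ by $\O{\rho^{m+1\mp\alpha}}$ in $L^2(\dU)$.

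The boundary conditions then restrict the free branches. For Dirichlet, $u\in\H^1_0(U)$ combined with Lemma \ref{H1lem}(iii) forces $u_0^-=0$; the balance at level $x^{-\alpha-1}$ then forces $u_1^-=0$ as well, since the only source for $u_1^-$ comes from tangential derivatives of $u_0^-$ via the mixed coefficient $a_{xA}$. For Neumann, computing $n_i a_{ij}\,\da_j u$ at $\dU$ from the expansion yields a nonzero multiple of $a_{xx}(0,\cdot)\,u_1^+$, so the boundary condition gives $u_1^+=0$. The main obstacle will be the careful bookkeeping in step (b): subleading corrections to $a_{ij}$ together with the lower-order terms $b_i\da_i u+cu$ couple the two branches at higher orders and contribute to $G_j^\pm$, and one must verify that the indicial prefactors $k_\pm(j)$ really govern solvability in the presence of these couplings. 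Producing the sharp remainder estimate in (c) with the weights $\rho^{m+1\mp\alpha}$ also requires a weighted elliptic/trace estimate up to the boundary that refines Theorem \ref{ellipticest}, and this is where the most delicate part of the analysis will lie.
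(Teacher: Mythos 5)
Your tangential step matches the paper's: both proofs commute $\mathcal{L}u=f$ with vector fields tangent to $\dU$ and invoke the induction hypothesis, and your observation that $[T,\mathcal{L}]$ retains the twisted second-order form (because $T\rho$ vanishes on $\dU$ and $a_{ij}$ is $t$-independent there) is the key fact that makes this work. For the normal direction, however, you take a genuinely different route. The paper re-arranges the equation to isolate $\dad_x\!\left(a_{xx}\da_x u\right)$, treats everything else as a known right-hand side by induction, and then literally integrates the twisted normal operator twice; the two integration constants supply the free data $u_1^+$ and $u_0^-$ automatically, and the rest of the expansion emerges from term-by-term integration of the source. You instead run a formal Fuchsian/indicial analysis: freeze the normal coefficient, identify the indicial roots $\pm\alpha$, solve order by order by dividing by the prefactors $k_\pm(j)$, then control the error with a weighted remainder estimate. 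These are the two classical ways of organising the expansion at a regular singular point; both are correct in principle and produce the same coefficients, and your version is more explicit about which coefficients are forced and which are free, at the cost of having to prove the sharp remainder estimate separately (the paper's double integration produces the remainder estimate in the same stroke as the expansion).

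There is one genuine gap you flag but do not close, and it is not optional here: the parenthetical ``generically in $\alpha$'' cannot stand, because the theorem is claimed for all $\alpha\in(0,1)$, and $\alpha=\tfrac12$ — where $k_-(1)=1-2\alpha$ vanishes — is the conformally coupled case the paper explicitly includes. At $\alpha=\tfrac12$ your algebraic equation for $u_1^-$ degenerates to the compatibility condition $0=G_1^-$, and you must \emph{prove} that $G_1^-$ itself carries a factor of $(1-2\alpha)$. A short computation shows the only source of a term at order $x^{-\alpha-1}$ is $\dad_x\!\left(a_{xA}\partial_A u\right)$ acting on the $u_0^-\rho^{-\alpha}$ piece, which yields precisely $(1-2\alpha)\,a_{xA}\,\partial_A u_0^-\,x^{-\alpha-1}$, so the factor is there — but this must be verified explicitly, and your Dirichlet-specific remark about $u_0^-=0$ does not cover the Neumann case, where $u_0^-\neq 0$ generically. (The paper's double integration hits the same resonance in disguise, as a potential $\log$ from $\int s^{-2\alpha+j-1}\,ds$ at $j=1$, $\alpha=\tfrac12$, so this is a feature of the problem, not of your method; either route needs the cancellation checked.) A smaller slip: you claim $u_1^+\in H^{m+1}(\dU)$, but the theorem gives only $u_1^+\in H^{m}(\dU)$; and extracting $u_1^+$ requires more than Lemma~\ref{tracelem}, which yields only $u_0^-$ — you need the trace of $\rho^{1-\alpha}\da_x u$ or equivalent, which the paper obtains from the $\H^2$ estimate via the expansion of $a_{xi}\da_i u$.
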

\begin{proof}
The proof is by induction. To establish the $m=0$ case, we apply Lemma \ref{H1lem} and Theorem \ref{ellipticest} to deduce that in a coordinate patch near the boundary
\bea\nonumber 
a_{xi} \da_iu &=& x^{-\alpha}[\O{x^\alpha}] + x^{\alpha-1}[c_+ + \O{x^{1-\alpha}}],\\ \label{EL8.2} \partial_au &=& x^{-\alpha}[c_- +\O{x^\alpha}] + x^{\alpha-1}[\O{x^{1-\alpha}}].
\eea
with $c_\pm \in L^2(\dU)$. We thus have that
\be
\da_x u = x^{-\alpha}[\tilde c_-+ \O{x^\alpha}] + x^{\alpha-1}[\tilde c_+ + \O{x^{1-\alpha}}],
\ee
and integrating this gives \eq{EL8.1} for $m=0$, with $u^+_1, u^-_0, u^-_1 \in L^2(\dU)$. Finally we note that the second identity of \eq{EL8.2} implies $u^-_0 \in H^1(\dU)$. In order to get the induction step, we first commute with a vector field tangent to the boundary, which establishes all but the highest order in $\rho$ of \eq{EL8.1} by the induction assumption. To get the highest order terms, we re-arrange the equation $\mathcal{L}u = f$ to give an equation for $\dad_x \da_x u$, making use of the induction assumptions and integrate twice. Taking care of the boundary conditions imposed shows that for Dirichlet conditions, we have $u^-_0=u^-_1=0$, while for Neumann $u^+_1=0$.
\end{proof}
Taking a little more care about the origin of terms in the series, we can easily show
\begin{Corollary}
\begin{enumerate}[(i)]
\item If $u$, $f$ satisfy the conditions for Theorem \ref{asympt} with Dirichlet boundary conditions and furthermore $f^-_i = 0$ for $0 \leq i \leq m-1$, then $u^-_i = 0$ for $0 \leq i \leq m+1$.
\item If $u$, $f$ satisfy the conditions for Theorem \ref{asympt} with Neumann boundary conditions and furthermore $f^+_i = 0$ for $0 \leq i \leq m-1$, then $u^+_i = 0$ for $1 \leq i \leq m+1$.
\end{enumerate}
\end{Corollary}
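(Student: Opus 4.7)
The plan is to extract the result by inspecting the recursion implicit in the proof of Theorem \ref{asympt}. The crucial structural fact is that $\mathcal{L}$ maps each of the two asymptotic branches into itself: since the coefficients $a_{ij}$, $b_i$, $c$ are smooth in $\rho$, their Taylor expansions about $\partial U$ only shift powers of $\rho$ by positive integers, and the indicial part of $\mathcal{L}$ employed in the proof of Theorem \ref{asympt} preserves each branch separately. Consequently, when one matches powers of $\rho$ in the identity $\mathcal{L}u = f$ order by order, the resulting recursions decouple into one that determines $u^-_k$ from $f^-_{k-2}$ and the lower-order $u^-_j$, and an analogous one for $u^+_k$.

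I will prove (i) by induction on $m$; part (ii) is almost word-for-word the same. The base case $m=0$ is exactly the last clause of Theorem \ref{asympt}: under Dirichlet conditions, $u^-_0 = u^-_1 = 0$. For the inductive step, assume the conclusion holds up through $m-1$ and fix $k$ with $2 \le k \le m+1$. Matching the coefficient of $\rho^{k-2-\alpha}$ on both sides of $\mathcal{L}u = f$, and reading off the indicial computation carried out in the proof of Theorem \ref{asympt}, yields a relation of the form
\[
 - k(k-2\alpha)\, a_{xx}|_{\partial U}\, u^-_k \; = \; f^-_{k-2} + \mathcal{R}_k\bigl(u^-_0, \ldots, u^-_{k-1}\bigr),
\]
where $\mathcal{R}_k$ is a linear expression in the previously determined coefficients whose coefficients are tangential differential operators (arising from the tangential part of $\mathcal{L}$ together with the subleading Taylor coefficients of $a_{ij}$, $b_i$, $c$). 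For $k \ge 2$ and $0 < \alpha < 1$ the indicial prefactor $k(k-2\alpha)$ is strictly positive, and $a_{xx}|_{\partial U}$ is bounded below by the uniform ellipticity condition \eqref{WP5}. The hypothesis forces $f^-_{k-2} = 0$ (since $k - 2 \le m-1$), and the induction hypothesis forces $\mathcal{R}_k = 0$; hence $u^-_k = 0$, closing the induction.

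For part (ii) the argument is identical with the two branches exchanged: the Neumann clause of Theorem \ref{asympt} supplies the base case $u^+_1 = 0$, and matching the coefficient of $\rho^{k-3+\alpha}$ gives the analogue
\[
 -(k-1)(k-1+2\alpha)\, a_{xx}|_{\partial U}\, u^+_k \; = \; f^+_{k-2} + \tilde{\mathcal{R}}_k\bigl(u^+_1, \ldots, u^+_{k-1}\bigr),
\]
whose prefactor is again nonzero for $k \ge 2$. The only genuine subtlety in the whole argument is the clean decoupling of the two branches throughout the recursion; this rests on the smoothness of $a_{ij}, b_i, c$ in $\rho$ (which rules out any fractional-power corrections mixing the branches) combined with the leading-order indicial analysis already performed in Theorem \ref{asympt}. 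Once this decoupling is granted, both parts of the corollary are immediate by induction.
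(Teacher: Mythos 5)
Your proof is correct and takes essentially the same route as the paper, which justifies the Corollary in one sentence by tracing the origin of terms in the recursion underlying the proof of Theorem \ref{asympt}; you simply make the indicial relation $-k(k-2\alpha)\,a_{xx}|_{\dU}\,u^-_k = f^-_{k-2}+\mathcal{R}_k$ (and its $+$-branch analogue) explicit. The one shared caveat, which neither you nor the paper addresses, is that the clean decoupling of the two branches is delicate at $\alpha=\tfrac12$ (where the indicial roots differ by an integer and the powers $\rho^{\alpha+j}$ and $\rho^{-\alpha+j'}$ can coincide), but since the statement of Theorem \ref{asympt} already implicitly presumes a well-defined splitting of the expansion, this is not a gap introduced by your argument.
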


\subsection{Higher regularity}

We define the higher regularity data spaces inductively as follows. We say $(u_0, u_1, \f) \in H^2_{\textrm{data, }D}$ if $u_0 \in \H^2(U), u_1 \in \H^1_0(U), \f \in H^1([0, T]; \L^2(U))$ with the product norm. In the Neumann case, $(u_0, u_1, \f) \in H^2_{\textrm{data, }N}$ if $u_0 \in \H^2(U), a_{ij}\da_j u_0 \in \H^1_0(U), u_1 \in \H^1(U), \f \in H^1([0, T]; \L^2(U))$. Next, we define
\bea
g_0 &=& u_0 \qquad g_1 = u_1 \nonumber \\
g_{i+2} &=& -\sum_{l=1}^i \left(\begin{array}{c} i \\ l \end{array} \right) \left(\mathcal{L}^{(l)} g_{i-l} \right)+ f^{(i)}|_{t=0}.
\eea
Here $\mathcal{L}^{(i)}$ is the second order operator given by differentiating the coefficients of $\mathcal{L}$ $i$ times with respect to $t$. For $m>2$, we say $(u_0, u_1, \f) \in H^m_{\textrm{data, }\dagger}$ if $(u_0, u_1, \f)  \in H^{m-1}_{\textrm{data, }\dagger}$, $\f^{(i)} \in L^2([0, T]; H^{m-i-1}_{\textrm{loc.}}(U))$ for $0 \leq i \leq m-1$ and $(g_{m-1}, g_m, \f^{(m-1)})\in H^1_{\textrm{data, }\dagger}$. Here as usual $\dagger$ stands for D or N as appropriate. We define the norms
\be
\norm{(u_0, u_1, \f)}{H^{m}_{\textrm{data, }\dagger}}^2 = \norm{(u_0, u_1, \f)}{H^{m-1}_{\textrm{data, }\dagger}}^2+\norm{(g_{m-1}, g_m, \f^{(m-1)})}{H^1_{\textrm{data, }\dagger}}^2
\ee
these spaces are chosen so that the relevant `compatibility conditions' hold. We may show that if $(u_0, u_1, \f) \in H^k_{\textrm{data, }\dagger}$, then $u_0 \in H^k_{\textrm{loc.}}(U)$, $u_1 \in H^{k-1}_{\textrm{loc.}}(U)$.

\begin{Theorem}
\begin{enumerate}[(i)]
\item Suppose $\u$ is a weak solution of the Dirichlet IBVP corresponding to data $(u_0, u_1, \f)\in H^1_\textrm{data, D}(U)$. Suppose in addition,
$(u_0, u_1, \f)\in H^2_\textrm{data, D}(U)$ then
\be
\begin{array}{l}
\u \in L^\infty(0, T; \H^2(U)), \quad \dot{\u}\in L^\infty(0, T; \H_0^1(U)),\\
\ddot{\u} \in L^\infty(0, T; \L^2(U)), \quad \dddot{\u} \in L^2(0, T; (\H^1_0(U))^*),
\end{array}
\ee
with the estimate
\bea\label{HR1}
&&\quad \esssup_{0\leq t\leq T}\, \left (\norm{\u(t)}{\H^2(U)} + \norm{\dot{\u}(t)}{\H^1(U)} + \norm{\ddot{\u}(t)}{\L^2(U)} \right)  \\
&&+\norm{\dddot{\u}}{L^2(0, T;(\H^1_0(U))^*)} \leq C\norm{(u_0, u_1, \f)}{H^2_\textrm{data, D}(U)}.\nonumber
\eea

\item Suppose $\u$ is a weak solution of the Neumann IBVP corresponding to data $(u_0, u_1, \f)\in H^1_\textrm{data, N}(U)$. Suppose in addition,
$(u_0, u_1, \f)\in H^2_\textrm{data, N}(U)$ then
\be
\begin{array}{l}
\u \in L^\infty(0, T; \H^2(U)), \quad \dot{\u}\in L^\infty(0, T; \H^1(U)),\\
\ddot{\u} \in L^\infty(0, T; \L^2(U)), \quad \dddot{\u} \in L^2(0, T; (\H^1(U))^*),
\end{array}
\ee
with the estimate
\bea\label{HR1.5}
&&\quad \esssup_{0\leq t\leq T}\, \left (\norm{\u(t)}{\H^2(U)} + \norm{\dot{\u}(t)}{\H^1(U)} + \norm{\ddot{\u}(t)}{\L^2(U)} \right)  \\
&&+\norm{\dddot{\u}}{L^2(0, T;(\H^1(U))^*)} \leq C\norm{(u_0, u_1, \f)}{H^2_\textrm{data, N}(U)}.\nonumber
\eea
Furthermore $n_i a_{ij} \da_j \u \in L^\infty([0,T]; \tilde{\H}^1_0(U))$.
\end{enumerate}
\end{Theorem}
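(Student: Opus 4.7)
\emph{Plan of proof.} The strategy is to upgrade regularity by differentiating the equation in time, applying the first-order energy estimate of Lemma~\ref{exlem2} to the resulting problem for $\dot{\u}$, and then using the elliptic estimate of Theorem~\ref{ellipticest} to convert $\L^2$-control on $\ddot{\u}$ into $\H^2$-control on $\u$. The entire argument is carried out at the level of the smooth finite approximations $\u^k$ on $V_k$ produced by (a second-order analogue of) Lemma~\ref{exlem1}; the bounds are made uniform in $k$, and one then passes to the limit via a variant of Lemma~\ref{exlem3} for higher derivatives followed by a density/continuity argument exactly as in the last step of the proof of Theorem~\ref{Exist}.

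Concretely, for smooth data in a dense subset of $H^2_{\textrm{data},\dagger}(U)$ we obtain, for large $k$, a smooth $\u^k\in C^\infty([0,T]\times\overline{V_k})$. Differentiating the PDE in time, $\dot{\u}^k$ is a smooth solution of
\be
(\dot{\u}^k)_{tt} + \mathcal{L}\dot{\u}^k = \dot{\f} - \dot{\mathcal{L}}\,\u^k \quad \textrm{on } [0,T]\times V_k,
\ee
with initial data $(\dot{\u}^k(0),(\dot{\u}^k)_t(0)) = (u_1|_{V_k},g_2|_{V_k})$, satisfying the same boundary condition on $\partial V_k$. The compatibility built into $H^2_{\textrm{data},\dagger}$ makes this an admissible $H^1_{\textrm{data},\dagger}(V_k)$ problem; in the Neumann case the hypothesis $a_{ij}\da_j u_0\in\H^1_0(U)$ together with $\dot{a}_{ij}|_{\dU}=0$ is precisely what allows the Neumann condition to be inherited by the time-differentiated problem. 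Lemma~\ref{exlem2} applied to $\dot{\u}^k$, with $\dot{\mathcal{L}}\,\u^k$ absorbed on the right using the first-order bound for $\u^k$, yields
\be
\|\dot{\u}^k\|_{L^\infty(0,T;\H^1(V_k))} + \|\ddot{\u}^k\|_{L^\infty(0,T;\L^2(V_k))} \leq C\,\|(u_0,u_1,\f)\|_{H^2_{\textrm{data},\dagger}(U)},
\ee
uniformly in $k$. For a.e.\ $t$ we may then view $\u^k(\cdot,t)$ as a weak solution of the elliptic problem $\mathcal{L}\u^k = \f-\ddot{\u}^k$ on $V_k$ with the same boundary condition; Theorem~\ref{ellipticest} supplies a uniform $\H^2(V_k)$ bound and, in the Neumann case, $n_ia_{ij}\da_j\u^k\in L^\infty(0,T;\tilde{\H}^1_0(V_k))$. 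The dual-space bound on $\dddot{\u}^k$ is obtained by reading the time-differentiated weak equation against $v\in\H^1_\dagger$, namely
\be
\pair{\dddot{\u}^k}{v} = \ip{\dot{\f}}{v}{\L^2(V_k)} - B'_{V_k}[\u^k,v;t] - B_{V_k}[\dot{\u}^k,v;t],
\ee
where $B'$ denotes the bilinear form with time-differentiated coefficients; each term is controlled by $\|v\|_{\H^1}$ times an $L^2$-in-time quantity.

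The weak-compactness arguments of Lemma~\ref{exlem3}, extended to second spatial derivatives (as noted in the remark following it) and to the additional time derivatives $\ddot{\u}^k,\dddot{\u}^k$, now produce a weak limit $\u$ satisfying \eqref{HR1} or \eqref{HR1.5}; uniqueness (Theorem~\ref{Uniq}) identifies this limit with the original weak solution, and density of the smooth-data subset in $H^2_{\textrm{data},\dagger}(U)$ together with the estimates just derived extends the result to all admissible data. The main obstacle is the Neumann case, where one must simultaneously verify (i) that the time-differentiated Neumann condition on $\partial V_k$ really is the correct boundary condition for $\dot{\u}^k$ at $t=0$, and (ii) that the constants in the $k$-uniform energy and elliptic estimates survive differentiation of the equation. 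Both are handled by the structural hypothesis that $a_{ij}$ is independent of $t$ on $\dU$ together with the compatibility condition $a_{ij}\da_j u_0\in\H^1_0(U)$ embedded in $H^2_{\textrm{data},N}(U)$; these are precisely what make $(u_1,g_2,\dot{\f}-\dot{\mathcal{L}}u_0)$ admissible $H^1$-level Neumann data for the differentiated problem, and they are the place in the full write-up that requires the most care.
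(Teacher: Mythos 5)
Your proposal matches the paper's proof in strategy and structure: commute the PDE with $\partial_t$ on each finite domain $V_k$, apply the first-order energy estimate (Lemma \ref{exlem2}) to $\dot{\u}^k$ to obtain $k$-uniform control on $\dot{\u}^k$, $\ddot{\u}^k$, $\dddot{\u}^k$, then read $\mathcal{L}\u^k = \f - \ddot{\u}^k$ as an elliptic problem and invoke Theorem \ref{ellipticest} to recover the $\H^2$ bound, finally passing to the weak limit via the extended Lemma \ref{exlem3} and closing with the density/continuity argument of Theorem \ref{Exist}. Your version actually fills in more detail than the paper (which reduces to $u_0=0$ as a shortcut to the dense subset and then says the energy bounds follow by "standard manipulations"), including the explicit form of the differentiated problem, the role of $\dot{a}_{ij}|_{\dU}=0$ and the Neumann compatibility $a_{ij}\da_j u_0\in\H^1_0(U)$, and the dual-space bound on $\dddot{\u}$ — one small caveat is that "$\dot{\mathcal{L}}\u^k$ absorbed on the right using the first-order bound" compresses the usual Evans-style step of integrating the commutator term $B'[\u^k,\ddot{\u}^k]$ by parts in time (or, equivalently, invoking remark (c) after the $\H^2$ definition together with a Gronwall bootstrap), which is worth spelling out in a full write-up.
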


\begin{proof}
First note that without loss of generality, we may take $u_0=0$, so that data which give rise to a smooth solution to the restricted problem on $V_k$ for $k$ sufficiently large are again dense. We return to the approximating sequence $\u^k$ we established in proving the existence of a weak solution. Commuting the equation with $\partial_t$ and making use of the elliptic estimates on $V_k$ established in the previous section it is straightforward to derive bounds for $\norm{\dot{\u}^k}{L^\infty([0, T]; \H^1(V_k))}$, $\norm{\ddot{\u}^k}{L^\infty([0, T]; \L^2(V_k))}$ and the relevant norm of $\dddot{\u}$ which are uniform in $k$. Passing to a weak limit and applying Theorem \ref{ellipticest} to deduce $\u \in L^\infty([0, T]; \H^2(U))$, we're done.
\end{proof}

Commuting further with $\partial_t$, it can be shown that the following theorem holds:
\begin{Theorem}[Higher Regularity] \label{HigherD}
\begin{enumerate}[(i)]
\item Assume $(u_0, u_1, \f) \in H^m_{\textrm{data,D}}$ and suppose also $\u$ is the weak solution of the Dirichlet IBVP problem with this data. Then in fact
\begin{eqnarray}
&& \esssup_{0 \leq t \leq T} \left(\sum_{i=0}^{m-2}\norm{\frac{d^i\u}{dt^i}}{\H^2(U)}+ \norm{\frac{d^{m-1}\u}{dt^{m-1}}}{\H^1(U)}+\norm{\frac{d^{m}\u}{dt^{m}}}{\L^2(U)} \right) \\ &&\qquad+\norm{\frac{d^{m+1}\u}{dt^{m+1}}}{L^2(0, T; (\H^{1}_0(U))^*)}\leq  \label{h1D} C\norm{(u_0, u_1, \f) }{ H^m_{\textrm{data,D}}}, 
\end{eqnarray}
where $C$ is a constant which depends on $T$ and $\alpha$ and the coefficients of the equation. Furthermore
\be
\u^{(i)}(t) \in H^{m-i}_{\textrm{loc.}}(U)\ \ \textrm{  for }\ \ 0 \leq i \leq k.
\ee
\item Assume $(u_0, u_1, \f) \in H^m_{\textrm{data,N}}$ and suppose also $\u$ is the weak solution of the Neumann IBVP problem with this data. Then in fact
\begin{eqnarray}
&& \esssup_{0 \leq t \leq T} \left(\sum_{i=0}^{m-2}\norm{\frac{d^i\u^l}{dt^i}}{\H^2(U)}+ \norm{\frac{d^{m-1}\u}{dt^{m-1}}}{\H^1(U)}+\norm{\frac{d^{m}\u}{dt^{m}}}{\L^2(U)} \right) \\ &&\qquad+\norm{\frac{d^{m+1}\u}{dt^{m+1}}}{L^2(0, T; (\H^{1}(U))^*)}\leq  \label{h1N} C\norm{(u_0, u_1, \f) }{ H^m_{\textrm{data,N}}}, 
\end{eqnarray}
where $C$ is a constant which depends on $T$ and $\alpha$ and the coefficients of the equation.
Furthermore
\be
\u^{(i)}(t) \in H^{m-i}_{\textrm{loc.}}(U)\ \ \textrm{  for }\ \ 0 \leq i \leq m.
\ee
\end{enumerate}
\end{Theorem}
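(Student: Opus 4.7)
The plan is induction on $m$, with base case $m=2$ provided by the previous theorem. The Dirichlet and Neumann cases run in parallel, with the only difference being whether test and solution spaces are $\H^1_0$ or $\H^1$. As in the existence proof, it suffices to establish a priori estimates uniform in $k$ for the approximating sequence $\u^k$ on $V_k$, since data producing classical solutions there (satisfying compatibility conditions to all orders) are dense in $H^m_{\textrm{data},\dagger}$ by an obvious extension of Lemma \ref{exlem1}. For such data the $g_i$ agree with $\partial_t^i \u^k|_{t=0}$, and commuting the equation $i$ times with $\partial_t$ shows that $\v_i := \partial_t^i \u^k$ solves
\be
\partial_t^2 \v_i + \mathcal{L} \v_i = \f^{(i)} - \sum_{l=1}^{i} \binom{i}{l} \mathcal{L}^{(l)} \v_{i-l}, \quad \textrm{ on } [0,T]\times V_k,
\ee
with the same homogeneous boundary condition as $\u^k$ (this uses that $a_{ij}$ is $t$-independent on $\dU$, so the Neumann condition survives differentiation in $t$).

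Assume the induction hypothesis at level $m-1$. Applying it to $(u_0, u_1, \f)$ viewed as an element of $H^{m-1}_{\textrm{data},\dagger}$ yields control of $\v_0, \ldots, \v_{m-2}$ in the norms appearing in \eqref{h1D}/\eqref{h1N}, uniform in $k$. In particular $\mathcal{L}^{(l)} \v_{m-1-l}$ is controlled in $L^2([0,T];\L^2(V_k))$ for $1 \le l \le m-1$ using Remark (c) after the definition of $\H^2$ (for $l \le m-2$) and the $\H^1$ bound on $\v_{m-2}$ for $l=m-1$. The compatibility assumption $(g_{m-1}, g_m, \f^{(m-1)}) \in H^1_{\textrm{data},\dagger}$ is precisely the statement that $\v_{m-1}$ has valid initial data in $H^1_{\textrm{data},\dagger}$. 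Therefore applying Lemma \ref{exlem2} to the equation for $\v_{m-1}$ produces
\be
\norm{\v_{m-1}}{H^1_{\textrm{sol.,}\dagger}(V_k)} \leq C \norm{(u_0, u_1, \f)}{H^m_{\textrm{data},\dagger}}
\ee
with $C$ independent of $k$; this accounts for the $\partial_t^{m-1}\u, \partial_t^m \u, \partial_t^{m+1}\u$ norms in \eqref{h1D}/\eqref{h1N}.

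To recover the spatial $\H^2$ norm of $\v_i$ for $0 \le i \le m-2$, at each fixed $t$ view $\v_i$ as a weak solution of the elliptic problem
\be
\mathcal{L} \v_i = \f^{(i)} - \v_{i+2} - \sum_{l=1}^{i} \binom{i}{l} \mathcal{L}^{(l)} \v_{i-l},
\ee
with the appropriate boundary condition. The right-hand side is already controlled in $L^\infty([0,T]; \L^2(V_k))$ uniformly in $k$ by the previous step plus the induction hypothesis. Theorem \ref{ellipticest} then yields a uniform $L^\infty([0,T]; \H^2(V_k))$ bound on $\v_i$. Weak-$\ast$ compactness as in Lemma \ref{exlem3}, extended to $\H^2$ via the remark following it, extracts a subsequence whose limit $\u$ satisfies \eqref{h1D} or \eqref{h1N}. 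The interior statement $\u^{(i)}(t) \in H^{m-i}_{\textrm{loc.}}(U)$ follows by combining the $\H^2$ bound with standard interior elliptic regularity applied to the same elliptic equation on any $V \cc U$, where $\mathcal{L}$ is uniformly elliptic in the classical sense.

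The main obstacle is not any single estimate but the bookkeeping: verifying that the spaces $H^m_{\textrm{data},\dagger}$ are defined so that the iterated $\partial_t$ commutation both produces data with the correct initial compatibility and yields source terms that the induction hypothesis can handle. Once the recursion defining the $g_i$ and the cascading nature of the norms is accepted, the energy/elliptic loop above closes mechanically.
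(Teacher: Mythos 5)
Your overall strategy -- induction on $m$ via commuting with $\partial_t$, energy estimates on $V_k$, elliptic estimates to recover $\H^2$, weak-$\ast$ compactness -- is exactly the route the paper sketches. However there is a gap in the crucial energy-estimate step. You assert that every term $\mathcal{L}^{(l)} \v_{m-1-l}$, $1\le l\le m-1$, in the source for $\v_{m-1}:=\partial_t^{m-1}\u^k$ lies in $L^2([0,T];\L^2(V_k))$ ``using Remark (c) $\ldots$ and the $\H^1$ bound on $\v_{m-2}$.'' That is false for $l=1$. The induction hypothesis at level $m-1$ controls $\v_0,\ldots,\v_{m-3}$ in $\H^2$ but $\v_{m-2}$ only in $\H^1$, whereas $\mathcal{L}^{(1)}\v_{m-2}$ is a genuine second-order operator acting on $\v_{m-2}$: an $\H^1$ bound cannot put it in $\L^2$, and Remark (c) requires $\H^2$. (Your indexing also seems off: $l=m-1$ gives $\v_0$, not $\v_{m-2}$.) As written, the appeal to Lemma \ref{exlem2} -- which takes a given $\L^2$ forcing -- does not apply, and the $\H^2$ bound on $\v_{m-2}$ you try to establish \emph{after} the energy estimate is exactly what the energy estimate would need as input. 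This circularity is present already in the base case $m=2$.

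The gap is fillable, but it requires acknowledging that Lemma \ref{exlem2} cannot be invoked as a black box for the commuted equation. Two standard fixes: (a) in the energy identity for $\v_{m-1}$, integrate the leading piece $\int \dad_i(a^{(1)}_{ij}\da_j\v_{m-2})\,\dot{\v}_{m-1}\,\rho\,dx$ by parts once in space and once in time, producing terms controllable by $\norm{\v_{m-2}}{\H^1}\norm{\v_{m-1}}{\H^1}$ and a total time derivative that is absorbed into the energy; or (b) bound $\norm{\mathcal{L}^{(1)}\v_{m-2}}{\L^2}\lesssim \norm{\v_{m-2}}{\H^2}\lesssim \norm{\v_m}{\L^2}+\text{(lower order)}$ via Theorem \ref{ellipticest} and close the resulting inequality with Gr\"onwall applied to the combined quantity $\norm{\v_{m-1}}{\H^1}^2+\norm{\v_m}{\L^2}^2$. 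Either way, the elliptic estimate has to be woven \emph{into} the Gr\"onwall loop rather than applied as a clean afterthought, which is precisely the subtlety your write-up elides when it says the bookkeeping ``closes mechanically.''
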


Note that we do not directly control higher spatial derivatives of $\u$ in global Sobolev norms, although we do have control of powers of the $\mathcal{L}$ acting on $\u$. We can however make use of (a very slight adaptation of) Theorem \ref{asympt} to give the following asymptotic expansion in the situation where $\f$ has the appropriate behaviour near the boundary:

\begin{Theorem}\label{hypasympt}
Suppose $\u$ satisfies the conditions of Theorem \ref{HigherD} $(i)$ or $(ii)$. Suppose that $\f^{(i)} \in L^2([0, T]; \H_T^{m-i-1}(U)\cap H^{m-i-1}_{\textrm{loc.}}(U))$. Suppose further that if $m \geq 3$ near $\dU$ we have the following expansion for $f$:
\bea
f &=& \rho^{\alpha-1} \left[ f^+_0 + \rho f^+_1 + \ldots + \rho^{m-3} f^+_{m-3} + \O{\rho^{m-2-\alpha}})\right ] \\
&& \quad + \rho^{-\alpha} \left[ f^-_0 + \rho f^-_1 + \ldots + \rho^{m-3} f^-_{m-3} + \O{\rho^{m-3+\alpha}})\right ] \nonumber
\eea
where
\be
 f^\pm_i \in H^{m-3-i}(\dU_T)
\ee
Then $\u$ has the following expansion for $m \geq 1$:
\bea\label{HEL8.1}
u &=& \rho^{\alpha} \left[ u^+_1 + \rho u^+_2 + \ldots + \rho^{m-2} u^+_{m-1} + \O{\rho^{m-1-\alpha}})\right ] \\
&& \quad + \rho^{-\alpha} \left[ u^-_0 + \rho u^-_1 + \ldots + \rho^{m-1} u^-_{m-1} + \O{\rho^{m-1+\alpha}})\right ] \nonumber
\eea
where
\be
u^\pm_i \in H^{m-1-i}(\dU_T).
\ee
Furthermore if $u$ satisfies the Dirichlet conditions, $u^-_0=u^-_1=0$, while if $u$ satisfies the Neumann conditions $u^+_1=0$.
\end{Theorem}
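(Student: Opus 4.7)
I will prove the theorem by induction on $m$, at each step rewriting the wave equation as an elliptic equation for $\u(t)$ and applying the elliptic Theorem \ref{asympt}. The key observation is that $\partial_t$ is tangent to $\dU_T$ (since $\partial_t \rho = 0$), so it plays the same role at the boundary as the spatial tangential fields $T^{(A)}$ from the proof of Theorem \ref{asympt}. For the low-$m$ base cases, Theorem \ref{asympt} at parameter $0$ applies with no expansion hypothesis on $f-\ddot{\u}$ beyond the $\L^2$-regularity supplied by Theorem \ref{HigherD}: at $m=1$ the claim reduces directly to Lemma \ref{H1lem}(iii) applied pointwise in $t$, giving $u^-_0 \in L^\infty([0,T]; L^2(\dU)) \hookrightarrow L^2(\dU_T)$; at $m=2$ the elliptic theorem at parameter $0$ delivers the expansion directly, since $\ddot{\u} \in L^\infty([0,T]; \L^2(U))$ by Theorem \ref{HigherD}.

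For the inductive step with $m \geq 3$, assume the theorem at all levels strictly below $m$ and take data $(u_0, u_1, \f) \in H^m_{\textrm{data},\dagger}$. At a.e.\ $t$, rewrite the wave equation as $\mathcal{L}\u(t) = f(t) - \ddot{\u}(t)$. I apply the inductive hypothesis (at level $m-2 \geq 1$) to $\ddot{\u}$, which is the weak solution of the hyperbolic IBVP obtained by differentiating the equation twice in $t$: the initial data $(g_2, g_3)$ lies in $H^{m-2}_{\textrm{data},\dagger}$ by the compatibility conditions built into the definition, and the forcing $\ddot{f} - 2\dot{\mathcal{L}}\dot{\u} - \ddot{\mathcal{L}}\u$ inherits the required expansion hypothesis from the expansion of $\f$, from the already-established expansion of $\dot{\u}$ and $\u$ at level $m-1$, and from the fact that $\dot{a}_{ij}$ vanishes on $\dU$, which forces the most singular commutator piece into the $\rho^{\alpha-1}[\cdots] + \rho^{-\alpha}[\cdots]$ expansion at the correct order. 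This yields an expansion of $\ddot{\u}$ of order $m-2$; combining with the given expansion of $f$ shows that $f - \ddot{\u}$ satisfies the hypotheses of Theorem \ref{asympt} at parameter $m-2$, whose conclusion delivers the expansion of $\u(t)$ with coefficients in $H^{m-1-i}(\dU)$, together with the Dirichlet/Neumann vanishing conditions on $u^-_0, u^-_1$ or $u^+_1$.

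To upgrade the coefficients from pointwise $H^{m-1-i}(\dU)$ to $H^{m-1-i}(\dU_T)$, I repeat the elliptic reduction after commuting with $\partial_t^j$ for $0 \leq j \leq m-1$: each $\partial_t^j \u$ satisfies an elliptic equation at each time with forcing controlled by Theorem \ref{HigherD}, and its expansion coefficients are precisely $\partial_t^j$ applied to those of $\u$, since $\partial_t$ commutes with multiplication by $\rho^{\pm\alpha}$. The main technical obstacle is the commutator analysis in the inductive step: one must verify that $\dot{\mathcal{L}}\dot{\u}$ and $\ddot{\mathcal{L}}\u$ admit $\rho^{\alpha-1}[\cdots] + \rho^{-\alpha}[\cdots]$ expansions of order $m-2$ with coefficients of the correct Sobolev regularity on $\dU_T$. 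The standing assumption $\dot{a}_{ij}|_{\dU}=0$ is essential, because it renders $\dot{a}_{ij} = O(\rho)$, so that the potentially most singular piece of $\dad_i(\dot{a}_{ij}\da_j \cdot)$ actually contributes one order higher in $\rho$ than the naive power count would suggest, keeping it compatible with the prescribed expansion structure.
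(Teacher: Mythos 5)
The paper does not actually give a proof of Theorem \ref{hypasympt}; it only remarks that it follows from ``a very slight adaptation of Theorem \ref{asympt}.'' Your proposal is a reasonable and essentially correct way to fill that gap, and it captures exactly the intended adaptation: treat $\partial_t$ as a tangential vector field on $\dU_T$ (which works because $\partial_t\rho=0$ and, crucially, $\dot a_{ij}|_{\dU}=0$ so $\dot a_{ij}=\O{\rho}$), rewrite the wave equation at each time slice as the elliptic problem $\mathcal{L}\u(t)=f(t)-\ddot{\u}(t)$, and invoke Theorem \ref{asympt} together with Theorem \ref{HigherD}, iterating to build up the time-regularity of the boundary coefficients. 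Your identification of the commutator issue and of the role of $\dot a_{ij}|_{\dU}=0$ is precisely the structural point the paper is flagging with its parenthetical remark.

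Two places deserve more care than your sketch gives them. First, in the inductive step you invoke ``the already-established expansion of $\dot{\u}$ and $\u$ at level $m-1$'' to control the commutators $\dot{\mathcal{L}}\dot{\u}$ and $\ddot{\mathcal{L}}\u$, and simultaneously the inductive hypothesis at level $m-2$ for $\ddot{\u}$. This is not circular if you use strong induction (assume the claim for all $k<m$), but the bookkeeping is delicate: you must check that the level $m-1$ expansion of $\u$ (which has one fewer order and one fewer degree of regularity than what you are trying to prove) is genuinely sufficient, after multiplication by $\dot a_{ij}=\O{\rho}$ and application of $\dad_i$, to give the forcing for $\ddot{\u}$ the expansion required at level $m-2$, including the $\dU_T$ Sobolev regularity of the coefficients. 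Tracking the interplay between loss of $t$-derivatives and gain of $\rho$-orders here is the actual content of the proof and should be spelled out. Second, the final upgrade from $u^\pm_i\in L^\infty([0,T];H^{m-1-i}(\dU))$ to $u^\pm_i\in H^{m-1-i}(\dU_T)$ requires combining the pointwise-in-$t$ spatial regularity from the elliptic theorem with time regularity obtained by commuting with $\partial_t^j$; this involves showing that $\partial_t$ commutes with the trace/expansion operation (which holds because $\rho$ is $t$-independent), and then assembling the mixed space-time derivatives into the $H^{m-1-i}(\dU_T)$ norm. Neither of these is an obstruction, but both are where the real work is.
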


For a formal power series approach to determining the coefficients of these expansions, see \cite{Gover:2011rz}.

\section{Other boundary conditions}\label{bcs}

We have now established well posedness and a regularity theory for solutions of  \eq{WP1} subject to either Dirichlet or Neumann homogeneous boundary conditions. We will discuss briefly some of the other possibilities listed in the introduction, although we shall not go into quite so much detail. 

\subsection{Inhomogeneous boundary data} \label{inhom}

First, we define the weak formulations for the inhomogeneous problems. We assume throughout this section that $u_0 \in \H^1(U), u_1 \in \L^2(U), \f \in L^2([0, T]; \L^2(U))$. We furthermore take $\mathbf{g_0}, \mathbf{g_1} \in L^\infty([0, T]; L^{2}(\dU))$ to be some functions on the boundary. The $\mathbf{g_i}$ will need to be subject to further conditions in order to give a well posed problem.
\begin{Definition}[Weak Inhomogeneous Dirichlet IBVP]
Suppose $u_0, u_1, \f$ are as above with the additional condition $\left. \rho^\alpha u_0 \right|_{\dU_T}=\mathbf{g_0}(0).$ We say that $\u \in L^\infty([0, T]; \H^1(U))$ with $\dot \u \in L^\infty([0, T]; \L^2(U))$, $ \ddot \u \in L^2([0, T]; (\H^1_0(U))^*)$ is a \emph{weak solution} of the inhomogeneous Dirichlet IBVP:
\be
\begin{array}{rcl}
u_{tt} + \mathcal{L} u = f & \quad & \textrm{ in } U_T \\
\left . \rho^\alpha u \right |_{\dU_T} = g_0 & \quad & \textrm{ on } \dU_T \\
u=u_0, {u_t} = u_1 & \quad & \textrm{ on } \{0 \} \times U
\end{array}
\ee
provided
\begin{enumerate}[i)]
\item For all $v \in \H^1_0(U)$ and a.e. time $0 \leq t \leq T$ we have
\be
\pair{\ddot{\u}}{v} + B[\u,v;t] = \ip{\mathbf{f}}{v}{\L^2(U)}.
\ee
\item We have the initial conditions
\be
\u(0) = u_0, \qquad \dot{\u}(0) = u_1.
\ee
\item We have the boundary condition
\be
\left . \rho^\alpha \u \right |_{\dU_T} = \mathbf{g_0}
\ee
\end{enumerate}
\end{Definition}
Note that in contrast to the homogeneous Dirichlet problem, the unrenormalized energy will be \emph{infinite} for a solution of the inhomogeneous Dirichlet problem with $g_0\neq 0$. 

\begin{Definition}[Weak Inhomogeneous Neumann IBVP]
Suppose $u_0, u_1, \f$ are as above. We say that $\u \in L^\infty([0, T]; \H^1(U))$ with $\dot \u \in L^\infty([0, T]; \L^2(U))$, $ \ddot \u \in L^2([0, T]; (\H^1(U))^*)$ is a \emph{weak solution} of the inhomogeneous Neumann IBVP:
\be
\begin{array}{rcl}
u_{tt} + \mathcal{L} u = f & \quad & \textrm{ in } U_T \\
\left . \rho^{1-\alpha}n_i a_{ij} \da_j u \right |_{\dU_T} = g_1 & \quad & \textrm{ on } \dU_T \\
u=u_0, {u_t} = u_1 & \quad & \textrm{ on } \{0 \} \times U
\end{array}
\ee
provided
\begin{enumerate}[i)]
\item For all $v \in \H^1(U)$ and a.e. time $0 \leq t \leq T$ we have
\be
\pair{\ddot{\u}}{v} + B[\u,v;t] = \ip{\mathbf{f}}{v}{\L^2(U)}+ \ip{\mathbf{ g_1}}{\left . \rho^\alpha v\right|_{\dU}}{L^2(\dU)}.
\ee
\item We have the initial conditions
\be
\u(0) = u_0, \qquad \dot{\u}(0) = u_1.
\ee
\end{enumerate}
\end{Definition}
If we assume sufficient regularity, it is possible to show that the weak solutions are equivalent to strong solutions. by a standard integration by parts.

It is clear that if $v \in L^2([0, T];\H^2(U)) \cap H^1([0, T]; \H^1(U)) \cap \H^2([0,T]; \L^2(U)) = \H^2(U_T)$ satisfies the condition
\be
\begin{array}{rcl}
\left . \rho^\alpha v\right |_{\dU_T} = g_0 & &\quad \textrm{for inhomogeneous Dirichlet, or} \\ \\
\left . \rho^{1-\alpha} a_{ij}\da_j v\right |_{\dU_T} = g_1& & \quad \textrm{for inhomogeneous Neumann}.
\end{array}
\ee
then we can apply our previous weak well posedness results to the functions
\be
\tilde{u} = u - v
\ee
We'd like to know what conditions are required on $g_0, g_1$ in order that such a $v$ exists. The following Lemma gives the results we require, and comes from adapting Lemma \ref{tracelem} to $\H^2(U)$.
\begin{Lemma}
\begin{enumerate}[(i)]
\item Suppose $v \in \H^2(U_T)$, then $\left . \rho^\alpha v\right |_{\dU_T}$ and $\left . \rho^{1-\alpha} a_{ij}\da_j v\right |_{\dU_T}$ exist in a trace sense, and we have
\be
\norm{\left . \rho^\alpha v\right |_{\dU_T}}{H^{1+\alpha}(\dU_T)} + \norm{\left . \rho^{1-\alpha} a_{ij}\da_j v\right |_{\dU_T}}{H^{1-\alpha}(\dU_T)} \leq C \norm{v}{\H^2(U_T)}
\ee 
\item Suppose $v_0 \in H^{1+\alpha}(\dU_T)$ and $v_1 \in H^{1-\alpha}(\dU_T)$. Then there exists $\tilde v\in \H^2(U_T)$ such that
\be
\left . \rho^\alpha \tilde v\right |_{\dU_T} = v_0 \qquad \left . \rho^{1-\alpha} a_{ij}\da_j \tilde v\right |_{\dU_T} = v_1
\ee
where the restriction is understood in the trace sense. Furthermore we may choose $\tilde v$ such that 
\be
\norm{\tilde v}{\H^2(U_T)} \leq C\left( \norm{v_0}{H^{1+\alpha(\dU_T)}} + \norm{v_1}{H^{1-\alpha}(\dU_T)}\right)
\ee 
with $C$ independent  of $v_i$.
\end{enumerate}
\end{Lemma}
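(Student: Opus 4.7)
The plan is to reduce both parts of the lemma to the weighted Sobolev trace and extension theory of \cite{Kim}, via the conjugation $w := \rho^\alpha v$ that was already used in the proof of Lemma \ref{tracelem}. Under this substitution, $v \in \H^k(U)$ corresponds (in the tangential directions, at least) to $w$ lying in the weighted Sobolev space $H^k(U;\rho^{1-2\alpha})$. The $\H^1$-trace statement of Lemma \ref{tracelem} gives $w|_{\dU} \in H^{\alpha}(\dU)$; the additional degree of tangential differentiability built into $\H^2(U_T)$ through the condition $T^{(A)}_i \da_i v \in \H^1$ upgrades this to $H^{1+\alpha}$. Similarly, the condition $N^{(B)}_i \da_i v \in \Ht^1$ encodes the fact that the conjugate variable $\rho^{1-\alpha}(n_i a_{ij} \da_j v)$ lies in a weighted Sobolev space with weight $\rho^{2\alpha-1}$, and its boundary trace lands in $H^{1-\alpha}(\dU)$.

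For part (i), I start with $w = \rho^\alpha v$. Since $T^{(A)}(\rho) = 0$ in a collar of $\dU$, the identity $T^{(A)} w = \rho^\alpha T^{(A)}_i \da_i v$ holds there, and the $\H^2(U_T)$-control on $T^{(A)}_i \da_i v$ together with Lemma \ref{tracelem} gives $T^{(A)}(w|_{\dU_T}) \in L^2_t H^{\alpha}$. Iterating over the set of tangential vector fields $T^{(A)}$ and folding in the time regularity coming from the $H^1_t \H^1$ and $H^2_t \L^2$ parts of $\H^2(U_T)$ yields $w|_{\dU_T} \in H^{1+\alpha}(\dU_T)$ with the desired quantitative bound. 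For the Neumann trace, I use that $a_{ij} n_j|_{\dU}$ lies in the span of the $N^{(B)}$, so that $n_i a_{ij} \da_j v \in \Ht^1(U)$ near the boundary; applying the $\Ht$-analogue of Lemma \ref{tracelem} (with $\alpha$ replaced by $1-\alpha$) produces $\rho^{1-\alpha} n_i a_{ij} \da_j v|_{\dU_T} \in H^{1-\alpha}(\dU_T)$ after again incorporating the time regularity.

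For part (ii), I construct $\tilde v$ by first using the bounded right inverses from Lemma \ref{tracelem} and its $\Ht$-analogue to extend $v_0 \in H^{1+\alpha}(\dU_T)$ and $v_1 \in H^{1-\alpha}(\dU_T)$ into the collar $[0,\tilde\epsilon)\times\dU_T$, and then assembling an ansatz of the form
\be
\tilde v(\rho,\xi,t) = \chi(\rho)\bigl[\rho^{-\alpha}\bigl(V_0(\xi,t) + \rho\,W_1(\xi,t)\bigr) + \rho^{\alpha} W_2(\xi,t)\bigr] + \tilde v_{\mathrm{int}},
\ee
where $\chi$ is a cutoff identically $1$ near $\rho=0$, $V_0|_{\dU_T}=v_0$, the functions $W_1,W_2$ are determined by imposing $\rho^{1-\alpha}n_i a_{ij}\da_j \tilde v|_{\dU_T}=v_1$, and $\tilde v_{\mathrm{int}}$ is a smooth interior extension. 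The estimate $\norm{\tilde v}{\H^2(U_T)} \leq C\bigl(\norm{v_0}{H^{1+\alpha}(\dU_T)} + \norm{v_1}{H^{1-\alpha}(\dU_T)}\bigr)$ then follows by passing to $\rho^\alpha \tilde v$ and invoking the weighted Sobolev extension theorem of \cite{Kim} in the space $H^2(U_T;\rho^{1-2\alpha})$.

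The main technical obstacle is a case split in the Neumann matching, depending on whether $\alpha<\tfrac{1}{2}$ or $\alpha>\tfrac{1}{2}$. Direct expansion of $\rho^{1-\alpha} n_i a_{ij} \da_j \tilde v$ near $\rho=0$ produces, in addition to the desired finite boundary value $2\alpha(a_{ij}n_i n_j) W_2|_{\dU}$, a term of order $\rho^{1-2\alpha}$ whose coefficient is $a_{ij} n_i \partial_j V_0 + a_{ij} n_i n_j W_1$. For $\alpha<\tfrac{1}{2}$ this term vanishes at the boundary automatically and one may take $W_1\equiv 0$, with $W_2$ determined by $v_1$; for $\alpha>\tfrac{1}{2}$ the term is singular and $W_1|_{\dU}$ must be chosen so as to cancel $a_{ij} n_i \partial_j V_0|_{\dU}$ before $W_2$ is selected to prescribe $v_1$. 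In either regime the construction is explicit, and the claimed norm bound reduces, via the conjugation $w = \rho^\alpha \tilde v$, to the quantitative weighted extension estimate of Kim.
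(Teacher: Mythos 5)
Your proposal follows the same route the paper indicates: it reduces the Lemma to the weighted Sobolev trace and extension theory of \cite{Kim} via the conjugation $w=\rho^\alpha v$, exactly as the paper's single remark (``comes from adapting Lemma \ref{tracelem} to $\H^2(U)$'') suggests. Since the paper supplies no further detail, your write-up is considerably more explicit than what actually appears, and the extra content is genuinely useful: in particular the observation that $T^{(A)}w = \rho^\alpha T^{(A)}_i\da_i v$ (because $T^{(A)}\rho=0$ in the collar) and that $a_{ij}n_j$ is a combination of the $N^{(B)}$ are precisely the hooks that let one bootstrap Lemma \ref{tracelem} from $\H^1$ to $\H^2$, and the case split at $\alpha=\tfrac12$ in the Neumann matching is a real subtlety the paper never mentions but which any careful construction of the extension must address. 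Two small caveats: you should also treat the borderline case $\alpha=\tfrac12$, where the $\rho^{1-2\alpha}$ term is $O(1)$ and hence contributes to the boundary value of $\rho^{1-\alpha}n_ia_{ij}\da_j\tilde v$ alongside $2\alpha(a_{ij}n_in_j)W_2$, so that $W_1|_{\dU}$ and $W_2|_{\dU}$ must be chosen jointly; and the claim that ``folding in the time regularity'' upgrades $L^2_tH^{1+\alpha}_x\cap H^1_tH^\alpha_x$ to $H^{1+\alpha}(\dU_T)$ is an interpolation step that deserves a sentence of justification (the $H^2_t\L^2$ component by itself contributes no trace information). Both points are glossed over in the paper as well, so neither is a defect relative to the source, but they are worth making explicit if you intend this to stand as a complete proof.
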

Taking this with our previous results, we conclude
\begin{Theorem} \label{ExistInhom}
\begin{enumerate}[(i)]
\item Given $u_0, u_1, \f$ as above, $g_0 \in H^{1+\alpha}(\dU_T)$ with $\left. \rho^\alpha u_0 \right|_{\dU_T}=\left .g_0\right|_{t=0}.$, there exists a unique weak solution to the Dirichlet IBVP corresponding to this data, with the estimate
\be
\norm{\u}{H^1_{\textrm{sol.}, D}(U)} \leq C \left (\norm{(u_0, u_1, \f)}{H^1_{\textrm{data}}(U)} + \norm{g_0}{H^{1+\alpha}(\dU_T)}\right)
\ee
\item Given $u_0, u_1, \f$ as above, $g_1 \in H^{1-\alpha}(\dU_T)$, there exists a unique weak solution to the Neumann IBVP corresponding to this data with the estimate
\be
\norm{\u}{H^1_{\textrm{sol.}, N}(U)} \leq C \left (\norm{(u_0, u_1, \f)}{H^1_{\textrm{data}}(U)} + \norm{g_1}{H^{1-\alpha}(\dU_T)}\right )
\ee
\end{enumerate}
\end{Theorem}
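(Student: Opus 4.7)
The plan is to reduce each inhomogeneous problem to its homogeneous counterpart by subtracting an interior extension of the prescribed boundary data. Using part (ii) of the preceding lemma, I first construct $\mathbf{w} \in \H^2(U_T)$ satisfying $\left.\rho^\alpha \mathbf{w}\right|_{\dU_T} = g_0$ in case (i), or $\left.\rho^{1-\alpha} n_i a_{ij} \da_j \mathbf{w}\right|_{\dU_T} = g_1$ in case (ii), with the corresponding bound $\norm{\mathbf{w}}{\H^2(U_T)} \leq C \norm{g_\dagger}{H^{1\pm\alpha}(\dU_T)}$. Setting $\tilde{\u} := \u - \mathbf{w}$, the candidate $\tilde{\u}$ should solve the corresponding \emph{homogeneous} IBVP with data
\be
\tilde{\f} = \f - \ddot{\mathbf{w}} - \mathcal{L}\mathbf{w}, \qquad \tilde{u}_0 = u_0 - \mathbf{w}(0), \qquad \tilde{u}_1 = u_1 - \dot{\mathbf{w}}(0).
\ee

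Next I verify admissibility of this modified data. The inclusion $\mathbf{w} \in H^2([0,T]; \L^2(U))$ gives $\ddot{\mathbf{w}} \in L^2([0,T]; \L^2(U))$, and remark (c) following the definition of $\H^2(U)$ yields $\mathcal{L}\mathbf{w}(t) \in \L^2(U)$ with the correct time integrability, so $\tilde{\f} \in L^2([0,T]; \L^2(U))$. The continuous embeddings $H^1([0,T]; \H^1(U)) \hookrightarrow C([0,T]; \H^1(U))$ and $H^2([0,T]; \L^2(U)) \hookrightarrow C^1([0,T]; \L^2(U))$ place $\mathbf{w}(0) \in \H^1(U)$ and $\dot{\mathbf{w}}(0) \in \L^2(U)$, so $\tilde u_0 \in \H^1(U)$ and $\tilde u_1 \in \L^2(U)$. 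In the Dirichlet case, the compatibility assumption $\rho^\alpha u_0|_{\dU} = g_0(0) = \rho^\alpha \mathbf{w}(0)|_{\dU}$ combined with Lemma \ref{H1lem} upgrades $\tilde u_0$ to $\H^1_0(U)$, so $(\tilde u_0, \tilde u_1, \tilde\f) \in H^1_{\textrm{data, D}}(U)$; no compatibility is needed in the Neumann case. Theorem \ref{Exist} then produces a unique weak solution $\tilde{\u}$ of the homogeneous problem, and I set $\u := \tilde{\u} + \mathbf{w}$.

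To confirm that $\u$ solves the inhomogeneous weak problem, I establish, for any admissible test function $v$, the integration-by-parts identity
\be
\pair{\ddot{\mathbf{w}}}{v} + B[\mathbf{w}, v; t] = \ip{\ddot{\mathbf{w}} + \mathcal{L}\mathbf{w}}{v}{\L^2(U)} + \ip{\rho^{1-\alpha} n_i a_{ij} \da_j \mathbf{w}|_{\dU}}{\rho^\alpha v|_{\dU}}{L^2(\dU)},
\ee
valid first for smooth functions and then extended to $\mathbf{w} \in \H^2(U)$ via the trace and density statements of Lemma \ref{tracelem} and the preceding lemma. Adding this to the weak form satisfied by $\tilde{\u}$ reproduces precisely the weak inhomogeneous equation: in the Neumann case the boundary integral delivers the required $(g_1, \rho^\alpha v|_{\dU})$ contribution, while in the Dirichlet case the boundary term vanishes automatically because $v \in \H^1_0(U)$, and the trace condition $\rho^\alpha \u|_{\dU_T} = g_0$ is inherited from $\mathbf{w}$. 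Uniqueness follows immediately by applying Theorem \ref{Uniq} to the difference of two solutions, and the quantitative estimate comes from combining the bound of Theorem \ref{Exist} for $\tilde{\u}$ with $\norm{\mathbf{w}}{\H^2(U_T)} \leq C\norm{g_\dagger}{H^{1\pm\alpha}(\dU_T)}$ and the continuous embedding $\H^2(U_T) \hookrightarrow H^1_{\textrm{sol.},\dagger}(U)$. The most delicate point is the rigorous justification of the integration-by-parts identity at the $\H^2$ level with the correctly weighted boundary traces; once that is in place, the remainder is essentially bookkeeping.
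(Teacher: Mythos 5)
Your proposal is correct and follows exactly the approach the paper outlines just before the theorem: build an interior lift $\mathbf{w}\in\H^2(U_T)$ of the boundary data via part (ii) of the preceding extension lemma, set $\tilde{\u}=\u-\mathbf{w}$, verify the shifted data lies in the homogeneous data space, invoke Theorems \ref{Exist} and \ref{Uniq}, and recover the weak inhomogeneous formulation by the integration-by-parts identity for $\mathbf{w}$. The paper itself leaves all these steps implicit (``Taking this with our previous results, we conclude''), so your argument simply fills in the bookkeeping the paper omits.
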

We shall not go through the proof in detail, but it is clear that the higher regularity results of \S \ref{HR} can be extended to the inhomogeneous case.

\subsection{Robin boundary condition}

Another possibility for a well posed boundary condition is what we might call a Robin boundary condition:
\ben{Robin}
\rho^{1-\alpha} n_i a_{ij} \da_j u +  \beta \rho^\alpha u = 0 \textrm{ on } \dU
\een
where $\beta$ is some suitable function on $\dU_T$. We assume $\beta\in C^\infty(\dU_T)$ to be concrete, but this is not necessary. This can be achieved in the weak formulation in a similar fashion to the introduction of an inhomogeneity for the Neumann condition.
\begin{Definition}[Weak Inhomogeneous Robin IBVP]
Suppose $u_0, u_1, \f$ are as in \S \ref{inhom}. We say that $\u \in L^\infty([0, T]; \H^1(U))$ with $\dot \u \in L^\infty([0, T]; \L^2(U))$, $ \ddot \u \in L^2([0, T]; (\H^1(U))^*)$ is a \emph{weak solution} of the Robin IBVP:
\be
\begin{array}{rcl}
u_{tt} + \mathcal{L} u = f & \quad & \textrm{ in } U_T \\
\left .\left ( \rho^{1-\alpha}n_i a_{ij} \da_j u +\beta \rho^\alpha u \right )\right |_{\dU_T} = 0 & \quad & \textrm{ on } \dU_T \\
u=u_0, {u_t} = u_1 & \quad & \textrm{ on } \{0 \} \times U
\end{array}
\ee
provided
\begin{enumerate}[i)]
\item For all $v \in \H^1(U)$ and a.e. time $0 \leq t \leq T$ we have
\be
\pair{\ddot{\u}}{v} + B[\u,v;t] + \ip{\left. \rho^\alpha\mathbf{ u }\right|_{\dU}}{\left. \rho^\alpha v\right|_{\dU}}{L^2(\dU)}= \ip{\mathbf{f}}{v}{\L^2(U)}.
\ee
\item We have the initial conditions
\be
\u(0) = u_0, \qquad \dot{\u}(0) = u_1.
\ee
\end{enumerate}
\end{Definition}

It is straightforward to show that the well posedness and regularity results of \S \ref{WP} and \S \ref{HR} can be extended, where we require the estimate \eq{trest} to deal with the surface terms which arise. 

\textbf{Remark:} In the case where $b_i=0$, $c\geq 0$ and $a_{ij}, c$ independent of time, we can relate our result to the theory of essentially self-adjoint operators. A consequence of Stone's Theorem (see for example \cite{Reed}) for self-adjoint operators states:
\begin{Theorem}
Let $L: D \to H$ be a densely defined positive symmetric operator. Suppose for every $f, g\in D$ there exists a twice continuously differentiable solution (in $D$) to
\be
u_{tt} + Lu = 0; \qquad u(0)=f; \qquad u_t(0)=g
\ee  
Then $L$ is essentially self-adjoint.
\end{Theorem}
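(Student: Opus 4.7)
The plan is to invoke the standard criterion that a positive symmetric operator $L$ is essentially self-adjoint if and only if the range of $L+1$ is dense in $H$, equivalently $\ker(L^{*}+1)=\{0\}$; this is a routine consequence of von Neumann's deficiency-index theorem adapted to the semi-bounded case. It thus suffices to show that any $\phi\in D(L^{*})$ with $L^{*}\phi=-\phi$ must vanish, and the strategy is to use the existence of wave-equation solutions to transfer this question into one about an elementary ODE.

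Fix such a $\phi$ and an arbitrary $f\in D$. The hypothesis applied to the initial data $(f,0)$ produces a twice continuously differentiable $D$-valued solution $u(t)$ of $u_{tt}+Lu=0$ with $u(0)=f$ and $u_{t}(0)=0$. Set $h(t)=\langle u(t),\phi\rangle$. Differentiating twice under the inner product, using the wave equation, and then the adjoint relation $L^{*}\phi=-\phi$, one finds
\[
h''(t)=\langle u_{tt}(t),\phi\rangle=-\langle Lu(t),\phi\rangle=-\langle u(t),L^{*}\phi\rangle=\langle u(t),\phi\rangle=h(t),
\]
together with $h(0)=\langle f,\phi\rangle$ and $h'(0)=0$. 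Hence $h$ must be a combination $h(t)=Ae^{t}+Be^{-t}$.

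The next step is to rule out the exponential modes via a linear growth bound on $\|u(t)\|$. Since $L$ is symmetric and positive, a direct computation of $\frac{d}{dt}\bigl(\tfrac12\|u_{t}\|^{2}+\tfrac12\langle u,Lu\rangle\bigr)$ using the wave equation gives conservation of the energy
\[
E(t)=\tfrac12\|u_{t}(t)\|^{2}+\tfrac12\langle u(t),Lu(t)\rangle,
\]
whence $\|u_{t}(t)\|\leq\sqrt{\langle f,Lf\rangle}$ for all $t\in\mathbb{R}$ (backward-in-time estimates follow by time-reversibility of the wave equation). Integrating yields $\|u(t)\|\leq\|f\|+|t|\sqrt{\langle f,Lf\rangle}$, so that $|h(t)|$ grows at most linearly. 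Since no nonzero combination $Ae^{t}+Be^{-t}$ has this property, $A=B=0$, so $h\equiv 0$ and in particular $\langle f,\phi\rangle=h(0)=0$. Density of $D$ forces $\phi=0$, completing the argument.

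The main obstacle is not the ODE analysis itself but the preliminary reduction and the justification of the energy identity: one must know that essential self-adjointness is equivalent to density of $(L+1)D$, a fact specific to semi-bounded symmetric operators, and one must be able to differentiate $\langle u(t),\phi\rangle$ twice and commute $L$ with the inner product against $\phi$. The hypothesis that the solution is $C^{2}$ with values in $D$ is exactly strong enough to make every step rigorous without any approximation argument; positivity and symmetry of $L$ then make the energy identity automatic, and the rest is the ODE argument above.
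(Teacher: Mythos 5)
Your argument is correct and is the standard one. Note that the paper does not actually prove this statement; it is quoted as a consequence of Stone's theorem, with a pointer to Reed--Simon and to a blog post of Tao, and what you have reconstructed is essentially the wave-equation form of the argument found in that reference: reduce essential self-adjointness of the nonnegative symmetric $L$ to showing $\ker(L^{*}+1)=\{0\}$, pair a putative deficiency vector $\phi$ against a wave-equation solution $u(t)$ with data $(f,0)$ to obtain $h''=h$, $h(0)=\langle f,\phi\rangle$, $h'(0)=0$, and contradict the exponential growth using conservation of the energy $\tfrac12\|u_t\|^2+\tfrac12\langle u,Lu\rangle$. Two small remarks. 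First, the reduction can be phrased slightly more directly than via deficiency indices: since $L+1\geq 1$ is symmetric, its closure is injective with closed range, so $L$ is essentially self-adjoint iff $\mathrm{ran}(L+1)$ is dense, iff $\ker(L^{*}+1)=\{0\}$. Second, once you impose $h'(0)=0$, the solution of $h''=h$ is forced to be $h(t)=\langle f,\phi\rangle\cosh t$, so the linear (indeed any subexponential) bound on $t\geq 0$ alone already yields $\langle f,\phi\rangle=0$; the appeal to time-reversibility and backward estimates is harmless but not actually needed. The justification that $\langle u(t),Lu(t)\rangle$ is differentiable in $t$ is worth a sentence (one uses symmetry of $L$ together with continuity of $Lu=-u_{tt}$), but your hypotheses are strong enough that this causes no difficulty.
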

In our case $H= \L^2(U)$, $L=\mathcal{L}$, however this operator is not essentially self-adjoint on $C^\infty_0(U)$. Thus, the choice of dense subspace $D$ determines a self-adjoint extension of $\mathcal{L}$. If we take
\be
D = \{u = \rho^{\alpha} v : v \in C^\infty(\Ub), \rho^\alpha \mathcal{L}^{k}u =0 \textrm{ on } \dU \textrm{ for } k=1, 2, \ldots \},
\ee
this gives the self-adjoint extension corresponding to Dirichlet boundary conditions, whereas if we take
\be
D = \{u = \rho^{-\alpha} v : v \in C^\infty(\Ub), \rho^{1-\alpha} n_{ij}\da_j \mathcal{L}^{k}u =0 \textrm{ on } \dU \textrm{ for } k=1, 2, \ldots \},
\ee
we have the self-adjoint extension corresponding to Neumann boundary conditions. Finally, taking
\be
D = \{u = \rho^{-\alpha} v_- + \rho^\alpha v_+ : v_\pm \in C^\infty(\Ub), (\rho^{1-\alpha} n_{ij}\da_j +\rho^\alpha \beta )\mathcal{L}^{k}u =0 \textrm{ on } \dU \textrm{ for } k=0, 1, \ldots \},
\ee
gives the self-adjoint extension corresponding to the Robin boundary conditions. It is straightforward to check that $\mathcal{L}$ is positive and symmetric in all three cases, provided for the Robin case we take $\beta \geq 0 $.

As a consequence, we can apply the functional analytic machinery of essentially self-adjoint operators to $\mathcal{L}$ in these situations. In fact, we can do better than this, based on the close analogy with the finite case. It can be shown that $\mathcal{L}$ with homogeneous Dirichlet, Neumann or Robin boundary conditions has a countable set of eigenvalues with corresponding eigenfunctions, smooth in the interior of $U$, which form an orthonormal basis for $\L^2(U)$. This result comes from first establishing that $\H^1(U)$ is compactly embedded in $\L^2(U)$, and using this fact to apply the Fredholm alternative to a suitably chosen compact operator.


\begin{thebibliography}{1}

\bibitem{Maldacena}
  J.~M.~Maldacena,
  ``The Large N limit of superconformal field theories and supergravity,''
  Adv.\ Theor.\ Math.\ Phys.\  {\bf 2} (1998) 231
   [Int.\ J.\ Theor.\ Phys.\  {\bf 38} (1999) 1113]
  [hep-th/9711200].
  


  
\bibitem{Holzegel:2011rk}
  G.~Holzegel and J.~Smulevici,
  ``Stability of Schwarzschild-AdS for the spherically symmetric Einstein-Klein-Gordon system,''
  arXiv:1103.3672 [gr-qc].
  
\bibitem{Holzegel:2011uu}
  G.~Holzegel and J.~Smulevici,
  ``Decay properties of Klein-Gordon fields on Kerr-AdS spacetimes,''
  arXiv:1110.6794 [gr-qc].
  
\bibitem{Holzegel:2011qk}
  G.~Holzegel and J.~Smulevici,
  ``Self-gravitating Klein-Gordon fields in asymptotically Anti-de-Sitter spacetimes,''
  arXiv:1103.0712 [gr-qc].
    
\bibitem{BF}
  P.~Breitenlohner and D.~Z.~Freedman,
  ``Stability in Gauged Extended Supergravity,''
  Annals Phys.\  {\bf 144} (1982) 249.


    
\bibitem{Ishibashi:2003jd}
  A.~Ishibashi and R.~M.~Wald,
  ``Dynamics in nonglobally hyperbolic static space-times. 2. General analysis of prescriptions for dynamics,''
  Class.\ Quant.\ Grav.\  {\bf 20} (2003) 3815
  [gr-qc/0305012].
\bibitem{Ishibashi:2004wx}
  A.~Ishibashi and R.~M.~Wald,
  ``Dynamics in nonglobally hyperbolic static space-times. 3. Anti-de Sitter space-time,''
  Class.\ Quant.\ Grav.\  {\bf 21} (2004) 2981
  [hep-th/0402184].
    

  \bibitem{Bachelot}
A.~Bachelot, 
``The Dirac System on the Anti-de Sitter Universe,Ó
Commun. Math. Phys. {\bf 283} (2008) 127Ð167.
%
  
  
\bibitem{Holzegel:2011qj}
  G.~Holzegel,
  ``Well-posedness for the massive wave equation on asymptotically anti-de Sitter spacetimes,''
  arXiv:1103.0710 [gr-qc].
  
\bibitem{Vasy}  
 A.~Vasy, 
 ``The wave equation on asymptotically Anti-de Sitter spaces,'' to
appear in Analysis and PDE (2009) arXiv:0911.5440.


  
\bibitem{Evans}
 L.~C.~Evans,
 ``Partial Differential Equations,'' 
  Graduate Studies in Mathematics 19, AMS, Providence RI, 2008.

\bibitem{Lady}
 O.~A.~Ladyzhenskaya
 ``The Boundary Value Problems of Mathematical Physics,"
 Applied Mathematical Sciences 49, Springer-Verlag, New York, 1985.

 \bibitem{Kufner}
 A.~Kufner, 
 ``Weighted Sobolev Spaces,'' 
 John Wiley \& Sons Inc., New York, 1985.
  
\bibitem{Kim}
 D.~Kim,
 ``Trace theorems for Sobolev-Slobodeckij spaces with or without weights,''
 J. Fun. Spac. and Appl., 5(3) 243-268, (2007).
 
\bibitem{Gover:2011rz}
  A.~R.~Gover and A.~Waldron,
  ``Boundary calculus for conformally compact manifolds,''
  arXiv:1104.2991 [math.DG].
  
  \bibitem{Reed}
  M.~Reed, B.~Simon, 
  ``Methods of Modern Mathematical Physics: I Functional Analysis,''
  Academic Press, London, 1972 \\
  See also T. Tao,  \href{http://terrytao.wordpress.com/2011/12/20/the-spectral-theorem-and-its-converses-for-unbounded-symmetric-operators/}{``The spectral theorem and its converses for unbounded symmetric operators.'' }, Wordpress.
  
\end{thebibliography}
\end{document}